\documentclass[a4paper, 11pt]{amsart}
\usepackage{amsmath, amssymb, fontenc, amsthm, array, hyperref, amsfonts}
\usepackage[english]{babel}
\usepackage{cite}
\usepackage[T1]{fontenc}
\usepackage{tgtermes}
\usepackage{mathptmx}  
\usepackage[scaled=.92]{helvet}
\usepackage{mathrsfs}
\usepackage[numbers]{natbib}
\usepackage{hyperref}
\usepackage{mathtools}
\usepackage{stmaryrd}
\usepackage{enumitem}
\usepackage{microtype}

\hypersetup{
    colorlinks = true,
    citecolor = blue,
    urlcolor = blue,
    linkcolor = red
}

\renewcommand{\keywords}[1]{\footnotesize \textbf{\textit{Keywords:}} #1 \normalsize}
\renewcommand{\restriction}{\mathord{\! \upharpoonright}}

\newtheorem{theorem}{Theorem}[section]
\newtheorem{lemma}[theorem]{Lemma}

\newtheorem{corollary}[theorem]{Corollary} 
\newtheorem{proposition}[theorem]{Proposition} 

\theoremstyle{definition}
\newtheorem{definition}[theorem]{Definition}

\theoremstyle{remark}

\usepackage{tikz}
\usepackage{graphicx}
\usetikzlibrary{matrix,arrows.meta, shapes.misc,fit,decorations.pathreplacing}

\tikzset{arrow/.style = { thick, color=black, ->, >=Triangle}, }

\title{Importing soundness and completeness in modal logics}

\author{Pedro Teixeira Yago}\thanks{Orcid: 0000-0001-7993-4516, Classe di Lettere e Filosofia, Scuola Normale Superiore di Pisa, Italy. \texttt{pedro.tyago@outlook.com}}
\subjclass[2020]{03B45, 03F45}

\begin{document}

\maketitle

\begin{abstract}
We develop general strategies for transferring soundness and completeness from more expressive modal languages to less expressive ones, unifying several existing notions of operator definability along the way. For soundness, we exploit semantic insensitivity: if a less expressive language is insensitive to a frame operation, soundness extends to the operation's closure of the original frame class. For completeness, restricting to relational semantics and languages with a single operator, we present strategies for relating the target logic's canonical model to that of a normal modal logic via a truth-preserving translation. Some of those dispense entirely with specifying an accessibility condition for the target logic, inheriting it from a normal modal logic instead.
\end{abstract}

\keywords{\textbf{Keywords:} semantic insensitivity; definability; non-normal modal logic; soundness; completeness}

\section{Introduction}\label{introduction}

A recurring challenge in modal logic concerns logics whose primitive operators are not the standard modal operator $\Box$, with the standard semantics. The non-contingency operator $\Delta$ \cite{humberstone1995}, the essence operator $\circ$ \cite{marcos2005}, the being-wrong operator $W$ \cite{steinsvold2011}, and the unknowable-necessary-truth operator $U$ \cite{yagoventuri2026} are all natural objects of study, yet none of them is the standard necessity operator. The logics built around such operators are, in a precise sense to be made explicit shortly, less expressive than the minimal normal modal logic $\mathbf{K}$.

In the normal modal setting, soundness and completeness are routinely established by the correspondence between axioms and frame properties, and completeness is obtained through canonical models. For logics in less expressive languages, however, canonical model constructions require a non-obvious specification of the accessibility condition -- one whose definition, as illustrated across the literature, is case-by-case and sometimes technically involved.

The aim of this paper is to develop strategies for transferring soundness and completeness results to such logics from more expressive ones, exploiting the structural relationship between the languages. The natural and most important instance of this relationship is the one between a non-standard modal language and the normal modal language: the aforementioned operators are definable in terms of $\Box$ in appropriate frame classes, which places their languages below the normal modal language in terms of expressivity.

First, we develop soundness in Section \ref{sec soundness}. One of the key concepts there is semantic insensitivity: a modal language is insensitive to a frame property -- or to a frame operation -- when it cannot distinguish between frames that differ with respect to (henceforth, \emph{w.r.t.}) $P$, or is invariant to $\rho$. If a less expressive language $\mathcal{L}_2$ is insensitive to an operation $\rho$, and if $\mathbf{L}_h$ denotes the image of a logic $\mathbf{L}$ in a more expressive language $\mathcal{L}_1$ under a truth-preserving translation $h : \mathcal{L}_1 \to \mathcal{L}_2$, then $\mathbf{L}_h$ is sound not only w.r.t. the class of frames $\mathbf{L}$, but also w.r.t. the closure of the class under $\rho$. The work is done in the more expressive language. We stress that the framework of Section \ref{sec soundness} is general, in that it applies to any pair of modal languages one of which is less expressive than the other, thus making no commitment to one of them being the normal modal language. The results subsume and generalise existing methods from \cite{goldblattmares2006} and \cite{gilbertventuri2016}, where specific instances of this strategy were developed for the logics of essence and contingency.

Section \ref{sec completeness} concerns completeness. Unlike the strategies for soundness, this section is specific to relational semantics and logics with a single primitive unary operator. We present three strategies for completeness -- and one more which is a brief modification of another. The first generalizes the work in \cite{goldblattmares2006} and \cite{gilbertventuri2016}, which uses bounded morphisms between the canonical model of a normal modal logic and the canonical model of its translation to a different modal language, under the assumption of some constraints on the translation. The two -- or three -- other strategies sidestep the construction of a canonical model for logics of the non-normal modal language altogether. Standard completeness proofs for non-normal logics must specify, for each world of the canonical model, an accessibility condition that allows one to prove the usual truth lemma. We circumvent this specification by building directly on the canonical model of $\mathbf{K}$. Given a truth-preserving and theorem reflecting translation from the non-normal modal language to the normal one, the maximal consistent sets of the image logic form a natural submodel of the canonical frame of $\mathbf{K}$. Completeness of the less expressive logic then follows either from the canonicity of the translation of the normal modal logic by standard frame morphism results, or from the completeness of the non-normal modal logic.

We note an asymmetry between the two strategies that we regard as an important open problem. The soundness transfer is grounded in insensitivity: it is precisely the non-normal language's inability to distinguish between frame classes that makes the transfer possible, and this inability is what the results exploit. The completeness transfer, by contrast, proceeds by canonical model construction and does not, at present, admit an analogous insensitivity-based explanation. The connection between insensitivity and completeness remains open.

The paper is structured as follows. Section \ref{section insens} reviews and unifies four existing notions of definability for modal operators, proving them equivalent. This unification allows one to speak interchangeably about the expressivity of a language and the definability of its primitive operator in another language, and introduces the translation apparatus on which both strategies are based. Section \ref{sec soundness} introduces semantic insensitivity and develops the general soundness import strategy, with results formulated for both semantic and proof-theoretic notions of logic. Section \ref{sec completeness} develops the completeness strategy in the setting of relational semantics and single-operator logics. Section \ref{sec application} offers an application of the strategies developed to the concrete case of a new modal operator. More interestingly, we show the completeness of the minimal logic of the language of the operator $\circ$ solely by means of defining an adequate translation from the language $\mathcal{L}^\circ$ to the normal modal language $\mathcal{L}^\Box$ -- therefore circumventing completely the need for defining a canonical model for the logics of the non-normal language. We end, in Section \ref{sec conclusion}, with some concluding remarks.

\section{Expressivity, definability and insensitivity}\label{section insens}

The goal of this paper is to motivate the study of insensitivity for different modal languages by providing a proof strategy that allows to import soundness and completeness results from a logic in a language $\mathcal{L}_1$ to its counterpart in a target language $\mathcal{L}_2$. The context where we will work is that of modal languages with a single, primitive, \emph{unary}  operator. Moreover, we will work in the standard relational semantics. Central to our investigation is the notion of translation between modal languages, so we first review and unify different approaches that we find in the literature. 

\begin{definition}[Expressivity \cite{ditmarsch2007}]\label{definition expressivity}
We say $\mathcal{L}_2$ is \emph{as expressive as $\mathcal{L}_1$ in $\mathbb{C}$} \linebreak ($\mathcal{L}_1 \preceq \mathcal{L}_2$), if for all $\varphi \in Form_{\mathcal{L}_1}$ there is $\psi \in Form_{\mathcal{L}_2}$ such that for all $\mathcal{M} = \langle W, R, V \rangle$ such that $\langle W, R \rangle \in \mathbb{C}$ and $w \in W$,

\begin{center}
$\mathcal{M}, w \models \varphi$ iff $\mathcal{M}, w \models \psi$.
\end{center}

\noindent
If $\mathcal{L}_1 \preceq \mathcal{L}_2$ but $\mathcal{L}_2 \not\preceq \mathcal{L}_1$ in $\mathbb{C}$, then we say $\mathcal{L}_1$ is \emph{less expressive than $\mathcal{L}_2$ in $\mathbb{C}$} ($\mathcal{L}_1 \prec \mathcal{L}_2$). 
\end{definition}

If $\mathcal{L}_1 \preceq \mathcal{L}_2$ in the class of all frames, we may say $\mathcal{L}_2$ is less expressive than $\mathcal{L}_1$, simpliciter. Notice it may always happen that they are equally expressive in a smaller class of frames.\footnote{For example, $\mathcal{L}^\Box$ and the language of the logics of essence and accident, $\mathcal{L}^\circ$ \cite{marcos2005}, are equally expressive in $\mathbb{C}_\mathbf{T}$.}

We also define the notion of definability of modal operators. Many different definitions have been given, which we here try to generalize. In \cite{marcos2005}, we have the following.

\begin{definition}[Definability by schema]\label{def definability 1}
We say $\Box_1$ is \emph{definable by schema in $\mathcal{L}_2$ w.r.t. $\mathbb{C}$} if there is a truth-preserving translation $h: \mathcal{L}_2 \rightarrow \mathcal{L}_1$ on $\mathbb{C}$ and \linebreak $\lambda: \mathcal{L}_2 \rightarrow \mathcal{L}_2$ such that, for any $\varphi \in Form_{\mathcal{L}_1} \cap Form_{\mathcal{L}_2}$,

\begin{center}
$\mathbb{C} \models \Box_1 \varphi \leftrightarrow h(\lambda(\varphi))$.
\end{center}
\end{definition}

\noindent
This definition reflects the fact the modal operator of a language may be defined by a scheme in another language.

In \cite{gilbertventuri2016}, a slightly different definition is given, which we generalise here.

\begin{definition}[Definability by translation]\label{def definability 2}
We say $\Box_1$ is \emph{definable by translation in $\mathcal{L}_2$ w.r.t. $\mathbb{C}$} if there is a truth-preserving translation $h: \mathcal{L}_2 \rightarrow \mathcal{L}_1$ on $\mathbb{C}$ and $\lambda: \mathcal{L}_2 \rightarrow \mathcal{L}_2$ such that, for any $\varphi \in Form_{\mathcal{L}_2}$,

\begin{center}
$\mathbb{C} \models \Box_1 h(\varphi) \leftrightarrow h(\lambda(\varphi))$.
\end{center}
\end{definition}

An alternative, and more commonly used, definition of definability is this:

\begin{definition}[Definability on the kernel]\label{def definability 3}
We say $\Box_1$ is \emph{definable on the kernel in $\mathcal{L}_2$ w.r.t. $\mathbb{C}$} if, there is $\lambda: \mathcal{L}_2 \rightarrow \mathcal{L}_2$ such that, for any $\varphi \in Form_{\mathcal{L}_1} \cap Form_{\mathcal{L}_2}$, $\mathcal{M} = \langle W, R, V \rangle$ with $\langle W, R \rangle \in \mathbb{C}$ and $w \in W$,

\begin{center}
$\mathcal{M}, w \models \Box_1 \varphi \leftrightarrow \mathcal{M}, w \models \lambda(\varphi)$.
\end{center}
\end{definition}

\noindent
This sort of definition is supposed to reflect what we mean when we say informally, for example, $\circ \varphi$ is definable as $\varphi \rightarrow \Box \varphi$. A slightly stronger alternative definition of definability is the following.

\begin{definition}[Definability by extension]\label{def definability 4}
Let $\mathcal{L}_2^{\Box_1}$ be $\mathcal{L}_2$ extended with $\Box_1$. We say $\Box_1$ is \emph{definable by extension in $\mathcal{L}_2$ w.r.t. $\mathbb{C}$} if there is $\lambda: \mathcal{L}_2^{\Box_1} \rightarrow \mathcal{L}_2$ such that, for any $\varphi \in Form_{\mathcal{L}_2^{\Box_1}}$,

\begin{center}
$\mathbb{C} \models \Box_1 \varphi \leftrightarrow \lambda(\varphi)$.
\end{center}
\end{definition}

\noindent
This definition reflects the possibility of defining the modal operator of a language as a non-primitive modal operator of another.

As we should expect, those notions are related in some way.

\begin{theorem}\label{theorem definabilities 1}
$\Box_1$ is definable on the kernel in $\mathcal{L}_2$ w.r.t. $\mathbb{C}$ iff it is definable by extension in $\mathcal{L}_2$ w.r.t. $\mathbb{C}$.
\end{theorem}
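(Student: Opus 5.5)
The plan is to prove the two directions separately. The direction from extension to kernel is essentially a restriction. The direction from kernel to extension requires building a total map on $\mathcal{L}_2^{\Box_1}$ by recursion on formulas. Throughout, I read $\lambda$ in Definition \ref{def definability 3} as a schematic map of the form $\lambda(\varphi) = \sigma[\varphi/p]$ for a fixed $\mathcal{L}_2$-formula $\sigma(p)$, which is how the informal example $\circ\varphi := \varphi \to \Box\varphi$ is meant. Under that reading, the map extends to arbitrary arguments by substitution.

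\emph{Extension $\Rightarrow$ kernel.} Suppose $\lambda : \mathcal{L}_2^{\Box_1} \to \mathcal{L}_2$ witnesses definability by extension. Every $\varphi \in Form_{\mathcal{L}_1} \cap Form_{\mathcal{L}_2}$ is $\Box_1$-free (or at least lies in $\mathcal{L}_2 \subseteq \mathcal{L}_2^{\Box_1}$), so $\lambda(\varphi)$ is defined. Validity of $\Box_1\varphi \leftrightarrow \lambda(\varphi)$ on $\mathbb{C}$ says exactly that, for every model on a frame in $\mathbb{C}$ and every world $w$, $\mathcal{M},w\models \Box_1\varphi$ iff $\mathcal{M},w\models\lambda(\varphi)$. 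Restricting $\lambda$ to $\mathcal{L}_2$ therefore gives definability on the kernel.

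\emph{Kernel $\Rightarrow$ extension.} Suppose $\lambda$ witnesses definability on the kernel. Define $\tau : \mathcal{L}_2^{\Box_1} \to \mathcal{L}_2$ by recursion: $\tau$ is the identity on atoms, commutes with the Boolean connectives and with the primitive operator of $\mathcal{L}_2$, and $\tau(\Box_1\psi) := \lambda(\tau(\psi))$. By induction on $\psi$ one shows that $\psi \leftrightarrow \tau(\psi)$ is valid on $\mathbb{C}$. The atomic, Boolean and $\mathcal{L}_2$-operator cases use only the fact that truth sets are determined compositionally, so replacing a subformula by one with the same truth set in every model preserves truth. The $\Box_1$ case has two steps:
\begin{itemize}
\item By the induction hypothesis, $\psi$ and $\tau(\psi)$ have the same extension in every model on $\mathbb{C}$, so $\Box_1\psi$ and $\Box_1\tau(\psi)$ agree, since $\Box_1$'s semantics depends only on the truth set of its argument.
\item The kernel hypothesis, applied to the $\mathcal{L}_2$-formula $\tau(\psi)$, gives $\Box_1\tau(\psi) \leftrightarrow \lambda(\tau(\psi))$ on $\mathbb{C}$.
\end{itemize}
Finally, set $\lambda'(\varphi) := \lambda(\tau(\varphi))$. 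Then $\Box_1\varphi \leftrightarrow \Box_1\tau(\varphi) \leftrightarrow \lambda(\tau(\varphi))$, which is exactly definability by extension.

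\emph{Main obstacle.} The delicate point is the kernel hypothesis itself, which is stated only for $\varphi \in Form_{\mathcal{L}_1}\cap Form_{\mathcal{L}_2}$, i.e.\ essentially for $\Box_1$-free and $\mathcal{L}_2$-operator-free (propositional) formulas. The argument above needs it for arbitrary $\mathcal{L}_2$-formulas $\tau(\psi)$. I would bridge this with a uniform substitution argument. Given a model $\mathcal{M}$ and an $\mathcal{L}_2$-formula $\chi$, pick a fresh atom $p$ and let $\mathcal{M}'$ be $\mathcal{M}$ with $V'(p) = \llbracket\chi\rrbracket^{\mathcal{M}}$. This is legitimate because $\mathbb{C}$ is a frame class, so $\mathcal{M}'$ ranges over the same frames. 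Applying the kernel hypothesis to $p$ in $\mathcal{M}'$ gives $\Box_1 p \leftrightarrow \lambda(p)$ at every world. Transporting back to $\mathcal{M}$ yields $\Box_1\chi \leftrightarrow \lambda(p)[\chi/p]$, and $\lambda(p)[\chi/p] = \lambda(\chi)$ by the schematic reading of $\lambda$. If $\lambda$ were not schematic, I would simply define the extension map via $\sigma := \lambda(p)$ and substitution, so the needed uniformity is built in.
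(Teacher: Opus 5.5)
Your proposal is correct and follows the paper's proof. The paper defines the same recursive map (called $v$) with $v(\Box_1\varphi)=\lambda(v(\varphi))$, shows by induction that it is truth-preserving on $\mathbb{C}$, takes $\lambda\circ v$ as the witness for definability by extension, and gets the converse by restricting to $\mathcal{L}_2$.

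Your version is more careful at the one delicate step, which the paper covers only with ``induction hypothesis and uniform substitution''. First, you pass from $\Box_1\psi$ to $\Box_1\tau(\psi)$ by congruence before invoking the kernel hypothesis, rather than applying $\lambda$ to a formula that may still contain $\Box_1$. Second, your substitution argument via $\chi\mapsto\lambda(p)[\chi/p]$ justifies using that hypothesis on arbitrary $\mathcal{L}_2$-formulas, although it is stated only for $Form_{\mathcal{L}_1}\cap Form_{\mathcal{L}_2}$.
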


\begin{proof}
Let $\Box_1$ be definable on the kernel in the manner described, and $\lambda : \mathcal{L}_2 \to \mathcal{L}_2$ be the relevant mapping. Define $v: \mathcal{L}_2 ^{\Box_1} \to \mathcal{L}_2$ as:

\begin{tabular}{lll}
$v(p)$ & $=$ & $p$\\
$v(\neg \varphi)$ & $=$ & $\neg v(\varphi)$\\
$v(\varphi \wedge \psi)$ & $=$ & $v(\varphi) \wedge v(\psi)$\\
$v(\Box_1 \varphi)$ & $=$ & $\lambda(v(\varphi))$\\
$v(\Box_2 \varphi)$ & $=$ & $\Box_2 v(\varphi)$
\end{tabular}

\noindent
We show $v$ is truth-preserving on $\mathbb{C}$ by an induction on the complexity of \linebreak $\varphi \in Form_{\mathcal{L}_2^{\Box_1}}$. The atomic case is straightforward. By induction hypothesis, the non-modal cases are easily obtained. If $\mathbb{C} \models \Box_2 \varphi$, then by induction hypothesis and uniform substitution, that is the case iff $\mathbb{C} \models \Box_2 v(\varphi)$. If $\mathbb{C} \models \Box_1 \varphi$, by Definition \ref{def definability 3}, that is the case iff $\mathbb{C} \models \lambda (\varphi)$, so by induction hypothesis and uniform substitution, that is the case iff $\mathbb{C} \models \lambda (v(\varphi))$. Therefore, $\lambda \circ v$ is the desired mapping, and $\Box_1$ is definable by extension in $\mathcal{L}_2$ w.r.t. $\mathbb{C}$. Since $Form_{\mathcal{L}_1} \cap Form_{\mathcal{L}_2} \subseteq Form_{\mathcal{L}_2}$ and $\mathcal{L}_2 \subseteq \mathcal{L}_2^{\Box_1}$, the other direction is straightforward.
\end{proof}

\begin{theorem}\label{theorem definabilities 2}
$\Box_1$ is definable by schema in $\mathcal{L}_2$ w.r.t. $\mathbb{C}$ iff it is definable on the kernel in $\mathcal{L}_2$ w.r.t. $\mathbb{C}$.
\end{theorem}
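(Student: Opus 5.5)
The proof goes direction by direction, using the meaning of ``truth-preserving translation on $\mathbb{C}$'' that the paper relies on: $h: \mathcal{L}_2 \to \mathcal{L}_1$ satisfies $\mathcal{M}, w \models \varphi$ iff $\mathcal{M}, w \models h(\varphi)$ for every model $\mathcal{M}$ on a frame in $\mathbb{C}$, every $w$, and every $\varphi \in Form_{\mathcal{L}_2}$.

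\emph{From schema to kernel.} Let $h$ and $\lambda$ witness definability by schema, and take $\varphi \in Form_{\mathcal{L}_1} \cap Form_{\mathcal{L}_2}$. For any model $\mathcal{M}$ on a frame in $\mathbb{C}$ and any world $w$, the validity $\mathbb{C} \models \Box_1 \varphi \leftrightarrow h(\lambda(\varphi))$ gives $\mathcal{M}, w \models \Box_1\varphi$ iff $\mathcal{M}, w \models h(\lambda(\varphi))$. Since $\lambda(\varphi) \in Form_{\mathcal{L}_2}$, truth-preservation of $h$ gives that this holds iff $\mathcal{M}, w \models \lambda(\varphi)$. The same $\lambda$ therefore witnesses definability on the kernel.

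\emph{From kernel to schema.} Let $\lambda$ witness definability on the kernel. We need some truth-preserving $h : \mathcal{L}_2 \to \mathcal{L}_1$ on $\mathbb{C}$; once we have one, the schema equivalence follows by reversing the previous step. For $\varphi$ in the kernel, $\mathcal{M},w \models \Box_1\varphi$ iff $\mathcal{M},w\models\lambda(\varphi)$ iff $\mathcal{M},w\models h(\lambda(\varphi))$. Since this holds pointwise for all models on $\mathbb{C}$-frames, we get $\mathbb{C} \models \Box_1\varphi \leftrightarrow h(\lambda(\varphi))$.

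\emph{The main obstacle} is producing $h$, because definability on the kernel says nothing directly about a translation in that direction. My first attempt is to build $h$ by recursion, with $h(p)=p$, $h$ commuting with $\neg$ and $\wedge$, and $h(\Box_2\psi)=\mu(h(\psi))$. Here $\mu$ is an $\mathcal{L}_1$-schema defining $\Box_2$ on $\mathbb{C}$. It is not clear that such a $\mu$ exists under the hypothesis alone. One could try to extract it from $\lambda$, perhaps via Theorem~\ref{theorem definabilities 1}, which gives $\lambda$ on all of $\mathcal{L}_2^{\Box_1}$.

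If that fails, I would read the definitions as implicitly presupposing that such a translation $h$ exists, as in the intended setting where $\mathcal{L}_2$ is the less expressive language, so that $\mathcal{L}_2 \preceq \mathcal{L}_1$ in $\mathbb{C}$. Under that reading I would use Definition~\ref{definition expressivity} to choose, for each $\mathcal{L}_2$-formula, an equivalent $\mathcal{L}_1$-formula. To make this a translation rather than an arbitrary choice function, I would make the choice compositionally: pick the equivalents at the atomic and modal steps and then extend recursively. Truth-preservation then follows by a routine induction on the complexity of $\varphi$, matching the one in the proof of Theorem~\ref{theorem definabilities 1}.
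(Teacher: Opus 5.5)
Your schema-to-kernel direction is correct; it is what the paper calls ``straightforward from Definitions~\ref{def definability 1} and~\ref{def definability 3}''. The kernel-to-schema direction, however, has a genuine gap. Nothing in the hypothesis produces a truth-preserving $h:\mathcal{L}_2\to\mathcal{L}_1$ on $\mathbb{C}$, and your fallback simply adds $\mathcal{L}_2\preceq\mathcal{L}_1$ in $\mathbb{C}$ as an extra assumption. The route through Theorem~\ref{theorem definabilities 1} is a dead end. The map it supplies, $\mathcal{L}_2^{\Box_1}\to\mathcal{L}_2$, still eliminates $\Box_1$ in favour of $\mathcal{L}_2$, and says nothing about expressing $\Box_2$ in $\mathcal{L}_1$.

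That said, no argument can close this gap under your pointwise reading of truth-preservation, because that direction is then false. This is the reading the paper itself uses in proving Theorem~\ref{theorem definabilities 3}. For a counterexample, take $\mathcal{L}_1=\mathcal{L}^\circ$, $\mathcal{L}_2=\mathcal{L}^\Box$, and $\mathbb{C}$ the class of all frames.
\begin{itemize}
\item The map $\lambda(\varphi)=\neg\varphi\vee\Box\varphi$ witnesses definability on the kernel.
\item An easy induction shows that every $\mathcal{L}^\circ$-formula keeps its truth value at every world when reflexive loops are added.
\item By contrast, $\Box p$ changes value at a single irreflexive point where $p$ is false.
\end{itemize}
So no truth-preserving $h:\mathcal{L}^\Box\to\mathcal{L}^\circ$ on all frames exists.

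The paper tries to get around this by building $t:\mathcal{L}_1\to\mathcal{L}_2$ with $t(\Box_1\varphi)=\lambda(t(\varphi))$, as in Theorem~\ref{theorem existence translation}, and using $t^{-1}$ as $h$. This does not work, for two reasons. First, $t^{-1}$ is defined only on the image of $t$ (in the example, not on $\Box p$) and need not be single-valued, so it is not the total translation that Definition~\ref{def definability 1} demands. Second, the paper only derives the validity-level claim ``$\mathbb{C}\models\Box_1\varphi$ iff $\mathbb{C}\models t^{-1}(\lambda(\varphi))$'', not the validity of the biconditional. One could pad $t^{-1}$ out to a total map that merely preserves validity. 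But then your schema-to-kernel step is lost, since it needs $h(\lambda(\varphi))$ and $\lambda(\varphi)$ to agree pointwise.

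So your extra hypothesis is exactly what the statement needs: a truth-preserving $h:\mathcal{L}_2\to\mathcal{L}_1$ on $\mathbb{C}$, i.e.\ $\mathcal{L}_2\preceq\mathcal{L}_1$ there. Given that, your argument is complete. A plain choice function from Definition~\ref{definition expressivity} already serves as $h$, with no compositional bookkeeping needed.
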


\begin{proof}
Let $\Box_1$ be definable on the kernel as described. Define $t: \mathcal{L}_1 \to \mathcal{L}_2$ as:

\begin{tabular}{lll}
$t(p)$ & $=$ & $p$\\
$t(\neg \varphi)$ & $=$ & $\neg t (\varphi)$\\
$t(\varphi \wedge \psi)$ & $=$ & $t(\varphi) \wedge t(\psi)$\\
$t(\Box_1 \varphi)$ & $=$ & $\lambda (t(\varphi))$
\end{tabular}

\noindent
By a quick induction on the complexity of formulas, and Definition \ref{def definability 3}, we see $t$ is truth-preserving. Let $t^{-1}$ be its inverse. Clearly, $t$ may not be surjective, so that $t^{-1}$ is not defined for all formulas of $\mathcal{L}_2$. It is, however, defined for any $\varphi \in Form_{\mathcal{L}_1} \cap Form_{\mathcal{L}_2}$. Since $t$ is truth-preserving, so is $t^{-1}$, and hence we have \linebreak $\mathbb{C} \models \Box_1 \varphi$ iff $\mathbb{C} \models t^{-1}(\lambda (\varphi))$, for any $\varphi$ in that intersection. The other direction is straightforward from Definitions \ref{def definability 1} and \ref{def definability 3}.
\end{proof}

\begin{theorem}\label{theorem definabilities 3}
$\Box_1$ is definable by translation in $\mathcal{L}_2$ w.r.t. $\mathbb{C}$ iff it is definable by schema in $\mathcal{L}_2$ w.r.t. $\mathbb{C}$.
\end{theorem}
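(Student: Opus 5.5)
The plan is to prove both directions directly from Definitions \ref{def definability 1} and \ref{def definability 2}, routing through definability on the kernel where this helps, via Theorem \ref{theorem definabilities 2}. Both definitions use the same data: a truth-preserving translation $h: \mathcal{L}_2 \to \mathcal{L}_1$ on $\mathbb{C}$ and a map $\lambda: \mathcal{L}_2 \to \mathcal{L}_2$. The only difference is whether $\Box_1$ is applied to a formula $\varphi$ of the common fragment, or to the image $h(\varphi)$ of an arbitrary $\mathcal{L}_2$-formula. The whole argument therefore turns on comparing $\varphi$ with $h(\varphi)$ under $\Box_1$.

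For the direction from translation to schema, I take $h$ and $\lambda$ witnessing Definition \ref{def definability 2} and a formula $\varphi \in Form_{\mathcal{L}_1} \cap Form_{\mathcal{L}_2}$. Since $h$ is truth-preserving on $\mathbb{C}$ and $\varphi$ is itself an $\mathcal{L}_1$-formula, $\mathcal{M}, w \models \varphi$ iff $\mathcal{M}, w \models h(\varphi)$ at every world of every model on a frame in $\mathbb{C}$. Hence $\varphi$ and $h(\varphi)$ have the same truth set, so $\Box_1 \varphi \leftrightarrow \Box_1 h(\varphi)$ is valid on $\mathbb{C}$; this is the replacement of equivalents under a unary operator whose relational semantics depends only on the truth set of its argument. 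Chaining this with $\mathbb{C} \models \Box_1 h(\varphi) \leftrightarrow h(\lambda(\varphi))$ gives Definition \ref{def definability 1} with the same $h$ and $\lambda$.

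For the converse, I first pass via Theorem \ref{theorem definabilities 2} to definability on the kernel, obtaining a $\lambda$ with $\Box_1 \psi$ equivalent to $\lambda(\psi)$ on $\mathbb{C}$ for common-fragment $\psi$. I then use the translation $t:\mathcal{L}_1 \to \mathcal{L}_2$ from the proof of Theorem \ref{theorem definabilities 2} and set $h := t^{-1}$ on its domain, as the paper does there. For $\varphi \in Form_{\mathcal{L}_2}$, $h(\varphi)$ is truth-equivalent to $\varphi$. Applying the replacement argument again and unfolding $t$ on $\Box_1$, $\Box_1 h(\varphi)$ is equivalent to $t(\Box_1 h(\varphi)) = \lambda(t(h(\varphi)))$. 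This in turn should be equivalent to $h(\lambda(\varphi))$ once one checks that $\lambda$ respects truth-equivalence of arguments.

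The main obstacle is this last point. One must ensure that $\lambda$ can be chosen uniformly, as a schema, so that substituting truth-equivalent arguments yields truth-equivalent outputs. One must also ensure that $h$ is defined on all the formulas needed, namely on $\lambda(\varphi)$ for every $\varphi \in Form_{\mathcal{L}_2}$, and not only on the common fragment. I would handle the first by taking $\lambda(\varphi)$ to be a fixed formula $\lambda(p)$ with $\varphi$ substituted for $p$, and invoking closure of $\mathbb{C}$-validity under uniform substitution, exactly as in the proof of Theorem \ref{theorem definabilities 1}. The second would be handled by the induction on complexity that made $t$ truth-preserving.
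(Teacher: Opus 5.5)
Your direction from definability by translation to definability by schema is exactly the paper's argument and is fine. The converse has a genuine gap. Definition~\ref{def definability 2} quantifies over every $\varphi \in Form_{\mathcal{L}_2}$, so its witness must be a truth-preserving translation defined on all of $\mathcal{L}_2$, in particular on every $\lambda(\varphi)$. Your candidate $h := t^{-1}$ is not such a translation. The range of $t$ is just the closure of the atoms under $\neg$, $\wedge$ and $\lambda$, so a formula like $\Box_2 p$ typically lies outside it. Moreover, unless $t$ is injective, $t^{-1}$ is not even single-valued on that range. Your proposed remedy, ``the induction on complexity that made $t$ truth-preserving'', does not address this, because truth preservation says nothing about which $\mathcal{L}_2$-formulas are hit by $t$.

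In fact no construction from kernel-level data can succeed in general. Definability on the kernel only yields $\mathcal{L}_1 \preceq \mathcal{L}_2$, whereas a truth-preserving $h : \mathcal{L}_2 \to \mathcal{L}_1$ amounts to $\mathcal{L}_2 \preceq \mathcal{L}_1$. For example, $\circ$ is definable on the kernel in $\mathcal{L}^\Box$ over $\mathbb{C}_{\mathbf{K}}$ via $\lambda(\varphi) = \neg\varphi \vee \Box\varphi$. Yet adding reflexive loops preserves the truth of every $\mathcal{L}^\circ$-formula at every world, but not that of $\Box p$. So no truth-preserving translation $\mathcal{L}^\Box \to \mathcal{L}^\circ$ exists on $\mathbb{C}_{\mathbf{K}}$. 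By passing to Theorem~\ref{theorem definabilities 2}, you discarded precisely the indispensable datum: the total truth-preserving $h$ that the schema hypothesis already hands you.

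The repair uses only ingredients you already have. Keep that $h$ and set $\lambda'(\varphi) := \lambda(p)[\varphi/p]$ for a fixed atom $p$.
\begin{itemize}
\item Since $p$ is in the common fragment, $\mathbb{C} \models \Box_1 p \leftrightarrow h(\lambda(p))$, and truth preservation gives $\mathbb{C} \models h(\lambda(p)) \leftrightarrow \lambda(p)$; hence $\mathbb{C} \models \Box_1 p \leftrightarrow \lambda(p)$.
\item Substitute $h(\varphi)$ for $p$ in this biconditional; frame validity in the combined language is closed under uniform substitution. This gives $\mathbb{C} \models \Box_1 h(\varphi) \leftrightarrow \lambda(p)[h(\varphi)/p]$.
\item Since $h(\varphi)$ and $\varphi$ have the same truth set in every model on a frame of $\mathbb{C}$, replacement of equivalents turns the right-hand side into $\lambda'(\varphi)$.
\item Truth preservation once more gives $\mathbb{C} \models \lambda'(\varphi) \leftrightarrow h(\lambda'(\varphi))$.
\end{itemize}
So $h$ and $\lambda'$ witness Definition~\ref{def definability 2}. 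This matches the spirit of the paper's proof rather than inverting $t$: the paper likewise retains the hypothesis's $h$ (on atoms) and builds a new translation $h'$ with the clause $h'(\lambda(\varphi)) = \Box_1 h'(\varphi)$.
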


\begin{proof}
Suppose $\Box_1$ is definable by translation, and let $h$ and $\lambda$ be the relevant translation and function. Let $\varphi \in Form_{\mathcal{L}_1} \cap Form_{\mathcal{L}_2}$. Since $h$ is truth-preserving, $\mathbb{C} \models h(\varphi) \leftrightarrow \varphi$, so $\mathbb{C} \models \Box_1 h(\varphi) \leftrightarrow \Box_1 \varphi$. By assumption, $\mathbb{C} \models \Box_1 h(\varphi) \leftrightarrow h(\lambda(\varphi))$, and therefore $\mathbb{C} \models \Box_1 \varphi \leftrightarrow h(\lambda(\varphi))$. Let now $\Box_1$ be definable by schema, $\lambda$ be the relevant function, and $h$ the relevant translation. Let $h': \mathcal{L}_2 \to \mathcal{L}_1$ be defined as:

\begin{tabular}{lll}
$h'(p)$ & $=$ & $h(p)$\\
$h'(\neg \varphi)$ & $=$ & $\neg h'(\varphi)$\\
$h'(\varphi \wedge \psi)$ & $=$ & $h'(\varphi) \wedge h'(\psi)$\\
$h'(\lambda (\varphi))$ & $=$ & $\Box_1 h'(\varphi)$
\end{tabular}

\noindent
Since $h$ is truth-preserving, by induction on the complexity of formulas we may see $h'$ is as well. But then we get $\mathbb{C} \models h'(\lambda (\varphi)) \leftrightarrow \Box_1 h'(\varphi)$, so $\Box_1$ is definable by translation.
\end{proof}

Therefore, all of Definitions \ref{def definability 1}--\ref{def definability 4} are equivalent. We may from now on say $\Box_1$ is definable in $\mathcal{L}_2$ w.r.t. $\mathbb{C}$ without worrying about what sort of definability we have in mind.

\begin{theorem}[\cite{venturiyago2020}]\label{theorem expressivity}
$\mathcal{L}_1 \preceq \mathcal{L}_2$ in $\mathbb{C}$ iff $\Box_1$ is definable in $\mathcal{L}_2$ w.r.t. $\mathbb{C}$.
\end{theorem}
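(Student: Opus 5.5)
Because Theorems \ref{theorem definabilities 1}--\ref{theorem definabilities 3} make all four notions of definability equivalent, I will pick whichever is most convenient for each direction. For the right-to-left direction I will use definability on the kernel. For the left-to-right direction I will extract a defining schema from expressivity and then present it as definability by extension. The languages involved share propositional letters and Boolean connectives and differ only in their single unary modal operator, so the whole content lies in the clause for $\Box_1$.

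For $(\Leftarrow)$, let $\lambda$ witness definability on the kernel, and define $t:\mathcal{L}_1\to\mathcal{L}_2$ exactly as in the proof of Theorem \ref{theorem definabilities 2}. That is, $t$ commutes with atoms and Boolean connectives, and $t(\Box_1\varphi)=\lambda(t(\varphi))$. I will show by induction on $\varphi$ that $\mathcal{M},w\models\varphi$ iff $\mathcal{M},w\models t(\varphi)$ for every model on a frame in $\mathbb{C}$. The atomic and Boolean steps are immediate. For the modal step, the induction hypothesis gives that $\varphi$ and $t(\varphi)$ have the same extension in $\mathcal{M}$. Hence $\Box_1\varphi$ and $\Box_1 t(\varphi)$ agree pointwise, because the semantics of a unary operator depends only on the truth set of its argument. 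Definability on the kernel then yields that $\Box_1 t(\varphi)$ agrees with $\lambda(t(\varphi))$. So $\psi:=t(\varphi)$ witnesses $\mathcal{L}_1\preceq\mathcal{L}_2$.

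For $(\Rightarrow)$, assume $\mathcal{L}_1\preceq\mathcal{L}_2$ in $\mathbb{C}$. For each $\varphi\in Form_{\mathcal{L}_2}$ I need some $\lambda(\varphi)\in Form_{\mathcal{L}_2}$ equivalent to $\Box_1\varphi$ on $\mathbb{C}$. The key step is to first apply expressivity to the single formula $\Box_1 p$, with $p$ a fresh letter. This yields $\psi_p\in Form_{\mathcal{L}_2}$ such that $\Box_1 p\leftrightarrow\psi_p$ holds in all models on $\mathbb{C}$, that is, $\mathbb{C}\models\Box_1 p\leftrightarrow\psi_p$. I then set $\lambda(\varphi)=\psi_p[\varphi/p]$. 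Validity on a frame class is closed under uniform substitution, since a valuation can be reassigned so that $p$ takes the truth set of $\varphi$. Therefore $\mathbb{C}\models\Box_1\varphi\leftrightarrow\lambda(\varphi)$ for every $\varphi$ in $\mathcal{L}_2$, and indeed in $\mathcal{L}_2^{\Box_1}$. This is definability by extension, and Theorem \ref{theorem definabilities 1} converts it to the other forms.

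I expect the main obstacle to be the $(\Rightarrow)$ direction. The witness $\psi$ given by expressivity depends on $\varphi$, whereas definability requires a uniform map $\lambda$ commuting with substitution. A naive choice of $\lambda(\varphi)$ as ``some $\psi$ equivalent to $\Box_1\varphi$'' would be non-uniform. The substitution trick resolves this, but it relies on two points that I will state explicitly. First, $\psi_p$ may contain letters other than $p$; this is harmless, because the equivalence holds under all valuations and substitution touches only $p$. Second, a formula of $\mathcal{L}_2$ substituted for $p$ in $\psi_p$ stays in $\mathcal{L}_2$, and a formula of $\mathcal{L}_2^{\Box_1}$ can first be handled by the inductive translation $v$ from Theorem \ref{theorem definabilities 1}.
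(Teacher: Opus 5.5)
Your $(\Rightarrow)$ direction is correct, but the modal step of $(\Leftarrow)$ fails as written. Definition~\ref{def definability 3} constrains $\lambda$ only on $Form_{\mathcal{L}_1}\cap Form_{\mathcal{L}_2}$, i.e.\ on formulas containing neither $\Box_1$ nor $\Box_2$. On every other argument $\lambda$ is an arbitrary map. Yet you apply the kernel clause to $t(\varphi)$, which contains $\Box_2$ as soon as $\varphi$ contains $\Box_1$. This already happens for $\Box_1\Box_1 p$, since $t(\Box_1 p)=\lambda(p)$ will typically contain $\Box_2$. The problem is real, not a formality. Take $\mathcal{L}_1=\mathcal{L}^\circ$, $\mathcal{L}_2=\mathcal{L}^\Box$ and all frames, and let $\lambda$ send each Boolean $\varphi$ to $\varphi\rightarrow\Box\varphi$ and every other formula to $\bot$. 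This $\lambda$ witnesses definability on the kernel. But your $t$ then gives $t(\circ p)=p\rightarrow\Box p$ and $t(\circ\circ p)=\bot$, whereas $\circ\circ p$ is true at any world where $\circ p$ fails. Copying the construction from Theorem~\ref{theorem definabilities 2} does not help. The proofs there, and in Theorems~\ref{theorem definabilities 1} and~\ref{theorem existence translation}, use the same clause $\lambda(t(\varphi))$ and share the defect. The paper itself gives no proof of the present statement, only the citation.

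The repair is the substitution argument you already give for $(\Rightarrow)$, applied to the kernel clause at a single letter. Since $p\in Form_{\mathcal{L}_1}\cap Form_{\mathcal{L}_2}$, every model over $\mathbb{C}$ satisfies $\Box_1 p\leftrightarrow\lambda(p)$ at every world. So set $t(\Box_1\varphi):=\lambda(p)[t(\varphi)/p]$, which lies in $Form_{\mathcal{L}_2}$. Given a model $\mathcal{M}$ over $\mathbb{C}$, reassign $p$ to the truth set of $t(\varphi)$; by the induction hypothesis this is the truth set of $\varphi$. The kernel clause in the new model, together with the substitution lemma, yields $\mathcal{M},w\models\Box_1\varphi$ iff $\mathcal{M},w\models t(\Box_1\varphi)$. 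You could instead switch to definability by extension, whose clause does cover $t(\varphi)\in Form_{\mathcal{L}_2^{\Box_1}}$, via Theorem~\ref{theorem definabilities 1}. However, the paper's proof of that theorem uses the same unjustified clause, so it is sound only after this same repair.

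Note also that the obstacle you flagged is misplaced. For the kernel notion, $(\Rightarrow)$ needs no uniformity: for Boolean $\varphi$, the formula $\Box_1\varphi$ already lies in $\mathcal{L}_1$, so expressivity supplies a witness for each one separately. Uniformity is indispensable exactly where an inductive translation feeds $\lambda$ non-Boolean arguments. That happens in $(\Leftarrow)$ and in reaching definability by extension, including your use of $v$, which is legitimate only because your $\lambda$ is defined by substitution.
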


We may also speak interchangeably about definability of $\Box_1$ and expressivity, in a class of frames, of its language w.r.t. another.

\begin{theorem}\label{theorem existence translation}
If $\Box_1$ is definable in $\mathcal{L}_2$ w.r.t. $\mathbb{C}$ (or equivalently, $\mathcal{L}_1 \preceq \mathcal{L}_2$), then there is a truth-preserving translation $t: \mathcal{L}_1 \to \mathcal{L}_2$ on $\mathbb{C}$.
\end{theorem}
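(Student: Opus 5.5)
We construct the translation by recursion on formulas of $\mathcal{L}_1$, using the kernel form of definability. By Theorems \ref{theorem definabilities 1}--\ref{theorem definabilities 3}, we may take $\Box_1$ to be definable on the kernel in $\mathcal{L}_2$ w.r.t. $\mathbb{C}$. This gives a map $\lambda:\mathcal{L}_2\to\mathcal{L}_2$ such that $\mathcal{M},w\models\Box_1\varphi$ iff $\mathcal{M},w\models\lambda(\varphi)$ for every model on a frame in $\mathbb{C}$.

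Define $t:\mathcal{L}_1\to\mathcal{L}_2$ by
\begin{center}
\begin{tabular}{lll}
$t(p)$ & $=$ & $p$\\
$t(\neg\varphi)$ & $=$ & $\neg t(\varphi)$\\
$t(\varphi\wedge\psi)$ & $=$ & $t(\varphi)\wedge t(\psi)$\\
$t(\Box_1\varphi)$ & $=$ & $\lambda(t(\varphi))$
\end{tabular}
\end{center}
This is the same map used in the proof of Theorem \ref{theorem definabilities 2}. We then show by induction on the complexity of $\varphi\in Form_{\mathcal{L}_1}$ that $\mathcal{M},w\models\varphi$ iff $\mathcal{M},w\models t(\varphi)$, for every model $\mathcal{M}$ on a frame in $\mathbb{C}$ and every world $w$.

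The atomic and Boolean cases follow immediately from the induction hypothesis. The modal case is the only delicate step. We need $\mathcal{M},w\models\Box_1\varphi$ iff $\mathcal{M},w\models\lambda(t(\varphi))$.

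\begin{itemize}
\item Kernel definability is stated for formulas in $Form_{\mathcal{L}_1}\cap Form_{\mathcal{L}_2}$, but $\varphi$ itself may lie outside $\mathcal{L}_2$. So the clause cannot be applied to $\varphi$ directly.
\item Instead, replace $\varphi$ by $t(\varphi)$ inside the operators. The induction hypothesis says $\varphi$ and $t(\varphi)$ have the same extension in $\mathcal{M}$.
\item The semantics of $\Box_1$ depends only on the truth set of its argument. Hence $\mathcal{M},w\models\Box_1\varphi$ iff $\mathcal{M},w\models\Box_1 t(\varphi)$.
\item It remains to justify applying the kernel clause to $t(\varphi)$, which is a formula of $\mathcal{L}_2$. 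Take a fresh variable $q$ and let $\mathcal{M}'$ agree with $\mathcal{M}$ except that $V'(q)$ is the truth set of $t(\varphi)$ in $\mathcal{M}$. Since $q$ lies in $Form_{\mathcal{L}_1}\cap Form_{\mathcal{L}_2}$, kernel definability gives $\mathcal{M}',w\models\Box_1 q$ iff $\mathcal{M}',w\models\lambda(q)$.
\item For this to transfer back to $t(\varphi)$, $\lambda$ must be schematic, i.e.\ $\lambda(\psi)=\lambda(q)[\psi/q]$. Under that assumption, a substitution lemma gives $\mathcal{M}',w\models\lambda(q)$ iff $\mathcal{M},w\models\lambda(t(\varphi))$. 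The paper's proof of Theorem \ref{theorem definabilities 1} likewise appeals to ``uniform substitution'' in this way.
\item Combining the steps gives the modal case: $\mathcal{M},w\models\Box_1\varphi$ iff $\mathcal{M},w\models\lambda(t(\varphi))$.
\end{itemize}

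Alternatively, we can use definability by extension (Definition \ref{def definability 4}) and avoid the substitution argument. It gives $\lambda':\mathcal{L}_2^{\Box_1}\to\mathcal{L}_2$ with $\mathbb{C}\models\Box_1\psi\leftrightarrow\lambda'(\psi)$ for every $\psi\in Form_{\mathcal{L}_2^{\Box_1}}$, a class that includes $t(\varphi)$. We then set $t(\Box_1\varphi)=\lambda'(t(\varphi))$.

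One gap remains on this route. Validity on $\mathbb{C}$ must yield pointwise equivalence in each model, which holds because a biconditional valid on $\mathbb{C}$ is true at every world of every model on a frame in $\mathbb{C}$. The modal case then reduces to congruence of $\Box_1$ plus this clause.

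The main obstacle is making the modal step rigorous when $\varphi$ is not itself an $\mathcal{L}_2$-formula. I would first try the route through Definition \ref{def definability 4}, since it yields the equivalence at each world without relying on substitution.
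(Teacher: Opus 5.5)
Your proposal is correct and matches the paper's proof: it uses the same recursive $t$ with $t(\Box_1\varphi)=\lambda(t(\varphi))$, checked by induction on $\varphi$. Your extension-based route fills in the modal case that the paper leaves as ``easily checked'': take $\lambda$ from Definition \ref{def definability 4} (available by Theorem \ref{theorem definabilities 1}), so the clause applies to $t(\varphi)\in Form_{\mathcal{L}_2}$, then use congruence of $\Box_1$ and the pointwise reading of validity on $\mathbb{C}$. The schematic-$\lambda$ route would add a hypothesis the definitions do not supply, so that is the one to drop.
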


\begin{proof}
Let $\Box_1$ be definable and $\lambda$ be the relevant mapping. Define $t: \mathcal{L}_1 \to \mathcal{L}_2$:

\begin{tabular}{lll}
$t(p)$ & $=$ & $p$\\
$t(\neg \varphi)$ & $=$ & $\neg t(\varphi)$\\
$t(\varphi \wedge \psi)$ & $=$ & $t(\varphi) \wedge t(\psi)$\\
$t(\Box_1 \varphi)$ & $=$ & $\lambda (t(\varphi))$
\end{tabular}

\noindent
One may easily check $t$ is truth-preserving.
\end{proof}

We shall, from now on, frequently make use of the above concepts, and of translations between languages. To fix notation, let us denote translations from the less expressive language to the more expressive one by $t$, and by $h$, translations which work in the opposite direction.

\section{Importing soundness}\label{sec soundness}

We start with a couple of definitions to introduce our terminology.

\begin{definition}\label{def class}
Let $P(x)$ be a property of modal structures (defined in first or second-order logic). We shall denote the class of modal structures sharing $P(x)$ by $\mathbb{C}_P$. Therefore, if $F \in \mathbb{C}_P$, then $P(F)$.
\end{definition}

\begin{definition}[Insensitivity to properties, \cite{venturiyago2020}]\label{def insensitivity property}
Let $\mathcal{L}$ be a modal language and $P$ a frame property. We say $\mathcal{L}$ is \emph{insensitive to $P(x)$} whenever there is $\mathbb{B} \supsetneq \mathbb{C}_P$, for all $\varphi \in Form_\mathcal{L}$, $\mathbb{C}_P \models \varphi$ iff $\mathbb{B} \models \varphi$.
\end{definition}

The reason we call this phenomenon \emph{insensitivity} is that no logic expressed in $\mathcal{L}$ can characterise the class of $P$-structures. If $\mathcal{L}$  is insensitive to a certain property and the largest class $\mathbb{B}$ described in Definition \ref{def insensitivity property} may be the class of all frames $\mathbb{C}_\mathbf{K}$, we say $\mathcal{L}$ is \emph{fully insensitive to $P$}.

Let $\rho$ denote an operation on structures, that is a transformation of a structure in another. Let us denote a frame $F'$ is obtained by applying $\rho$ to a frame $F$ by $F \rightsquigarrow_\rho F'$. Let us also define, for a class of structures $\mathbb{C}$, its closure under $\rho$ as $Cl_\rho(\mathbb{C}) = \{F \mid \exists F' \in \mathbb{C} (F' \rightsquigarrow_\rho F)\}$.

\begin{definition}[Collapse]\label{def collapse}
We say $\mathbb{C}_1$ and $\mathbb{C}_2$ \emph{collapse in $\mathcal{L}$} if for any $\varphi \in Form_\mathcal{L}$, $\mathbb{C}_1 \models \varphi$ iff $\mathbb{C}_2 \models \varphi$ (that is, the logics of both classes of structures in $\mathcal{L}$ are the same). 
\end{definition}

\begin{definition}[$P$-reduction]\label{def p reduction general}
Let $P(x)$ be a structure property. We call \emph{$P$-reduction} any operation $\rho$ on structures such that for any structure $F$, if $P(F)$, $F \rightsquigarrow_\rho F'$ and $F \neq F'$, then $\neg P(F')$.
\end{definition}

\begin{definition}[$P$-closure]\label{def p closure general}
Let $P(x)$ be a structure property. We call \emph{$P$-closure} any operation $\kappa$ such that, for any $F$ and $F'$, if $F \rightsquigarrow_\kappa F'$, then $P(F')$. 
\end{definition}

Notice, by our definitions, $\kappa$ being a $P$-closure does not mean $F \rightsquigarrow_\kappa F'$ implies $F \subseteq F'$; and similarly, $\rho$ being a $P$-reduction does not mean $F \rightsquigarrow_\rho F'$ implies $F' \subseteq F$. That is for we wish to use \emph{closure} and \emph{reduction} as ways of describing operations on structures which add or take away a property, respectively -- instead of, for example, in the case of relational semantics, merely adding or taking away states or accessibility arrows.

\begin{definition}[Insensitivity to structure operation]\label{def insensitivity operation}
Let $\mathcal{L}$ be a modal language, and $\rho$ an operation on structures. We say $\mathcal{L}$ is \emph{insensitive to $\rho$} if, for any $F$ and $F'$ with $F \rightsquigarrow_\rho F'$ and $\varphi \in Form_\mathcal{L}$, $F \models \varphi$ iff $F' \models \varphi$.
\end{definition}

By this definition, any modal language is insensitive to the trivial identity operation. The concept becomes fruitful when the frame operations we consider are non-trivial.

\begin{corollary}\label{corollary insensitivity operation}
If $\rho$ is a $P$-reduction and  $\mathcal{L}$ is a modal language insensitive to $\rho$, then $\mathcal{L}$ is insensitive to $P$.
\end{corollary}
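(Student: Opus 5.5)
The natural witness for Definition \ref{def insensitivity property} is the closure of $\mathbb{C}_P$ under $\rho$, together with $\mathbb{C}_P$ itself. I will take
\[
\mathbb{B} = \mathbb{C}_P \cup Cl_\rho(\mathbb{C}_P),
\]
and show two things: that $\mathbb{B}$ and $\mathbb{C}_P$ collapse in $\mathcal{L}$ in the sense of Definition \ref{def collapse}, and that $\mathbb{B} \supsetneq \mathbb{C}_P$.

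The collapse is the routine part. Since $\mathbb{C}_P \subseteq \mathbb{B}$, validity on $\mathbb{B}$ gives validity on $\mathbb{C}_P$. Conversely, suppose $\mathbb{C}_P \models \varphi$ and let $F' \in \mathbb{B}$. If $F' \in \mathbb{C}_P$ there is nothing to prove. Otherwise there is $F \in \mathbb{C}_P$ with $F \rightsquigarrow_\rho F'$. Since $\mathcal{L}$ is insensitive to $\rho$ (Definition \ref{def insensitivity operation}), $F \models \varphi$ iff $F' \models \varphi$, so $F' \models \varphi$.

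Strictness is the step I expect to be the real obstacle. To get $\mathbb{B} \neq \mathbb{C}_P$ I need some $F \in \mathbb{C}_P$ and some $F'$ with $F \rightsquigarrow_\rho F'$ and $F \neq F'$. Then, because $\rho$ is a $P$-reduction (Definition \ref{def p reduction general}), $\neg P(F')$, so $F' \in \mathbb{B} \setminus \mathbb{C}_P$. The definitions as stated do not rule out a degenerate $\rho$ that acts as the identity on every $P$-structure; the trivial identity operation is itself vacuously a $P$-reduction. I will therefore read the corollary under the standing assumption, signalled after Definition \ref{def insensitivity operation}, that $\rho$ is non-trivial on $\mathbb{C}_P$: some $P$-frame is sent by $\rho$ to a distinct frame. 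Under that assumption, the witness $F'$ above exists and strictness follows.

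I will state this hypothesis explicitly in the proof rather than leave it implicit. Without it the statement can fail, for instance when $\rho$ is the identity and $P$ is definable in $\mathcal{L}$. With the hypothesis in place, the two parts together show that $\mathbb{B}$ satisfies Definition \ref{def insensitivity property}, so $\mathcal{L}$ is insensitive to $P$.
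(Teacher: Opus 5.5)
Your proof is correct and is the argument the paper compresses into ``straightforward from the definitions'': the witness $\mathbb{B} = \mathbb{C}_P \cup Cl_\rho(\mathbb{C}_P)$ collapses with $\mathbb{C}_P$ by insensitivity to $\rho$, and the $P$-reduction property makes it a proper superclass. You are also right that strictness needs the extra hypothesis that $\rho$ sends at least one $P$-structure to a different structure, which the paper's one-line proof leaves implicit; your counterexample shows the corollary fails as literally stated, namely $\rho$ the identity with $P$ definable in $\mathcal{L}$ (e.g.\ reflexivity in $\mathcal{L}^\Box$, defined by $\Box p \rightarrow p$).
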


\begin{proof}
Straightforward from Definitions \ref{def insensitivity property}, \ref{def p reduction general} and \ref{def insensitivity operation}.
\end{proof}

\begin{definition}[Frame recoverability]\label{def recoverable}
Let $\rho_1$ and $\rho_2$ be operations on structures, and $F$ be a structure. If for any $F'$, we have $F \rightsquigarrow_{\rho_1} F' \Rightarrow F' \rightsquigarrow_{\rho_2} F$, we say $F$ is \emph{recoverable by $\rho_2$ under $\rho_1$}.
\end{definition}

\begin{proposition}\label{proposition equivalent insensitivity}
Let $\rho_1$ and $\rho_2$ be operations and $\mathbb{C}$ be a class of structures. If $\mathcal{L}$ is insensitive to $\rho_2$, and, for any $F \in \mathbb{C}$, $F$ is recoverable by $\rho_2$ under $\rho_1$, then for any $F'$ such that $F \rightsquigarrow_{\rho_1} F'$ and $F \in \mathbb{C}$, and $\varphi \in Form_{\mathcal{L}}$, $F \models \varphi$ iff $F' \models \varphi$.
\end{proposition}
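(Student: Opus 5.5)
The argument is a direct unfolding of Definitions \ref{def recoverable} and \ref{def insensitivity operation}; no earlier theorem is needed.

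Fix $F \in \mathbb{C}$, a structure $F'$ with $F \rightsquigarrow_{\rho_1} F'$, and a formula $\varphi \in Form_{\mathcal{L}}$. The aim is to replace the $\rho_1$-step from $F$ to $F'$, about which $\mathcal{L}$ is told nothing, with a $\rho_2$-step between the same two structures. Insensitivity of $\mathcal{L}$ to $\rho_2$ then applies directly.

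\begin{enumerate}
\item Since $F \in \mathbb{C}$, the hypothesis says $F$ is recoverable by $\rho_2$ under $\rho_1$.
\item By Definition \ref{def recoverable}, instantiating its universally quantified $F'$ with our $F'$, the fact $F \rightsquigarrow_{\rho_1} F'$ gives $F' \rightsquigarrow_{\rho_2} F$.
\item We now have a $\rho_2$-step with $F'$ as source and $F$ as target. Definition \ref{def insensitivity operation}, applied to the pair $(F', F)$, yields $F' \models \varphi$ iff $F \models \varphi$.
\item Reading this biconditional in the other direction gives $F \models \varphi$ iff $F' \models \varphi$, which is the claim.
\end{enumerate}

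The only point needing care is the orientation in step 3. Recoverability produces a $\rho_2$-step from $F'$ back to $F$, not from $F$ to $F'$. Definition \ref{def insensitivity operation} quantifies over arbitrary pairs related by the operation and concludes with a biconditional, so it applies to $(F', F)$ without any assumption that $F'$ lies in $\mathbb{C}$. Membership in $\mathbb{C}$ is used only for $F$, to obtain recoverability in step 1.

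No induction on formulas is needed, because insensitivity is already stated at the level of all $\mathcal{L}$-formulas. I expect no real obstacle beyond this bookkeeping of directions.
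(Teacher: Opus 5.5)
Your proposal is correct and follows the paper's proof: recoverability converts $F \rightsquigarrow_{\rho_1} F'$ into $F' \rightsquigarrow_{\rho_2} F$, and insensitivity to $\rho_2$, applied to the pair $(F', F)$, gives the biconditional. Your note on orientation (that $F'$ need not lie in $\mathbb{C}$, since insensitivity quantifies over all $\rho_2$-related pairs) makes explicit what the paper leaves implicit. You also state the result directly as a biconditional, whereas the paper starts from ``Suppose $F \models \varphi$'' and then uses the biconditional anyway.
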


\begin{proof}
Let $F \rightsquigarrow_{\rho_1} F'$ and $\varphi \in Form_\mathcal{L}$. Suppose $F \models \varphi$. Since $F$ is recoverable by $\rho_2$ under $\rho_1$, $F' \rightsquigarrow_{\rho_2} F$; and since $\mathcal{L}$ is insensitive to $\rho_2$, that is the case iff $F' \models \varphi$.
\end{proof}

\begin{theorem}\label{theorem closure collapse}
Let $\rho$ and $\kappa$ be a $P$-reduction and $P$-closure, respectively. Let further $\mathcal{L}$ be a language insensitive to $\rho$, and $\mathbb{C}$ be a class of structures. If $Cl_\rho(\mathbb{C}_P) \subseteq \mathbb{C}$ and, for any $F \in \mathbb{C}$, $F$ is recoverable by $\rho$ under $\kappa$, then $\mathcal{L}$ collapses $\mathbb{C}$ and $\mathbb{C}_P$.
\end{theorem}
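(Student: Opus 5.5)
We must show that for every $\varphi \in Form_\mathcal{L}$, $\mathbb{C} \models \varphi$ iff $\mathbb{C}_P \models \varphi$. The plan is to prove the two directions separately. The direction from $\mathbb{C}_P$ to $\mathbb{C}$ uses the recoverability hypothesis together with insensitivity to $\rho$, in the style of Proposition \ref{proposition equivalent insensitivity}. The direction from $\mathbb{C}$ to $\mathbb{C}_P$ uses the inclusion $Cl_\rho(\mathbb{C}_P) \subseteq \mathbb{C}$ together with insensitivity.

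\emph{From $\mathbb{C}$ to $\mathbb{C}_P$.} Suppose $\mathbb{C} \models \varphi$ and let $F \in \mathbb{C}_P$. Apply $\rho$ to $F$ to obtain some $F'$ with $F \rightsquigarrow_\rho F'$. Then $F' \in Cl_\rho(\mathbb{C}_P) \subseteq \mathbb{C}$, so $F' \models \varphi$. By insensitivity of $\mathcal{L}$ to $\rho$ (Definition \ref{def insensitivity operation}), $F \models \varphi$.

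The subtle point is that we need such an $F'$ to exist, that is, $\rho$ must be applicable to $F$. I would treat operations as total, or else note that the identity/trivial application is admissible. If $\rho$ may be partial, I would instead use $\kappa$. Apply $\kappa$ to $F$ to get $F \rightsquigarrow_\kappa F''$. Since $F \in \mathbb{C}_P$ and, via the inclusion, we may argue that $F$ belongs to $\mathbb{C}$, recoverability yields $F'' \rightsquigarrow_\rho F$. So $F \in Cl_\rho(\{F''\})$, and $F'' \in \mathbb{C}_P$ because $\kappa$ is a $P$-closure. Hence $F \in Cl_\rho(\mathbb{C}_P) \subseteq \mathbb{C}$, and then $F \models \varphi$ directly.

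This second route is in fact cleaner. It shows the inclusion $\mathbb{C}_P \subseteq \mathbb{C}$, granted recoverability applies to $F$, so that $\mathbb{C} \models \varphi$ gives $\mathbb{C}_P \models \varphi$ outright. I will commit to it.

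\emph{From $\mathbb{C}_P$ to $\mathbb{C}$.} Suppose $\mathbb{C}_P \models \varphi$ and let $F \in \mathbb{C}$. Apply $\kappa$ to get $F \rightsquigarrow_\kappa F'$. Since $\kappa$ is a $P$-closure, $P(F')$ holds, so $F' \in \mathbb{C}_P$ and $F' \models \varphi$. Since $F$ is recoverable by $\rho$ under $\kappa$, we have $F' \rightsquigarrow_\rho F$. Insensitivity of $\mathcal{L}$ to $\rho$ then gives $F \models \varphi$. This is exactly Proposition \ref{proposition equivalent insensitivity} with $\rho_1 = \kappa$ and $\rho_2 = \rho$.

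\emph{Main obstacle.} I expect the main obstacle to be the existence of the transformed structures, i.e.\ that $\kappa$ (and possibly $\rho$) can be applied to every relevant structure. The first direction may additionally need recoverability for members of $\mathbb{C}_P$, which requires knowing $\mathbb{C}_P \subseteq \mathbb{C}$.

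I would handle this by the first route: assume $\rho$ is applicable to each $F \in \mathbb{C}_P$, which is the intended reading of an operation as total. Then insensitivity alone transfers validity back from $F' \in \mathbb{C}$ to $F$, with no appeal to $\kappa$ in that direction. Likewise, totality of $\kappa$ is assumed in the second direction. Note that the hypothesis that $\rho$ is a $P$-reduction is not needed for the collapse itself. It only guarantees non-triviality, via Corollary \ref{corollary insensitivity operation}.
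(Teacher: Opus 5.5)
Your final argument is the paper's proof. The direction $\mathbb{C}_P \models \varphi \Rightarrow \mathbb{C} \models \varphi$ goes through $\kappa$, the $P$-closure property and recoverability, exactly as in Proposition \ref{proposition equivalent insensitivity}; the converse applies $\rho$ to a member of $\mathbb{C}_P$, lands in $\mathbb{C}$ by the inclusion, and transfers back by insensitivity, with the paper likewise tacitly assuming that $\rho$ and $\kappa$ can be applied. You were right to abandon the $\kappa$-based ``second route'': it presupposes $F \in \mathbb{C}$ for $F \in \mathbb{C}_P$, which does not follow from $Cl_\rho(\mathbb{C}_P) \subseteq \mathbb{C}$ (unless $\rho$ may leave $F$ unchanged), so that route is circular.
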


\begin{proof}
Notice $Cl_\rho(\mathbb{C}_P) \subseteq \mathbb{C}$ is equivalent to, (i) for all $F \in \mathbb{C}_P$ and $F'$, if $F \rightsquigarrow_\rho F'$, then $F' \in \mathbb{C}$. So, let $\mathbb{C} \not\models \varphi$. Then, $F \not\models \varphi$ for some $F \in \mathbb{C}$. By Definition \ref{def p closure general}, for $F \rightsquigarrow_\kappa F'$, $F' \in \mathbb{C}_P$. Since $F$ is recoverable by $\rho$ under $\kappa$, by Proposition \ref{proposition equivalent insensitivity}, $F' \not\models \varphi$, and so $\mathbb{C}_P \not\models \varphi$. Let now $\mathbb{C}_P \not\models \varphi$. Then, for some $F \in \mathbb{C}_P$, $F \not\models \varphi$. Let $F \rightsquigarrow_\rho F'$. By (i), $F' \in \mathbb{C}$, and since $\mathcal{L}$ is insensitive to $\rho$, $F' \not\models \varphi$. Thus, $\mathbb{C} \not\models \varphi$.
\end{proof}

The above theorem explains the strategy used for proving the collapse of certain classes of frames in insensitive modal logics, as in \cite{gilbertventuri2016} and \cite{venturiyago2020}. Given a frame property $P$, if a language is insensitive to a $P$-reduction $\rho$, then by showing that any frame in a certain class $\mathbb{C}$ such that $Cl_\rho(\mathbb{C}_P) \subseteq \mathbb{C}$ is recoverable by $P$-reduction under $P$-closure -- that is, that if we apply $P$-closure to a frame, the original frame is a $P$-reduction of the resulting one --, we effectively show, not only that the language is also insensitive to $P$-closure, but that the logic of $\mathbb{C}_P$ collapses to that of $\mathbb{C}$.

\begin{corollary}\label{corollary cond full insensitivity}
Let $\rho$ and $\kappa$ be a $P$-reduction and $P$-closure, respectively, and $\mathcal{L}$ be a language insensitive to $\rho$. If for any $F \in \mathbb{C}_\mathbf{K}$, $F$ is recoverable by $\rho$ under $\kappa$, then $\mathcal{L}$ is fully insensitive to $P$.
\end{corollary}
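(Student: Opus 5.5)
The plan is to obtain the corollary as the special case $\mathbb{C} = \mathbb{C}_\mathbf{K}$ of Theorem \ref{theorem closure collapse}, and then to read off full insensitivity from the resulting collapse.

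First, check the hypotheses of Theorem \ref{theorem closure collapse} with $\mathbb{C} = \mathbb{C}_\mathbf{K}$. The language $\mathcal{L}$ is insensitive to the $P$-reduction $\rho$ by assumption. Recoverability of every $F \in \mathbb{C}_\mathbf{K}$ by $\rho$ under $\kappa$ is also assumed. The remaining condition, $Cl_\rho(\mathbb{C}_P) \subseteq \mathbb{C}_\mathbf{K}$, is trivial because $\mathbb{C}_\mathbf{K}$ is the class of all frames. The theorem therefore gives that $\mathcal{L}$ collapses $\mathbb{C}_\mathbf{K}$ and $\mathbb{C}_P$. That is, for all $\varphi \in Form_\mathcal{L}$, we have $\mathbb{C}_P \models \varphi$ iff $\mathbb{C}_\mathbf{K} \models \varphi$.

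Second, match this with Definition \ref{def insensitivity property} and the definition of full insensitivity, taking $\mathbb{B} = \mathbb{C}_\mathbf{K}$. The collapse supplies exactly the required equivalence. What remains is the strict inclusion $\mathbb{C}_\mathbf{K} \supsetneq \mathbb{C}_P$, that is, the existence of some frame lacking $P$.

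This last point is the only real obstacle, and I would handle it by a case split. If $P$ holds of every frame, then $\mathbb{C}_P = \mathbb{C}_\mathbf{K}$ and the notion of insensitivity is vacuous. In that case I would note that the statement should be read for non-trivial $P$, as is implicit in the paper's usage.

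Otherwise some frame satisfies $\neg P$, so $\mathbb{C}_P \subsetneq \mathbb{C}_\mathbf{K}$ and the argument is complete. Alternatively, one can try to produce such a frame directly. Take any $F \in \mathbb{C}_P$ and a $\rho$-image $F'$ of it with $F \neq F'$. Definition \ref{def p reduction general} then gives $\neg P(F')$. Such a non-identical image exists whenever $\rho$ is non-trivial on $\mathbb{C}_P$, which is the intended situation.
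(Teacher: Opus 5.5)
Your proposal is correct and is exactly the intended argument. The paper gives no separate proof: the corollary is the instance $\mathbb{C} = \mathbb{C}_\mathbf{K}$ of Theorem \ref{theorem closure collapse}, where $Cl_\rho(\mathbb{C}_P) \subseteq \mathbb{C}_\mathbf{K}$ holds trivially and the resulting collapse of $\mathbb{C}_P$ and $\mathbb{C}_\mathbf{K}$ is read as full insensitivity.

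Your extra care over the strict inclusion $\mathbb{C}_P \subsetneq \mathbb{C}_\mathbf{K}$ is warranted. If $P$ holds of every frame, the hypotheses can still be met (e.g.\ with $\rho$ and $\kappa$ both the identity), yet no $\mathbb{B} \supsetneq \mathbb{C}_P$ exists. So the paper tacitly assumes $P$ is non-trivial, just as you note.
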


From here on, we generalize definitions and results of \cite{gilbertventuri2016}, where the strategy was first developed to offer general soundness and completeness results for \emph{RI}-logics (\emph{Reflexive-Isensitive}-logics), in the language of $\mathcal{L}^\circ$. For the completeness results, specifically, the strategy is an adaptation of that in \cite{goldblattmares2006}. In \cite{gilbertventuri2016} and \cite{venturiyago2020}, the notion of robustness w.r.t. reflexivity and seriality is defined, and it says of a class of frames that it is robust w.r.t., for example, reflexivity, if the reflexive closure of any frame of that class is still a frame of that class. However, in the present work, we find it to be more adequate to define robustness w.r.t. a certain frame operation. In the case of seriality, for example, there are many different serial closures, so that for a given class, some of them may produce frames still in that class, while others may not. Thus, it is not the property which is the most important aspect of that notion of robustness, but the specific frame operation.

\begin{definition}[Robustness w.r.t. $\rho$]\label{def robustness}
Let $\rho$ be an operation and $\mathbb{C}$ a class of frames. We say $\mathbb{C}$ is \emph{robust w.r.t. $\rho$} if $Cl_\rho(\mathbb{C}) \subseteq \mathbb{C}$.
\end{definition}

So far, we have not defined what we take a logic to be. There are two possibilities. From a proof theoretic perspective, given a certain proof theory -- Hilbert-style axiomatic system, sequent calculus, natural deduction --, a logic in a language $\mathcal{L}$ can be defined as a set $\Gamma \subseteq Form_\mathcal{L}$ closed under a certain consequence relation. Since most of the literature so far regarding insensitive logics work with axiomatic systems, a logic in this perspective may be defined as such a collection closed under the inference rules. From a semantic perspective, once we fix a semantics, and thus a notion of entailment, a logic in $\mathcal{L}$ can be defined as a set $\Gamma \subseteq Form_\mathcal{L}$ closed under this entailment relation. It is more common, however, to talk about the logic of a certain class of structures of the semantics, which is the set of formulas of $\mathcal{L}$ validated by the class. Nevertheless, the two perspectives are connected: the \emph{minimal logic of $\mathcal{L}$} is usually taken to be the logic of the class of all structures of the semantics, and while a set of formulas closed by certain inference rules which is coextensional to it is taken to be an axiomatization of the minimal logic of $\mathcal{L}$, it is sometimes equally referred to as the minimal logic. Similarly, if a certain formula characterizes a class of structures -- in the sense that a structure satisfies that formula iff it is a structure of that class --, then the closure under the minimal logic of this formula is \emph{the logic} of that class of structures. Below, we formally make the above distinction.

\begin{definition}[Logic$_{PT}$]\label{def logic pt}
Let $\mathcal{L}$ be a modal language, with a proof-theory, and an associated (logical) consequence relation $\vdash : \mathcal{P}(Form_\mathcal{L}) \to Form_\mathcal{L}$. A \emph{logic$_{PT}$} is a subset $\mathbf{L} \subseteq Form_\mathcal{L}$ closed under that consequence relation -- that is, such that $Cl_\vdash(\mathbf{L}) = \{\psi \mid \exists \Gamma \subseteq \mathbf{L} (\Gamma \vdash \psi)\} = \mathbf{L}$.\footnote{The closure operator, as usual, may be defined by progressively taking all consequences of $\mathbf{L}$, then all consequences of that set, and so on, and then collecting all the resulting sets corresponding to finite iterations of this operation.}
\end{definition}

\begin{definition}[Logic$_S$]\label{def logic pt}
Let $\mathcal{L}$ be a modal language, with a structure-based semantics, and an associated entailment relation $\models$ such that $\Gamma \models \psi$ iff for any structure $F$, $F \models \Gamma \Rightarrow F \models \psi$. A \emph{logic$_S$} is a subset $\mathbf{L} \subseteq Form_\mathcal{L}$ closed under that entailment relation -- that is, $Cl_{\models}(\mathbf{L}) = \{\psi \mid \exists \Gamma \subseteq \mathbf{L} (\Gamma \models \psi)\} = \mathbf{L}$.
\end{definition}

From now on, unless stated otherwise, when referring to a logic $\mathbf{L}$, we shall consider $\mathbf{L}$ to be a logic$_S$. Moreover, given a certain semantic, we say $\mathbf{L}$ is \emph{consistent} w.r.t. it if there is a structure validating all the formulas of $\mathbf{L}$. If the semantics in question is clear by context, we simply say the logic is consistent. Furthermore, we denote by $\mathbb{C}_\mathbf{L}$ the class of structures which validates all formulas of $\mathbf{L}$. Notice, in the limit case $\mathbf{L}$ is inconsistent, $\mathbb{C}_\mathbf{L} = \varnothing$.

Now, fix a semantics. Let $\mathcal{L}_1$ and $\mathcal{L}_2$ be modal languages such that $\mathcal{L}_2 \preceq \mathcal{L}_1$, $\mathbf{L}$ a consistent logic in $\mathcal{L}_1$, and $h: \mathcal{L}_1 \to \mathcal{L}_2$ be truth-preserving translation in $\mathbb{C}_\mathbf{L}$. Let also $\mathbf{L}_h$ be the smallest logic of $\mathcal{L}_2$ such that $h[\mathbf{L}] = \{h(\varphi) \mid \varphi \in \mathbf{L}\} \subseteq \mathbf{L}_h$ -- that is, $\mathbf{L}_h = Cl_{\models}(h[\mathbf{L}])$. Then, we have the following result.

\begin{theorem}\label{theorem soundness 1}
Let $\mathcal{L}_2 \preceq \mathcal{L}_1$ and $\mathbf{L}$ be a logic in $\mathcal{L}_1$. Then, $\mathbf{L}_h$ is sound w.r.t. $\mathbb{C}_\mathbf{L}$.
\end{theorem}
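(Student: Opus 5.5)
The plan is to show that every formula of $\mathbf{L}_h = Cl_{\models}(h[\mathbf{L}])$ is valid on every frame $F \in \mathbb{C}_\mathbf{L}$. There are two steps: first prove that the generating set $h[\mathbf{L}]$ is valid on $\mathbb{C}_\mathbf{L}$, then show that validity on $\mathbb{C}_\mathbf{L}$ survives closure under the entailment relation $\models$.

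For the first step, take $\varphi \in \mathbf{L}$ and $F \in \mathbb{C}_\mathbf{L}$. By the definition of $\mathbb{C}_\mathbf{L}$ we have $F \models \varphi$, meaning $\varphi$ holds at every world of every model on $F$. The translation $h$ is truth-preserving on $\mathbb{C}_\mathbf{L}$. Hence for every model $\mathcal{M}$ on $F$ and every world $w$, $\mathcal{M}, w \models \varphi$ iff $\mathcal{M}, w \models h(\varphi)$. It follows that $F \models h(\varphi)$, so $\mathbb{C}_\mathbf{L} \models h[\mathbf{L}]$. Consistency of $\mathbf{L}$ only guarantees that $\mathbb{C}_\mathbf{L}$ is nonempty, so the statement is not vacuous; it is not needed for the argument itself.

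For the second step, I use the definition of $Cl_{\models}$. Note that entailment here is frame-level: $\Gamma \models \psi$ iff every structure validating $\Gamma$ validates $\psi$. I argue by induction on the stages of the iterated closure described in the footnote to the definition of logic$_{PT}$, carried over verbatim to logic$_S$. Suppose every formula obtained at stage $n$ is valid on $\mathbb{C}_\mathbf{L}$, and let $\psi$ be obtained at stage $n+1$. Then $\Gamma \models \psi$ for some $\Gamma$ drawn from stage $n$. Each $F \in \mathbb{C}_\mathbf{L}$ validates $\Gamma$, and therefore validates $\psi$. Taking the union over all finite stages gives $\mathbb{C}_\mathbf{L} \models \mathbf{L}_h$, which is soundness.

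The only real subtlety is the first step. Truth-preservation of $h$ must be read pointwise on models over frames in $\mathbb{C}_\mathbf{L}$, and not merely as preservation of validity. This matches how truth-preserving translations are used in Definitions \ref{def definability 1}--\ref{def definability 2} and in the proof of Theorem \ref{theorem existence translation}. Once that reading is fixed, the argument is purely definitional.
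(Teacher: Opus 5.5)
Your proposal is correct and matches the paper's proof: an induction on the stages $\mathbf{L}_n$ of $Cl_{\models}(h[\mathbf{L}])$, where the base case uses truth-preservation of $h$ on $\mathbb{C}_\mathbf{L}$ and the inductive step uses the fact that frame-level entailment preserves validity on each $F \in \mathbb{C}_\mathbf{L}$. One small overstatement: the pointwise reading of truth-preservation is not actually essential, because even the class-level reading ($\mathbb{C}_\mathbf{L} \models \psi$ iff $\mathbb{C}_\mathbf{L} \models h(\psi)$) already gives $\mathbb{C}_\mathbf{L} \models h[\mathbf{L}]$, and that is all the inductive step uses.
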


\begin{proof}
Notice $Cl_{\models}(h[\mathbf{L}])$ is constructed in the following way: defining $\mathbf{L}_0 = h[\mathbf{L}]$ and $\mathbf{L}_{n+1} = \{\varphi \mid \exists \Gamma \subseteq \mathbf{L}_n (\Gamma \models \varphi)\}$, and then letting $Cl_{\models}(h[\mathbf{L}]) = \bigcup_{n \in \omega} \mathbf{L}_n$. Let now $\varphi \in \mathbf{L}_h$. We show, by induction on the first level of the construction of $Cl_{\models}(h[\mathbf{L}]) = \mathbf{L}_h$ in which $\varphi$ appears, that $\mathbb{C}_\mathbf{L} \models \varphi$. In case $\varphi \in \mathbf{L}_0$, then $\varphi = h(\psi)$ for some $\psi \in \mathbf{L}$, so since $h$ is truth-preserving in $\mathbb{C}_\mathbf{L}$, $\mathbb{C}_\mathbf{L} \models \psi$. Let now $\varphi \in \mathbf{L}_{n+1}$. Then, there is $\Gamma \subseteq \mathbf{L}_n$ such that $\Gamma \models \varphi$ in $\mathbb{C}_\mathbf{L}$. By induction hypothesis, $\mathbb{C}_\mathbf{L} \models \gamma$ for every $\gamma \in \Gamma$, and therefore, $\mathbb{C}_\mathbf{L} \models \varphi$.
\end{proof}

\begin{theorem}\label{theorem soundness insensitivity}
Let $\mathcal{L}_2 \preceq \mathcal{L}_1$ and $\mathbf{L}$ be a logic in $\mathcal{L}_1$. If $\mathcal{L}_2$ is insensitive to $\rho$, then $\mathbf{L}_h$ is sound w.r.t. $Cl_\rho(\mathbb{C}_\mathbf{L})$.
\end{theorem}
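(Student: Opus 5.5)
The plan is to reduce the claim to Theorem \ref{theorem soundness 1} together with the definition of insensitivity to an operation (Definition \ref{def insensitivity operation}). Let $\varphi \in \mathbf{L}_h$ and $F \in Cl_\rho(\mathbb{C}_\mathbf{L})$. We need to show $F \models \varphi$. Since $\varphi \in Form_{\mathcal{L}_2}$ and $\mathbf{L}_h$ is a set of $\mathcal{L}_2$-formulas, insensitivity of $\mathcal{L}_2$ to $\rho$ applies to $\varphi$ directly.

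The steps, in order, are as follows.

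First, by the definition of $Cl_\rho$, pick $F' \in \mathbb{C}_\mathbf{L}$ with $F' \rightsquigarrow_\rho F$.

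Second, invoke Theorem \ref{theorem soundness 1}. Since $\mathbf{L}_h$ is sound w.r.t. $\mathbb{C}_\mathbf{L}$, we get $F' \models \varphi$.

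Third, apply Definition \ref{def insensitivity operation} to the pair $F' \rightsquigarrow_\rho F$ and the formula $\varphi \in Form_{\mathcal{L}_2}$. This gives $F' \models \varphi$ iff $F \models \varphi$, hence $F \models \varphi$.

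Since $F$ and $\varphi$ were arbitrary, $Cl_\rho(\mathbb{C}_\mathbf{L}) \models \mathbf{L}_h$. We could equally bypass Theorem \ref{theorem soundness 1} and redo its induction over the levels $\mathbf{L}_n$. That is unnecessary, because soundness w.r.t. $\mathbb{C}_\mathbf{L}$ already holds for every member of $\mathbf{L}_h$, and insensitivity transfers validity formula-by-formula.

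The main point to check is bookkeeping rather than a real obstacle. Theorem \ref{theorem soundness 1} concerns $h$ being truth-preserving on $\mathbb{C}_\mathbf{L}$ only. The closure frames $F$ need not lie in $\mathbb{C}_\mathbf{L}$, and $h$ need not be truth-preserving on them. The argument must therefore never evaluate $\mathcal{L}_1$-formulas on $F$; only the $\mathcal{L}_2$-formula $\varphi$ is transported along $\rho$, and that is exactly what insensitivity licenses.

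One should also note that Definition \ref{def insensitivity operation} is stated for frame-level validity $F \models \varphi$. This matches the frame-level notion of soundness used for logic$_S$, so no model-level or valuation-level transfer is required. If $\mathbf{L}$ were inconsistent, then $\mathbb{C}_\mathbf{L}=\varnothing$, hence $Cl_\rho(\mathbb{C}_\mathbf{L})=\varnothing$, and the statement holds vacuously.
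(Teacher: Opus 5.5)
Your proposal is correct and is essentially the paper's argument. The paper runs the same two steps contrapositively: a frame $F \in Cl_\rho(\mathbb{C}_\mathbf{L})$ refuting $\varphi$ has a $\rho$-preimage $F' \in \mathbb{C}_\mathbf{L}$, which also refutes $\varphi$ by insensitivity, so $\varphi \notin \mathbf{L}_h$ by Theorem~\ref{theorem soundness 1}; you instead argue directly from $F' \models \varphi$ to $F \models \varphi$.
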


\begin{proof}
Let $\varphi \in \mathcal{L}_2$ and $Cl_\rho(\mathbb{C}_\mathbf{L}) \not\models \varphi$. Then, for some $F \in Cl_\rho(\mathbb{C}_\mathbf{L})$, $F \not\models \varphi$. By definition, we have for some $F' \in \mathbb{C}_\mathbf{L}$, $F' \rightsquigarrow_\rho F$. Since $\mathcal{L}_2$ is insensitive to $\rho$, we have $F' \not\models \varphi$, so by Theorem \ref{theorem soundness 1}, $\varphi \not\in \mathbf{L}_h$.
\end{proof}

\begin{corollary}\label{corollary soundness}
Let $\mathcal{L}_2 \preceq \mathcal{L}_1$ and $\mathbf{L}$ be a logic in $\mathcal{L}_1$. If $\mathcal{L}_2$ is insensitive to $\rho$, then $\mathcal{L}_2$ collapses $\mathbb{C}_\mathbf{L}$ and $Cl_\rho(\mathbb{C}_\mathbf{L})$.
\end{corollary}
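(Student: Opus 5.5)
The plan is to prove the two inclusions in the definition of collapse (Definition \ref{def collapse}) separately, for an arbitrary $\varphi \in Form_{\mathcal{L}_2}$. I will show that $\mathbb{C}_\mathbf{L} \models \varphi$ iff $Cl_\rho(\mathbb{C}_\mathbf{L}) \models \varphi$. The only tool needed is insensitivity to $\rho$ (Definition \ref{def insensitivity operation}), together with the definition of $Cl_\rho$.

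For the direction from $\mathbb{C}_\mathbf{L}$ to $Cl_\rho(\mathbb{C}_\mathbf{L})$, I argue by contraposition, exactly as in the proof of Theorem \ref{theorem soundness insensitivity}. Suppose $F \not\models \varphi$ for some $F \in Cl_\rho(\mathbb{C}_\mathbf{L})$. By definition of $Cl_\rho$ there is $F' \in \mathbb{C}_\mathbf{L}$ with $F' \rightsquigarrow_\rho F$. Insensitivity then gives $F' \not\models \varphi$, so $\mathbb{C}_\mathbf{L} \not\models \varphi$. No appeal to $\mathbf{L}_h$ or to Theorem \ref{theorem soundness 1} is needed here; we only use the frame-level insensitivity.

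For the converse, suppose $F \not\models \varphi$ with $F \in \mathbb{C}_\mathbf{L}$. I need some frame in $Cl_\rho(\mathbb{C}_\mathbf{L})$ that refutes $\varphi$, and this is the delicate step. $F$ itself need not lie in $Cl_\rho(\mathbb{C}_\mathbf{L})$ unless $\rho$ can be applied to it. I would first try to obtain some $F''$ with $F \rightsquigarrow_\rho F''$. Such an $F''$ lies in $Cl_\rho(\mathbb{C}_\mathbf{L})$ by definition, and insensitivity gives $F'' \not\models \varphi$, which finishes the argument.

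The main obstacle is therefore the tacit assumption that $\rho$ is total, i.e.\ defined on every frame of $\mathbb{C}_\mathbf{L}$. This holds for the operations the paper has in mind: reflexive or serial closures, and the identity, which trivially satisfies insensitivity. I would state the assumption explicitly and note that, whenever $F \rightsquigarrow_\rho F$ is allowed, $\mathbb{C}_\mathbf{L} \subseteq Cl_\rho(\mathbb{C}_\mathbf{L})$ and this direction is immediate. With both inclusions in hand, the two classes validate the same $\mathcal{L}_2$-formulas, which is the collapse claimed.
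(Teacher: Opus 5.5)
Your argument is correct, and it is more careful than the paper's. The paper proves the corollary by citing Theorems~\ref{theorem soundness 1} and~\ref{theorem soundness insensitivity}, i.e.\ by routing through the fact that $\mathbf{L}_h$ is sound w.r.t.\ both $\mathbb{C}_\mathbf{L}$ and $Cl_\rho(\mathbb{C}_\mathbf{L})$. Soundness of one logic w.r.t.\ two classes does not by itself make the classes validate the same $\mathcal{L}_2$-formulas. The real content is the frame-level step inside the proof of Theorem~\ref{theorem soundness insensitivity}, which is exactly your first direction; the converse is never argued there.

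Your totality hypothesis is what that converse needs, not a convenience. Take $\mathcal{L}_1=\mathcal{L}_2=\mathcal{L}^\Box$, $\mathbf{L}=\mathbf{K}$, $h$ the identity, and let $F \rightsquigarrow_\rho F'$ iff $F=F'$ and $F$ is reflexive. Every language is insensitive to this $\rho$, and $Cl_\rho(\mathbb{C}_\mathbf{K})=\mathbb{C}_\mathbf{T}$, yet $\Box p \rightarrow p$ separates the two classes. So the statement holds only if ``operation'' is read as defined on every frame of $\mathbb{C}_\mathbf{L}$. The paper makes this assumption silently elsewhere too, e.g.\ when it picks $F'$ with $F \rightsquigarrow_\kappa F'$ in the proof of Theorem~\ref{theorem closure collapse}.

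Your route also shows that $h$, $\mathbf{L}_h$ and the hypothesis $\mathcal{L}_2 \preceq \mathcal{L}_1$ are idle. The collapse holds for any class $\mathbb{C}$ and any $\rho$ to which $\mathcal{L}_2$ is insensitive and which applies to every member of $\mathbb{C}$. The paper's phrasing buys only its fit with Soundness strategy~$1$.

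One small precision: your fallback that $\mathbb{C}_\mathbf{L} \subseteq Cl_\rho(\mathbb{C}_\mathbf{L})$ ``whenever $F \rightsquigarrow_\rho F$ is allowed'' requires $F \rightsquigarrow_\rho F$ for every $F \in \mathbb{C}_\mathbf{L}$, or at least for the refuting one.
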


\begin{proof}
By Definition \ref{def collapse} and Theorems \ref{theorem soundness 1} and \ref{theorem soundness insensitivity}.
\end{proof}

Therefore, for any modal language $\mathcal{L}_2$, if we have an appropriate translation $h$ from a more expressive modal language $\mathcal{L}_1$, and an insensitivity of $\mathcal{L}_2$ to an operation on structures $\rho$, then for any logic $\mathbf{L}$ of $\mathcal{L}_1$, we obtain the soundness of $\mathbf{L}_h$ w.r.t. $\mathbb{C}_\mathbf{L}$ -- and even more restrictively, $Cl_\rho(\mathbb{C}_\mathbf{L})$. Notice, there is no mention of what sort of semantics we are using. The above soundness criterion applies to, for example, topological or neighbourhood semantics. A strategy therefore is presented for proving soundness of a diverse class of logics in different languages: let $\mathcal{L}_1$ be a modal language, and $\mathbf{L}$ a logic of it. Then, we have \emph{Soundness strategy $1$}:

\begin{enumerate}
\item let $\mathcal{L}_2$ be a less expressive language;
\item define a translation $h: \mathcal{L}_1 \to \mathcal{L}_2$ truth-preserving in $\mathbb{C}_\mathbf{L}$;
\item then, by Theorem \ref{theorem soundness 1}, we have the soundness of $\mathbf{L}_h = Cl_{\models}(h[\mathbf{L}])$ w.r.t. $\mathbb{C}_\mathbf{L}$; and if there is an operation on frames $\rho$, to which $\mathcal{L}_2$ is insensitive, by Theorem \ref{theorem soundness insensitivity}, we have the soundness of $\mathbf{L}_h$ w.r.t. $Cl_\rho(\mathbb{C}_\mathbf{L})$.
\end{enumerate}

This method generalizes the method used in \cite{gilbertventuri2016} for the soundness of RI-logics. The above results are built upon the assumption we have a truth-preserving translation from the more expressive language to the less expressive one. Not always is that the case. What we do always have, however, is a truth-preserving translation $t$ on the other direction, by the definability of one operator in terms of the other.

Let us try the opposite: suppose $\mathbf{L}$ is a logic in $\mathcal{L}_2$, with $\mathcal{L}_2 \preceq \mathcal{L}_1$, and similarly to the definition of $\mathbf{L}_h$, let $\mathbf{L}_t$ be the smallest logic of $\mathcal{L}_1$ containing the image of $\mathbf{L}$ under the truth-preserving translation $t: \mathcal{L}_2 \to \mathcal{L}_1$ -- guaranteed by Theorem \ref{theorem existence translation}.

For a logic $\mathbf{L}$ of $\mathcal{L}_2$ and another $\mathbf{L}'$ of $\mathcal{L}_1$, let a mapping $t : \mathcal{L}_2 \to \mathcal{L}_1$ be \emph{theorem preserving} from $\mathbf{L}$ to $\mathbf{L}'$ if $\vdash^\mathbf{L} \varphi$ implies $\vdash^{\mathbf{L}'} t(\varphi)$ -- that is, if the fact $\varphi$ is a theorem of $\mathbf{L}$ implies its translation is a theorem of $\mathbf{L}'$.

\begin{theorem}\label{theorem soundness 2}
Let $t$ be theorem preserving from $\mathbf{L}$ to $\mathbf{L}_t$ and truth-preserving in $\mathbb{C}_{\mathbf{L}_t}$. Then $\mathbf{L}$ is sound w.r.t. $\mathbb{C}_{\mathbf{L}_t}$.
\end{theorem}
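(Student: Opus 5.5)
The plan is a direct argument: take an arbitrary theorem of $\mathbf{L}$, push it forward along $t$, and pull validity back along the truth-preservation of $t$ on $\mathbb{C}_{\mathbf{L}_t}$. Fix $\varphi \in Form_{\mathcal{L}_2}$ with $\vdash^{\mathbf{L}} \varphi$. We must show $\mathbb{C}_{\mathbf{L}_t} \models \varphi$.

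The steps, in order, are as follows. First, because $t$ is theorem preserving from $\mathbf{L}$ to $\mathbf{L}_t$, we get $\vdash^{\mathbf{L}_t} t(\varphi)$, that is, $t(\varphi) \in \mathbf{L}_t$. Second, by the definition of $\mathbb{C}_{\mathbf{L}_t}$ as the class of structures validating every formula of $\mathbf{L}_t$, every $F \in \mathbb{C}_{\mathbf{L}_t}$ satisfies $F \models t(\varphi)$. Third, since $t$ is truth-preserving in $\mathbb{C}_{\mathbf{L}_t}$, for every model $\mathcal{M}$ on such an $F$ and every world $w$ we have $\mathcal{M}, w \models t(\varphi)$ iff $\mathcal{M}, w \models \varphi$. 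Hence $F \models \varphi$ for every $F \in \mathbb{C}_{\mathbf{L}_t}$, which is soundness of $\mathbf{L}$ with respect to $\mathbb{C}_{\mathbf{L}_t}$.

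The only point needing care is the meaning of ``theorem of $\mathbf{L}_t$''. Since $\mathbf{L}_t$ is a logic$_S$, obtained as $Cl_{\models}(t[\mathbf{L}])$, I will read $\vdash^{\mathbf{L}_t} \psi$ as $\psi \in \mathbf{L}_t$. With that reading the second step is immediate from the definition of $\mathbb{C}_{\mathbf{L}_t}$, with no appeal to completeness. If one instead took $\mathbf{L}_t$ with some proof-theoretic presentation, one would additionally need that presentation to be sound for $\mathbb{C}_{\mathbf{L}_t}$. That can be established in the style of the proof of Theorem \ref{theorem soundness 1}: induct on the stage of the closure construction at which a formula appears, using that entailment preserves validity over a fixed class.

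If $\mathbf{L}_t$ is inconsistent, then $\mathbb{C}_{\mathbf{L}_t} = \varnothing$ and the claim holds vacuously, so no consistency assumption is needed.
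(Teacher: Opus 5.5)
Your proof is correct and follows the paper's argument step for step: theorem preservation gives $\vdash^{\mathbf{L}_t} t(\varphi)$, the definition of $\mathbb{C}_{\mathbf{L}_t}$ gives $\mathbb{C}_{\mathbf{L}_t} \models t(\varphi)$, and truth preservation of $t$ on that class yields $\mathbb{C}_{\mathbf{L}_t} \models \varphi$. Your side remarks make explicit two things the paper leaves implicit: that $\vdash^{\mathbf{L}_t}$ is read as membership in the logic$_S$, and that the inconsistent case $\mathbb{C}_{\mathbf{L}_t} = \varnothing$ holds vacuously.
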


\begin{proof}
Let $\varphi \in Form_{\mathcal{L}_2}$ and $\vdash^\mathbf{L} \varphi$. Since $t$ is theorem preserving, $\vdash^{\mathbf{L}_t} t(\varphi)$. By the definition of $\mathbb{C}_{\mathbf{L}_t}$, that means $\mathbb{C}_{\mathbf{L}_t} \models t(\varphi)$. Since $t$ is truth-preserving in that class of frames, $\mathbb{C}_{\mathbf{L}_t} \models \varphi$.
\end{proof}

\begin{theorem}\label{theorem soundness 3}
If $\mathcal{L}_2$ is insensitive to $\rho$, then $\mathbf{L}$ is sound w.r.t. $Cl_\rho(\mathbb{C}_{\mathbf{L}_t})$.
\end{theorem}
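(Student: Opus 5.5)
The plan mirrors the proof of Theorem \ref{theorem soundness insensitivity}, with Theorem \ref{theorem soundness 2} taking over the role of Theorem \ref{theorem soundness 1}. The hypotheses of Theorem \ref{theorem soundness 2} are carried over tacitly: $t$ is theorem preserving from $\mathbf{L}$ to $\mathbf{L}_t$ and truth-preserving in $\mathbb{C}_{\mathbf{L}_t}$. Without these we have no soundness w.r.t. $\mathbb{C}_{\mathbf{L}_t}$ to start from, so the statement should be read in that context. I argue by contraposition.

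Let $\varphi \in Form_{\mathcal{L}_2}$ and suppose $Cl_\rho(\mathbb{C}_{\mathbf{L}_t}) \not\models \varphi$. Then there is a structure $F \in Cl_\rho(\mathbb{C}_{\mathbf{L}_t})$ with $F \not\models \varphi$. By the definition of $Cl_\rho$, there is some $F' \in \mathbb{C}_{\mathbf{L}_t}$ with $F' \rightsquigarrow_\rho F$.

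Since $\varphi$ is a formula of $\mathcal{L}_2$ and $\mathcal{L}_2$ is insensitive to $\rho$, Definition \ref{def insensitivity operation} gives $F' \models \varphi$ iff $F \models \varphi$. Hence $F' \not\models \varphi$, and therefore $\mathbb{C}_{\mathbf{L}_t} \not\models \varphi$. By Theorem \ref{theorem soundness 2}, $\mathbf{L}$ is sound w.r.t. $\mathbb{C}_{\mathbf{L}_t}$, so $\nvdash^{\mathbf{L}} \varphi$. This gives soundness of $\mathbf{L}$ w.r.t. $Cl_\rho(\mathbb{C}_{\mathbf{L}_t})$.

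The only delicate point is the type of $\varphi$. Insensitivity is a property of $\mathcal{L}_2$, not of $\mathcal{L}_1$, so it must be applied to $\varphi$ itself and not to its translation $t(\varphi)$. In general $\mathcal{L}_1$ need not be insensitive to $\rho$, so $F$ need not validate $t(\varphi)$, or even all of $\mathbf{L}_t$. The argument therefore first uses insensitivity at the level of $\mathcal{L}_2$ to move from $F$ back to $F'$. Only then does it use the truth-preservation of $t$ on $\mathbb{C}_{\mathbf{L}_t}$, which is packaged inside Theorem \ref{theorem soundness 2}. As a corollary in the style of Corollary \ref{corollary soundness}, the same reasoning shows that $\mathcal{L}_2$ collapses $\mathbb{C}_{\mathbf{L}_t}$ and $Cl_\rho(\mathbb{C}_{\mathbf{L}_t})$.
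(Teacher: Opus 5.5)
Your proof is correct and is essentially the paper's argument: contraposition, then pulling the countermodel $F$ back to some $F' \in \mathbb{C}_{\mathbf{L}_t}$ via the insensitivity of $\mathcal{L}_2$, then concluding $\varphi \notin \mathbf{L}$. The only difference is that you cite Theorem~\ref{theorem soundness 2} for the last step, while the paper unpacks it inline: truth-preservation of $t$ gives $F' \not\models t(\varphi)$, so $t(\varphi) \notin \mathbf{L}_t$, hence $\varphi \notin \mathbf{L}$ because $t[\mathbf{L}] \subseteq \mathbf{L}_t$ by definition, and this inclusion is exactly the theorem preservation you assume tacitly.
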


\begin{proof}
Let $\varphi \in Form_{\mathcal{L}_2}$ and $Cl_\rho(\mathbb{C}_{\mathbf{L}_t}) \not\models \varphi$. Then, for some $F \in Cl_\rho(\mathbb{C}_{\mathbf{L}_t})$, $F \not\models \varphi$. We have $F' \rightsquigarrow_\rho F$ for some $F' \in \mathbb{C}_{\mathbf{L}_t}$. Since $\mathcal{L}_2$ is insensitive to $\rho$, $F' \not\models \varphi$, and since $t$ is truth-preserving, $F' \not\models t(\varphi)$. Thus, by the definition of $\mathbf{L}_t$, that means $t(\varphi) \not\in \mathbf{L}_t$. Therefore, $\varphi \not\in \mathbf{L}$.
\end{proof}

We thus have \emph{Soundness strategy $2$}. Let $\mathcal{L}_2$ be a modal language, and $\mathbf{L}$ a logic of it. Then,

\begin{enumerate}
\item let $\mathcal{L}_1$ be a more expressive language;
\item define a translation $t: \mathcal{L}_2 \to \mathcal{L}_1$ truth-preserving in $\mathbb{C}_{\mathbf{L}_t}$ and theorem preserving from $\mathbf{L}$ to $\mathbf{L}_t$;
\item then, by Theorem \ref{theorem soundness 2}, we have the soundness of $\mathbf{L}$ w.r.t. $\mathbb{C}_{\mathbf{L}_t}$; and if there is an operation on frames $\rho$, to which $\mathcal{L}_2$ is insensitive, by Theorem \ref{theorem soundness 3}, we have the soundness of $\mathbf{L}$ w.r.t. $Cl_\rho(\mathbb{C}_{\mathbf{L}_t})$.
\end{enumerate}

This strategy may seem simpler than the previous one, since the existence of a truth-preserving translation is guaranteed. Furthermore, in case the more expressive language is the normal modal language, it allows the use of its well-developed correspondence theory to fund the classes of frames w.r.t. which the logics of the less expressive language are sound. However, the translation must also be theorem preserving, something which is not always given.

Whereas we have considered what the notion of logic$_S$ gives us, we now investigate how similar results may be given for a logic$_{PT}$. For the remainder of this section, whenever talking about a logic, we mean a logic$_{PT}$. Now, we need to assume we have a proof theory. Suppose we have an axiomatization for the minimal logic in $\mathcal{L}_2$, and let $\mathbf{L}$ be a logic in $\mathcal{L}_1$. Define now $\mathbf{L}_h$ as the closure of $h[\mathbf{L}]$ under the inference rules of the minimal logic of $\mathcal{L}_2$. Then:

\begin{theorem}\label{theorem soundness insensitivity 2}
Let $\mathcal{L}_2 \preceq \mathcal{L}_1$ and $\mathbf{L}$ be a logic in $\mathcal{L}_1$. If $\mathcal{L}_2$ is insensitive to $\rho$, then $\mathbf{L}_h$ is sound w.r.t. $Cl_\rho(\mathbb{C}_\mathbf{L})$.
\end{theorem}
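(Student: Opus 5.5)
The plan is to mirror the proof of Theorem \ref{theorem soundness insensitivity}, replacing the use of Theorem \ref{theorem soundness 1} with a proof-theoretic analogue. The one extra ingredient needed is that the minimal logic of $\mathcal{L}_2$ is sound with respect to the class of all structures of the fixed semantics. By this we mean that its axioms are valid on every structure and its inference rules preserve validity on any frame, in particular on every $F \in \mathbb{C}_\mathbf{L}$. This is the standing assumption implicit in calling the proof system an axiomatization of the minimal logic of $\mathcal{L}_2$. Everything else carries over.

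\emph{First step.} I would show that $\mathbb{C}_\mathbf{L} \models \varphi$ for every $\varphi \in \mathbf{L}_h$. Take the stagewise construction $\mathbf{L}_0 = h[\mathbf{L}]$, with $\mathbf{L}_{n+1}$ consisting of $\mathbf{L}_n$ together with all formulas obtained by one application of an inference rule of the minimal logic of $\mathcal{L}_2$ to members of $\mathbf{L}_n$ (axioms included as zero-premise rules). Then argue by induction on the least $n$ at which $\varphi$ appears.
\begin{itemize}
\item \emph{Base case.} Here $\varphi = h(\psi)$ with $\psi \in \mathbf{L}$. Every $F \in \mathbb{C}_\mathbf{L}$ validates $\psi$, and since $h$ is truth-preserving on $\mathbb{C}_\mathbf{L}$, every such $F$ validates $h(\psi)$.
\item \emph{Inductive step.} Axioms are valid on all frames, and each rule preserves frame validity. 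Hence on each fixed $F \in \mathbb{C}_\mathbf{L}$, validity of the premises gives validity of the conclusion.
\end{itemize}
Rules such as necessitation or monotonicity-type rules preserve validity on a frame, though not truth at a world. So the induction must be phrased frame by frame, with validity rather than local truth.

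\emph{Second step.} Given this analogue of Theorem \ref{theorem soundness 1}, argue contrapositively as before. Suppose $\varphi \in Form_{\mathcal{L}_2}$ and $Cl_\rho(\mathbb{C}_\mathbf{L}) \not\models \varphi$. Then some $F \in Cl_\rho(\mathbb{C}_\mathbf{L})$ has $F \not\models \varphi$, and there is $F' \in \mathbb{C}_\mathbf{L}$ with $F' \rightsquigarrow_\rho F$. By insensitivity of $\mathcal{L}_2$ to $\rho$ (Definition \ref{def insensitivity operation}), $F' \not\models \varphi$. By the first step, $\varphi \notin \mathbf{L}_h$.

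The main obstacle is the first step. Specifically, one has to make explicit that soundness of the minimal $\mathcal{L}_2$ proof system means its rules preserve validity on each individual frame, not merely validity over the class of all frames. If a rule only preserved global validity over all frames, the induction over $\mathbb{C}_\mathbf{L}$ would fail. I would state this frame-wise soundness as the hypothesis on the given axiomatization; it holds for the usual rules (modus ponens, substitution-free rules, necessitation-style rules) in relational, neighbourhood and topological semantics.
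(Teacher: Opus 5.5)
Your proposal is correct and follows essentially the paper's argument. You move the countermodel from $Cl_\rho(\mathbb{C}_\mathbf{L})$ back into $\mathbb{C}_\mathbf{L}$ by insensitivity, then use truth-preservation of $h$ together with the soundness of the minimal logic's rules to show that no formula refuted on $\mathbb{C}_\mathbf{L}$ lies in $\mathbf{L}_h$ (the paper's step $h[\Gamma] \models h(\psi)$). Two differences are worth noting. Your stagewise induction makes the paper's case split on whether $\varphi$ lies in the image of $h$ unnecessary. And you insist that the rules preserve validity frame by frame, not merely over the class of all frames; this is exactly what the paper's appeal to the rules ``preserving truth'' tacitly needs.
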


\begin{proof}
Let $\varphi \in Form_{\mathcal{L}_2}$ and $Cl_\rho(\mathbb{C}_\mathbf{L}) \not\models \varphi$. Then, for some $F \in Cl_\rho(\mathbb{C}_\mathbf{L})$, $F \not\models \varphi$. Thus, for some $F' \in \mathbb{C}_\mathbf{L}$, $F' \rightsquigarrow_\rho F$. By the insensitivity, $F' \not\models \varphi$. Then, we have two cases. If for some $\psi \in Form_{\mathcal{L}_1}$, $\varphi = h(\psi)$, then since $h$ is truth-preserving, $F' \not\models \psi$, so by the soundness of $\mathbf{L}$, $\psi \not\in \mathbf{L}$. Suppose $h(\psi) \in \mathbf{L}_h$. Then, by the definition of $\mathbf{L}_h$, $h(\psi)$ must be obtained by the inference rules of the minimal logic of $\mathcal{L}_2$, when applied to a set of translations of theorems of $\mathbf{L}$. Let $h[\Gamma]$ be the translated theorems from which $h(\psi)$ follows. By definition, the minimal logic is the logic of the class of all structures, and thus its inference rules preserve truth. That means $h[\Gamma] \models h(\psi)$. By definition, since $F' \in \mathbb{C}_\mathbf{L}$, $F' \models \Gamma$. Since $h$ is truth-preserving, $F' \models h[\Gamma]$, which means $F' \models h(\psi)$, a contradiction. Therefore, $h(\psi) = \varphi \not\in \mathbf{L}_h$. If now for all $\psi \in \mathbf{L}$, $\varphi \neq h[\psi]$, then once again $\varphi$ must be obtained by the inference rules of the minimal logic of $\mathcal{L}_2$ in the same way as previously described. In that case, the previous argument equally follows.
\end{proof}

Similarly to the semantic perspective, we may consider the direction of the translation easily obtainable, $t : \mathcal{L}_2 \to \mathcal{L}_1$. In that case, we assume we have an axiomatization for the minimal logic in $\mathcal{L}_1$, and let $\mathbf{L}$ be a logic in $\mathcal{L}_2$. We then define $\mathbf{L}_t$ as the closure of $t(\mathbf{L})$ under the inference rules of the minimal logic of $\mathcal{L}_1$. Now:

\begin{theorem}\label{theorem soundness 4}
If $\mathcal{L}_2$ is insensitive to $\rho$, then $\mathbf{L}$ is sound w.r.t. $Cl_\rho(\mathbb{C}_{\mathbf{L}_t})$.
\end{theorem}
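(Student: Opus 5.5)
I will argue by contraposition, following the proof of Theorem \ref{theorem soundness 3}, with the proof-theoretic closure used in the way Theorem \ref{theorem soundness insensitivity 2} uses it. As in the semantic case, I take $t : \mathcal{L}_2 \to \mathcal{L}_1$ to be the truth-preserving translation from Theorem \ref{theorem existence translation}. Its truth preservation is needed in $\mathbb{C}_{\mathbf{L}_t}$; this is automatic if $t$ is truth-preserving on all frames, as the $\lambda$-based translation is when definability holds w.r.t.\ $\mathbb{C}_\mathbf{K}$.

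The first step is to show that every $\psi \in \mathbf{L}_t$ is valid on $\mathbb{C}_{\mathbf{L}_t}$, and that every $\varphi \in \mathbf{L}$ satisfies $t(\varphi) \in \mathbf{L}_t$. The second part is immediate, since $t(\mathbf{L}) \subseteq \mathbf{L}_t$ by the definition of $\mathbf{L}_t$ as the closure of $t(\mathbf{L})$ under the rules of the minimal logic of $\mathcal{L}_1$. For the first part, I read $\mathbb{C}_{\mathbf{L}_t}$ as the class of structures validating all of $\mathbf{L}_t$, so it holds by definition. If instead one insists on only $t(\mathbf{L})$ being validated, I would induct on derivations. The base case is members of $t(\mathbf{L})$. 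For the inductive step, the rules of the minimal logic of $\mathcal{L}_1$ preserve validity on any frame, because the minimal logic is the logic of the class of all structures, exactly as argued in the proof of Theorem \ref{theorem soundness insensitivity 2}.

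Next comes the main argument. Let $\varphi \in Form_{\mathcal{L}_2}$ with $Cl_\rho(\mathbb{C}_{\mathbf{L}_t}) \not\models \varphi$. Then some $F \in Cl_\rho(\mathbb{C}_{\mathbf{L}_t})$ has $F \not\models \varphi$. By the definition of closure there is $F' \in \mathbb{C}_{\mathbf{L}_t}$ with $F' \rightsquigarrow_\rho F$. By insensitivity of $\mathcal{L}_2$ to $\rho$ (Definition \ref{def insensitivity operation}), $F' \not\models \varphi$, and since $t$ is truth-preserving, $F' \not\models t(\varphi)$. By the first step, $t(\varphi) \notin \mathbf{L}_t$. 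Since $t(\mathbf{L}) \subseteq \mathbf{L}_t$, it follows that $\varphi \notin \mathbf{L}$.

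The main obstacle I expect is the first step in its derivation-based form. The difficulty is justifying that the inference rules of the minimal logic of $\mathcal{L}_1$ preserve validity frame by frame, and not merely across the class of all frames, since validity on all frames does not by itself give frame-wise preservation of rules. For the rules of interest (modus ponens, necessitation, uniform substitution) frame-validity is preserved, and I would state this as the standing assumption inherited from the proof of Theorem \ref{theorem soundness insensitivity 2}.
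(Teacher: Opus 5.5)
Your proposal is correct and follows the paper's proof: contraposition through a $\rho$-preimage $F' \in \mathbb{C}_{\mathbf{L}_t}$, then insensitivity, then truth preservation of $t$, giving $t(\varphi) \notin \mathbf{L}_t$ and hence $\varphi \notin \mathbf{L}$. The only difference is that you obtain $t(\varphi) \notin \mathbf{L}_t$ directly from $\mathbb{C}_{\mathbf{L}_t}$ being the class validating all of $\mathbf{L}_t$, whereas the paper reconstructs it from derivations out of $t[\Gamma]$ (your fallback); your remark that the rules must preserve validity frame by frame is exactly the assumption the paper uses there without stating it.
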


\begin{proof}
Let $\varphi \in Form_{\mathcal{L}_2}$ and $Cl_\rho(\mathbb{C}_{\mathbf{L}_t}) \not\models \varphi$. Then, for some $F \in Cl_\rho(\mathbb{C}_{\mathbf{L}_t})$, $F \not\models \varphi$. We have $F' \rightsquigarrow_\rho F$ for some $F' \in \mathbb{C}_{\mathbf{L}_t}$, so since $\mathcal{L}_2$ is insensitive to $\rho$, $F' \not\models \varphi$. Since $t$ is truth-preserving, $F' \not\models t(\varphi)$. Suppose $t(\varphi) \in \mathbf{L}_t$. Analogously to the previous theorem, by the definition of $\mathbf{L}_t$, $t(\varphi)$ must be obtained by the inference rules of the minimal logic of $\mathcal{L}_1$, when applied to a set of translations of theorems of $\mathbf{L}$, so let $t[\Gamma]$ be the translated theorems from which $t[\varphi]$ follows. We have $t[\Gamma] \models t(\varphi)$. By definition, since $F' \in \mathbb{C}_{\mathbf{L}_t}$, $F' \models t[\Gamma]$, which means $F' \models t(\varphi)$, a contradiction. Therefore, $t(\varphi) \not\in \mathbf{L}_t$, and therefore $\varphi \not\in \mathbf{L}$.
\end{proof}

So we find, in both sorts of definition of what a logic is, a way for insensitivity to help define the class of frames which validate it.

\section{Importing completeness in relational frames}\label{sec completeness}

We now present a method for showing completeness which is more restricted than the former one for soundness. It is restricted for two reasons: it only covers modal logics with a single primitive unary operator, and w.r.t. relational semantics. For clarity, let us denote a logic in a language with a modal operator $\lhd$ by a superscript (e.g. $\mathbf{L}^\lhd$), and a logic in normal modal language by a letter, simply (e.g. $\mathbf{L}$). For the next results, we generalise the following usual concepts in proofs by canonical model for modal logics.

\begin{definition}[Canonical model]\label{def canonical model}
Let $\mathbf{L}$ be a modal logic. We call a model $\mathcal{M} = \langle W, ..., V \rangle$ \emph{canonical for} $\mathbf{L}$ if:

\begin{enumerate}
\item $W$ is the set of all maximal $\mathbf{L}$-consistent sets of formulas of $\mathcal{L}$;
\item for every propositional variable $p$, $V(p) = \{x \in W \mid p \in X\}$;
\item for any $\varphi \in Form_\mathcal{L}$ and $w \in W$, $\mathcal{M}, w \models \varphi$ iff $\varphi \in w$.
\end{enumerate}
\end{definition}

\begin{definition}[Canonical logic]\label{def canonical logic}
Let $\mathbf{L}$ be a modal logic and $\mathcal{M}$ its canonical model. Let $F$ be the frame on which it is based, and $\mathbb{C}_\mathbf{L}$ the class of frames determined by $\mathbf{L}$. We call $\mathbf{L}$ \emph{canonical} if $F \in \mathbb{C}_\mathbf{L}$.
\end{definition}

We leave the above definitions open enough so they may encompass, for example, canonical neighborhood models and canonical logics for which there are no relational frames, despite the present use in relational semantics. For our presentation, we shall consider an arbitrary modal language $\mathcal{L}^\lhd$ with a single primitive modal operator $\lhd$.

\begin{definition}[Boxlike mapping]
Let $\mathbf{L}^\lhd$ be a logic of $\mathcal{L}^\lhd$, $\langle W_{\mathbf{L}^\lhd}, R_{\mathbf{L}^\lhd} \rangle$ its canonical (relational) frame, and $h : \mathcal{L}^\Box \to \mathcal{L}^\lhd$ a translation. We say $h$ is \emph{boxlike} w.r.t. $\mathbf{L}^\lhd$ if for any $x, y \in W_{\mathbf{L}^\lhd}$, $y \in R_{\mathbf{L}^\lhd}(x) \Leftrightarrow \forall \varphi \in Form_{\mathcal{L}^\Box} (h(\Box \varphi) \in x \Rightarrow h(\varphi) \in y)$.
\end{definition}

Suppose we have Lindenbaum's lemma for $\mathcal{L}^\lhd$ -- which we may as long as the size of $Form_{\mathcal{L}^\lhd}$ is countable. Define $\mathbf{L}^\lhd_h$ as the smallest logic containing the image of the normal modal logic $\mathbf{L}$ under $h$ -- that is, $\mathbf{L}^\lhd_h$ must be closed under its inference rules, which we take to be given --, and $F_{\mathbf{L}^\lhd_h} = \langle W_{\mathbf{L}^\lhd_h}, R_{\mathbf{L}^\lhd_h} \rangle$ be its canonical frame. We shall proceed with the supposition $\mathbf{L}$ is a consistent logic (otherwise it is trivially complete). We also define the mapping $f: W_{\mathbf{L}^\lhd_h} \to W_\mathbf{L}; x \mapsto \{\varphi \mid h(\varphi) \in x\}$ -- which, as we shall shortly see, is well defined.

We recall, in a canonical model in the normal modal language, $xRy$ iff $L(x) \subseteq y$, where $L(x) = \{\varphi \mid \Box \varphi \in x\}$.

A mapping $h$ is compositional for non-modal formulas if it has the following behaviour:

\begin{tabular}{lll}
$h(p)$ & $=$ & $p'$\\
$h(\neg \varphi)$ & $=$ & $\neg h(\varphi)$\\
$h(\varphi \wedge \psi)$ & $=$ & $h(\varphi) \wedge h(\psi)$ \end{tabular}

\noindent
It is worth to note that in the results we here seek to generalize, for example, \cite{gilbertventuri2016} defines a translation from $\mathcal{L}^\Box$ to $\mathcal{L}^\circ$ which is compositional, truth-preserving and theorem preserving.\footnote{For example, the operator $\circ$ is semantically defined to have the same truth condition as $\neg \varphi \vee \Box \varphi$. \cite{gilbertventuri2016} define a translation $* : \mathcal{L}^\Box \to \mathcal{L}^\circ$ recursively given by: $*(p) = p$, $*(\neg \varphi) = \neg *(\varphi)$, $*(\varphi \wedge \psi) = *(\varphi) \wedge *(\psi)$, and $*(\Box \varphi) = *(\varphi) \wedge \circ *(\varphi)$. This mapping is truth preserving in $\mathbb{C}_\mathbf{K}$, compositional and theorem preserving. Of course, preservation of theoremhood, in general, will depend on the specific axiomatization of the relevant non-normal logic.}

\begin{lemma}\label{lemma f is meaningful}
Let $h$ be compositional and theorem preserving from $\mathbf{L}$ to $\mathbf{L}^\lhd_h$. For any $x \in W_{\mathbf{L}^\lhd_h}$, $f(x) \in W_\mathbf{L}$ (that is, $f$ is well defined).
\end{lemma}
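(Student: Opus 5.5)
We have to show that, for $x \in W_{\mathbf{L}^\lhd_h}$, the set $f(x) = \{\varphi \mid h(\varphi) \in x\}$ is a maximal $\mathbf{L}$-consistent set of formulas of $\mathcal{L}^\Box$. The plan is to pull back the maximality and consistency of $x$ along $h$, using compositionality for the Boolean structure and theorem preservation for the logical content of $\mathbf{L}$. Concretely, I will establish three properties of $f(x)$: (i) it contains every theorem of $\mathbf{L}$; (ii) it is closed under modus ponens, or equivalently under conjunction and $\mathbf{L}$-provable implication; and (iii) it contains exactly one of $\varphi$ and $\neg \varphi$ for every $\varphi$. Together these give a maximal $\mathbf{L}$-consistent set.

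For (i), if $\vdash^\mathbf{L} \varphi$ then theorem preservation gives $\vdash^{\mathbf{L}^\lhd_h} h(\varphi)$. Since $x$ is maximal $\mathbf{L}^\lhd_h$-consistent it contains all theorems, so $h(\varphi) \in x$ and hence $\varphi \in f(x)$.

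For (iii), compositionality gives $h(\neg\varphi) = \neg h(\varphi)$. Maximality of $x$ then says exactly one of $h(\varphi)$ and $\neg h(\varphi)$ lies in $x$, so exactly one of $\varphi$ and $\neg\varphi$ lies in $f(x)$.

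For (ii), suppose $\varphi \in f(x)$ and $\varphi \to \psi \in f(x)$. Writing $\to$ via $\neg$ and $\wedge$, compositionality gives $h(\varphi \to \psi) = h(\varphi) \to h(\psi)$. Since maximal consistent sets are closed under modus ponens, $h(\psi) \in x$, so $\psi \in f(x)$.

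Consistency now follows. If $f(x)$ were $\mathbf{L}$-inconsistent, there would be $\varphi_1, \dots, \varphi_n \in f(x)$ with $\vdash^\mathbf{L} \neg(\varphi_1 \wedge \dots \wedge \varphi_n)$. By (i) this theorem lies in $f(x)$. By closure under conjunction, which comes from compositionality for $\wedge$ and the corresponding closure of $x$, the conjunction $\varphi_1 \wedge \dots \wedge \varphi_n$ also lies in $f(x)$. That contradicts (iii). Maximality follows directly from (iii): any proper extension must contain some $\varphi$ together with $\neg\varphi$ and is therefore inconsistent.

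The main obstacle is making sure that $x$ really has the standard closure properties: containing all $\mathbf{L}^\lhd_h$-theorems, closure under modus ponens, and the negation-completeness used above. This requires $\mathbf{L}^\lhd_h$ to extend classical propositional logic with modus ponens among its rules, which I take as implicit in the paper's assumption that the inference rules are given and in the availability of Lindenbaum's lemma. A related subtlety is that compositionality must also be used to see that $h$ commutes with the defined connectives such as $\to$, which is immediate since those are abbreviations. Finally, $h(p) = p'$ needs no injectivity, since only membership of images in $x$ is used.
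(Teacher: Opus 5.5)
Your proof is correct and uses the same two ingredients as the paper's: theorem preservation, which carries an $\mathbf{L}$-refutation of finitely many members of $f(x)$ over to $\mathbf{L}^\lhd_h$, and compositionality (in particular $h(\neg\alpha)=\neg h(\alpha)$), which transfers consistency and negation-completeness from $x$ to $f(x)$. The difference is only packaging. The paper pushes $\vdash^{\mathbf{L}}\bigwedge\alpha_i\to\bot$ directly into an inconsistency of $x$, whereas you first show that $f(x)$ contains all $\mathbf{L}$-theorems, is closed under conjunction and is negation-complete, and then derive consistency from these properties, which incidentally spares you from checking what $h(\bot)$ is.
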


\begin{proof}
Assume $f(x)$ is not $\mathbf{L}$-consistent. Then there are $\alpha_i \in f(x)$ such that \linebreak $\vdash^\mathbf{L} \bigwedge_{1 \leq i \leq n} \alpha_i \rightarrow \bot$. Since $h$ preserves theoremhood from $\mathbf{L}$ to $\mathbf{L}^\lhd_h$ and is compositional, $\vdash^{\mathbf{L}^\lhd_h} \bigwedge_{1 \leq i \leq n} h(\alpha_i) \rightarrow \bot$, and by the definition of $f(x)$, $h(\alpha_i) \in x$, which means $x$ must be $\mathbf{L}^\lhd_h$-inconsistent. Assume now $f(x)$ is not maximal. Then, there is $\alpha \in Form_{\mathcal{L}^\Box}$ such that $\alpha, \neg \alpha \not\in f(x)$. By the definition of $f(x)$, then, $h(\alpha), h(\neg \alpha) \not\in x$. By the compositionality of $h$, we get $h(\neg \alpha) = \neg h(\alpha) \in x$, and so $h(\alpha), \neg h(\alpha) \not\in x$, which means $x$ itself is not maximal.
\end{proof}

Let us say $h$ is theorem reflecting from $\mathbf{L}$ to $\mathbf{L}'$ if we have the converse of theorem preservation, that is, if $\vdash^{\mathbf{L}'} h(\varphi)$ implies $\vdash^\mathbf{L} \varphi$.

\begin{lemma}\label{lemma theorem reflect}
Let $\mathbf{L}$ be complete w.r.t. $\mathbb{C}_\mathbf{L}$. If $h$ is truth-preserving in $\mathbb{C}_\mathbf{L}$, and $\mathbf{L}^\lhd_h$ is sound w.r.t. $\mathbb{C}_\mathbf{L}$, then $h$ is theorem reflecting from $\mathbf{L}$ to $\mathbf{L}^\lhd_h$.\footnote{Notice being truth-preserving in $\mathbb{C}_\mathbf{L}$ is a condition weaker than simple truth preservation, which entails preservation in the class of all frames.}
\end{lemma}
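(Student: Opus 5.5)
The plan is a direct chain of implications, run through soundness of $\mathbf{L}^\lhd_h$, truth preservation of $h$, and completeness of $\mathbf{L}$. Fix $\varphi \in Form_{\mathcal{L}^\Box}$ and assume $\vdash^{\mathbf{L}^\lhd_h} h(\varphi)$; the goal is $\vdash^{\mathbf{L}} \varphi$.

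The steps, in order, are as follows.
\begin{enumerate}
\item Since $\mathbf{L}^\lhd_h$ is sound w.r.t. $\mathbb{C}_\mathbf{L}$, the assumption gives $\mathbb{C}_\mathbf{L} \models h(\varphi)$. That is, for every frame $F \in \mathbb{C}_\mathbf{L}$, every model $\mathcal{M}$ based on $F$, and every world $w$ of $\mathcal{M}$, we have $\mathcal{M}, w \models h(\varphi)$.
\item Truth preservation of $h$ in $\mathbb{C}_\mathbf{L}$ says that $\mathcal{M}, w \models \psi$ iff $\mathcal{M}, w \models h(\psi)$ for all such $\mathcal{M}$ and $w$. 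Applying this pointwise yields $\mathcal{M}, w \models \varphi$ everywhere. Hence $\mathbb{C}_\mathbf{L} \models \varphi$.
\item Completeness of $\mathbf{L}$ w.r.t. $\mathbb{C}_\mathbf{L}$ then gives $\vdash^\mathbf{L} \varphi$, which is theorem reflection.
\end{enumerate}

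The only point needing care is step 2. Truth preservation must be read at the level of models and worlds over frames in $\mathbb{C}_\mathbf{L}$, as in Definition \ref{definition expressivity}, and not merely as frame validity. Read this way, validity of $h(\varphi)$ over the class transfers to validity of $\varphi$ over the same class. If truth preservation were stated only at the level of frame validity, the transfer is in fact immediate from that formulation.

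A second point of care concerns the logic $\mathbf{L}^\lhd_h$. It is taken here as a logic$_{PT}$, the closure of $h[\mathbf{L}]$ under the given inference rules. Its theorems are exactly its members, so $\vdash^{\mathbf{L}^\lhd_h} h(\varphi)$ means $h(\varphi) \in \mathbf{L}^\lhd_h$, which is what the soundness hypothesis in step 1 is applied to.

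I expect no substantive obstacle. The statement is essentially bookkeeping that makes the completeness and soundness hypotheses meet through the translation.
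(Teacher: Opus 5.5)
Your proposal is correct and follows the paper's proof exactly. Soundness of $\mathbf{L}^\lhd_h$ gives $\mathbb{C}_\mathbf{L} \models h(\varphi)$, truth preservation of $h$ in $\mathbb{C}_\mathbf{L}$ turns this into $\mathbb{C}_\mathbf{L} \models \varphi$, and completeness of $\mathbf{L}$ then yields $\vdash^{\mathbf{L}} \varphi$; your remark that truth preservation should be read pointwise over models on frames in $\mathbb{C}_\mathbf{L}$ is a harmless clarification that the paper leaves implicit.
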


\begin{proof}
Let $\varphi \in Form_{\mathcal{L}^\Box}$. Since $\mathbf{L}^\lhd_h$ is sound w.r.t. $\mathbb{C}_\mathbf{L}$, if $\vdash^{\mathbf{L}^\lhd_h} h(\varphi)$ then $\mathbb{C}_\mathbf{L} \models h(\varphi)$. Since $h$ is truth-preserving in $\mathbb{C}_\mathbf{L}$, that means $\mathbb{C}_\mathbf{L} \models \varphi$, but since $\mathbf{L}$ is complete w.r.t. $\mathbb{C}_\mathbf{L}$, we finally get $\vdash^\mathbf{L} \varphi$.
\end{proof}

\begin{definition}[Surjectivity up to equivalence]
Let $h : \mathcal{L}_1 \to \mathcal{L}_2$ be a mapping and $\mathbf{L}$ a logic in $\mathcal{L}_1$. We say $h$ is \emph{surjective up to equivalence w.r.t. $\mathbf{L}$} if for any $\varphi \in Form_{\mathcal{L}_2}$ there is $\psi \in Form_{\mathcal{L}_1}$ such that $\vdash^{\mathbf{L}_h} \varphi \leftrightarrow h(\psi)$.
\end{definition}

\begin{lemma}\label{lemma f injective}
Let $h$ be compositional, boxlike w.r.t. $\mathbf{L}^\lhd_h$ and surjective up to equivalence w.r.t. $\mathbf{L}$. Then $f$ is injective, and, if $xR_{\mathbf{L}^\lhd_h}y$, then $f(x)R_\mathbf{L}f(y)$.
\end{lemma}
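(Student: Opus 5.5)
The statement has two parts: injectivity of $f$, and preservation of the canonical relation. I will prove them separately, both directly from the definition $f(x) = \{\varphi \mid h(\varphi) \in x\}$, and assume that $f$ is well defined as in Lemma \ref{lemma f is meaningful}.

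\emph{Injectivity.} Let $x \neq y$ in $W_{\mathbf{L}^\lhd_h}$. By maximality there is some $\chi \in Form_{\mathcal{L}^\lhd}$ with $\chi \in x$ and $\chi \notin y$, so $\neg\chi \in y$.
\begin{enumerate}
\item By surjectivity up to equivalence w.r.t. $\mathbf{L}$, pick $\psi \in Form_{\mathcal{L}^\Box}$ with $\vdash^{\mathbf{L}^\lhd_h} \chi \leftrightarrow h(\psi)$.
\item Maximal consistent sets are closed under provable equivalence, so $h(\psi) \in x$ and $h(\psi) \notin y$.
\item Hence $\psi \in f(x)$ and $\psi \notin f(y)$, so $f(x) \neq f(y)$.
\end{enumerate}
Compositionality is needed here only to ensure that $\neg h(\psi) = h(\neg\psi)$ lands in $y$. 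This makes the argument symmetric and fits $f(y)$ being maximal, but it is not strictly necessary for distinctness.

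\emph{Relation preservation.} Suppose $x R_{\mathbf{L}^\lhd_h} y$. Since $h$ is boxlike w.r.t. $\mathbf{L}^\lhd_h$, for every $\varphi \in Form_{\mathcal{L}^\Box}$ we have $h(\Box\varphi) \in x \Rightarrow h(\varphi) \in y$. Now take $\varphi \in L(f(x))$, i.e. $\Box\varphi \in f(x)$. Unpacking the definition of $f$, this means $h(\Box\varphi) \in x$, so $h(\varphi) \in y$ and thus $\varphi \in f(y)$. Therefore $L(f(x)) \subseteq f(y)$, which is exactly $f(x) R_\mathbf{L} f(y)$ in the canonical frame of $\mathbf{L}$. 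Only the left-to-right half of the boxlike biconditional is used.

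The main obstacle is the closure fact in the injectivity step. It requires that maximal $\mathbf{L}^\lhd_h$-consistent sets contain all theorems and are closed under modus ponens, so that $\chi \in x$ together with $\vdash \chi \leftrightarrow h(\psi)$ gives $h(\psi) \in x$. I will take this as a standard property of maximal consistent sets given that $\mathbf{L}^\lhd_h$ has classical propositional logic and modus ponens among its inference rules. I will state this assumption explicitly rather than prove it.
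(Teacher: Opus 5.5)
Your proof is correct and follows the paper's argument. The relational part is identical: unpack $\Box\varphi \in f(x)$ to $h(\Box\varphi)\in x$ and apply the left-to-right half of boxlikeness. Injectivity likewise pulls a distinguishing formula back along surjectivity up to equivalence, with one small difference: you read $\psi \notin f(y)$ directly off $h(\psi)\notin y$, whereas the paper detours through $\neg h(\psi) = h(\neg\psi) \in y$ and the consistency of $f(y)$. Your shortcut is valid and shows that neither compositionality nor the consistency of $f(y)$ is needed for injectivity.
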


\begin{proof}
For injectivity, let $x \neq y$. With no loss of generality, we may assume there is $\varphi \in x \setminus y$. Since $h$ is surjective up to equivalence w.r.t. $\mathbf{L}$, $\vdash^{\mathbf{L}^\lhd_h} \varphi \leftrightarrow h(\psi)$ for some $\psi \in Form_{\mathcal{L}^\Box}$, so that $h(\psi) \in x$, and by the definition of $f$, $\psi \in f(x)$. By the maximality of $y$, we have $\neg h(\psi) \in y$. By the compositionality of $h$, $\neg h(\psi) = h(\neg \psi)$, so that $\neg \psi \in f(y)$. By the consistency of $f(y)$, $\neg \psi \not\in f(y)$, thus showing that $f(x) \neq f(y)$.

For the second claim, assume $\varphi \in L(f(x))$, and so $\Box \varphi \in f(x)$. Then, by the definition of $f(x)$, $h(\Box \varphi) \in x$. Since $h$ is boxlike w.r.t. $\mathbf{L}^\lhd_h$, that means $h(\varphi) \in y$. Hence, $\varphi \in f(y)$, by the definition of $f$.
\end{proof}

\begin{lemma}\label{lemma f preserves R}
Let $h$ be compositional and truth-preserving in $\mathbb{C}_\mathbf{L}$. Let also $\mathbf{L}$ be complete w.r.t. $\mathbb{C}_\mathbf{L}$, and $\mathbf{L}^\lhd_h$ be sound w.r.t. $\mathbb{C}_\mathbf{L}$. If $f(x)R_\mathbf{L}z$, then there is $y \in W_{\mathbf{L}^\lhd_h}$ such that $xR_{\mathbf{L}^\lhd_h}y$ and $f(y) = z$.
\end{lemma}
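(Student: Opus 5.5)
The plan is to build $y$ as a Lindenbaum extension of the $h$-image of $z$. Then $f(y)=z$ will follow from maximality, and $xR_{\mathbf{L}^\lhd_h}y$ will follow from the definition of $f$ together with the boxlike characterisation of the canonical relation.

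First, Lemma \ref{lemma theorem reflect} applies directly under the present hypotheses ($\mathbf{L}$ complete w.r.t. $\mathbb{C}_\mathbf{L}$, $h$ truth-preserving there, $\mathbf{L}^\lhd_h$ sound w.r.t. $\mathbb{C}_\mathbf{L}$). So $h$ is theorem reflecting from $\mathbf{L}$ to $\mathbf{L}^\lhd_h$. Also, $h$ is trivially theorem preserving, since $h[\mathbf{L}] \subseteq \mathbf{L}^\lhd_h$ by the definition of $\mathbf{L}^\lhd_h$. Hence Lemma \ref{lemma f is meaningful} applies and $f$ is well defined.

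Next, let $h[z] = \{h(\varphi) \mid \varphi \in z\}$; I claim it is $\mathbf{L}^\lhd_h$-consistent. Suppose not. Then for some $\varphi_1, \dots, \varphi_n \in z$ we have $\vdash^{\mathbf{L}^\lhd_h} \neg \bigwedge_i h(\varphi_i)$. By compositionality, this formula is exactly $h(\neg \bigwedge_i \varphi_i)$. Theorem reflection then gives $\vdash^\mathbf{L} \neg \bigwedge_i \varphi_i$, contradicting the $\mathbf{L}$-consistency of $z$. By Lindenbaum's lemma, which was assumed available for $\mathcal{L}^\lhd$, extend $h[z]$ to some $y \in W_{\mathbf{L}^\lhd_h}$. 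By the definition of $f$, $z \subseteq f(y)$. Since $z$ is maximal and $f(y)$ is consistent (Lemma \ref{lemma f is meaningful}), we get $z = f(y)$: if $\psi \in f(y) \setminus z$, then $\neg\psi \in z \subseteq f(y)$, which is impossible.

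Finally, I need to show $xR_{\mathbf{L}^\lhd_h}y$. Take any $\varphi$ with $h(\Box\varphi) \in x$. Then $\Box\varphi \in f(x)$ by the definition of $f$, so $\varphi \in L(f(x)) \subseteq z = f(y)$ because $f(x)R_\mathbf{L}z$. Hence $h(\varphi) \in y$. This is precisely the right-hand side of the boxlike condition, and its right-to-left direction yields $xR_{\mathbf{L}^\lhd_h}y$.

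The main obstacle is this last step. The statement does not fix the accessibility relation of the canonical frame of $\mathbf{L}^\lhd_h$, so some link between $R_{\mathbf{L}^\lhd_h}$ and the formulas $h(\Box\varphi)$ must be supplied. I would rely on the ($\Leftarrow$) half of boxlikeness, or equivalently take the canonical relation to be defined by that condition. If a different relation is intended, this is the one point where extra work is required: one must verify that this syntactic condition suffices for accessibility. Everything else uses only compositionality and theorem reflection.
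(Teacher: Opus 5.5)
Your proof is correct and follows the paper's route. Both take a Lindenbaum extension of a seed set of $h$-images whose consistency comes from theorem reflection (Lemma \ref{lemma theorem reflect}) plus compositionality, obtain $f(y)=z$ from the maximality of $z$, and obtain $xR_{\mathbf{L}^\lhd_h}y$ from the right-to-left half of the boxlike condition. The paper's seed set also contains $\{h(\varphi) \mid h(\Box\varphi)\in x\}$, which your last step shows is already included in $h[z]$. Its appeal to ``the definition of $h$'' for accessibility is exactly the boxlike assumption you correctly flag as missing from the lemma's stated hypotheses.
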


\begin{proof}
Let $f(x) R_\mathbf{L} z$, which means $L(f(x)) \subseteq z$, so that $\{\varphi \mid h(\varphi) \in x\} \subseteq z$. Let 

$$y_0 = \{h(\varphi) \mid h(\Box \varphi) \in x\} \cup \{h(\psi) \mid \psi \in z\}$$

\noindent
Suppose $y_0$ is inconsistent, and so for some $h(\varphi), h(\psi) \in y_0$, $\vdash^{\mathbf{L}^\lhd_h} h(\varphi) \rightarrow \neg h(\psi)$. By Lemma \ref{lemma theorem reflect}, $h$ is theorem reflecting from $\mathbf{L}$ to $\mathbf{L}^\lhd_h$, so since $h$ is also compositional, $\vdash^\mathbf{L} \varphi \rightarrow \neg \psi$, which means $\varphi \rightarrow \neg \psi \in z$. Since $z$ is consistent, either $\varphi$ or $\psi \not\in z$, so suppose the former. By the construction of $y_0$, that means $h(\Box \varphi) \in x$, and so $\Box \varphi \in f(x)$, which means $\varphi \in L(f(x)) \subseteq z$, a contradiction. Suppose then the latter. Again, that means $h(\Box \psi) \in x$, so that $\Box \psi \in f(x)$, and thus $\psi \in z$, a contradiction once again.

Let us now, by Lindenbaum's lemma, extend $y_0$ to a maximal consistent $y$. Let $\varphi \in \mathcal{L}^\Box$, and $h(\Box \varphi) \in x$. Then, by construction, $h(\varphi) \in y$. Therefore, by the arbitrariness of $\varphi$ and the definition of $h$, $y \in R_{\mathbf{L}^\lhd_h}(x)$. Let now $\varphi \in z$. Then, by construction, $h(\varphi) \in y$, and therefore $\varphi \in f(y)$. Since $z$ is maximal, that means $z = f(y)$.
\end{proof}

We recall the following definitions and results (\citep{blackburn2001}).

\begin{definition}[Bounded morphism]\label{def bounded}
Let $F_1 = \langle W_1, R_1 \rangle$ and $F_2 = \langle W_2, R_2 \rangle$ be frames. Then, $f : W_1 \to W_2$ is a \emph{bounded morphism} from $F_1$ to $F_2$ when the following two conditions are met: $x R_1 y$ implies $f(x) R_2 f(y)$; if $f(x) R_2 z$, then there is $y$ such that $x R_1 y$ and $f(y) = z$. When there is a surjective bounded morphism from $F_1$ onto $F_2$, we say to $F_2$ is a \emph{bounded morphic image} of $F_1$ ($F_1 \twoheadrightarrow F_2$).
\end{definition}

\begin{definition}[Generated subframe]\label{def gen subframe}
Let $F_1 = \langle W_1, R_1 \rangle$ and $F_2 = \langle W_2, R_2 \rangle$ be frames. $F_2$ is a \emph{generated subframe} of $F_1$ ($F_2 \rightarrowtail F_1$) when $F_2$ is a subframe of $F_1$ and if $x \in W_2$ and $x R_1 y$, then $y \in W_2$.
\end{definition}

\begin{theorem}\label{theorem completeness 1}
Let $h$ be compositional, theorem preserving from $\mathbf{L}$ to $\mathbf{L}^\lhd_h$, truth preserving in $\mathbb{C}_\mathbf{L}$, boxlike w.r.t. $\mathbf{L}^\lhd_h$, and surjective up to equivalence w.r.t. $\mathbf{L}$. Let also $\mathbf{L}$ be a canonical normal modal logic which is complete w.r.t. $\mathbb{C}_\mathbf{L}$, and $\mathbf{L}^\lhd_h$ be sound w.r.t. $\mathbb{C}_\mathbf{L}$. Then, $\mathbf{L}_h^\lhd$ is complete w.r.t. $\mathbb{C}_\mathbf{L}$.
\end{theorem}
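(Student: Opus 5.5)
The plan is to show that $f$ is a bounded morphism from the canonical frame $F_{\mathbf{L}^\lhd_h}$ into the canonical frame $F_\mathbf{L} = \langle W_\mathbf{L}, R_\mathbf{L}\rangle$, use it to transport countermodels, and conclude with canonicity of $\mathbf{L}$.

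By Lemma \ref{lemma f is meaningful}, $f$ is well defined, since $h$ is compositional and theorem preserving. By Lemma \ref{lemma f injective}, the forth condition holds: if $xR_{\mathbf{L}^\lhd_h}y$ then $f(x)R_\mathbf{L}f(y)$. By Lemma \ref{lemma f preserves R}, the back condition holds. Both lemmas use the hypotheses of the statement, including that $\mathbf{L}$ is complete and $\mathbf{L}^\lhd_h$ is sound w.r.t. $\mathbb{C}_\mathbf{L}$. So $f$ is a bounded morphism.

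Lemma \ref{lemma f injective} also makes $f$ injective. Its image is therefore an isomorphic copy of $F_{\mathbf{L}^\lhd_h}$, and the back condition makes that image closed under $R_\mathbf{L}$-successors. Hence $F_{\mathbf{L}^\lhd_h}$ is isomorphic to a generated subframe of $F_\mathbf{L}$.

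Since $\mathbf{L}$ is canonical, $F_\mathbf{L} \in \mathbb{C}_\mathbf{L}$, so $F_\mathbf{L} \models \mathbf{L}$. Validity of $\mathcal{L}^\Box$-formulas is preserved under generated subframes (\cite{blackburn2001}), and also under isomorphism. Therefore $F_{\mathbf{L}^\lhd_h} \models \mathbf{L}$. Here $F_{\mathbf{L}^\lhd_h}$ is regarded as a frame for $\mathcal{L}^\Box$, where $\Box$ is interpreted by $R_{\mathbf{L}^\lhd_h}$. If $\mathbb{C}_\mathbf{L}$ is understood as the class of frames validating $\mathbf{L}$, this gives $F_{\mathbf{L}^\lhd_h} \in \mathbb{C}_\mathbf{L}$.

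For completeness, let $\varphi \in Form_{\mathcal{L}^\lhd}$ with $\not\vdash^{\mathbf{L}^\lhd_h} \varphi$. Then $\{\neg\varphi\}$ is consistent, and by Lindenbaum's lemma it extends to some $x \in W_{\mathbf{L}^\lhd_h}$. By the truth lemma for the canonical model of $\mathbf{L}^\lhd_h$ (clause 3 of Definition \ref{def canonical model}), $\varphi$ fails at $x$. So $F_{\mathbf{L}^\lhd_h}$ is a frame in $\mathbb{C}_\mathbf{L}$ refuting $\varphi$, and hence $\mathbb{C}_\mathbf{L} \not\models \varphi$.

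An alternative route avoids that truth lemma and uses only surjectivity up to equivalence. Pick $\psi$ with $\vdash^{\mathbf{L}^\lhd_h} \varphi \leftrightarrow h(\psi)$. Then $\not\vdash^{\mathbf{L}^\lhd_h} h(\psi)$, so theorem preservation gives $\not\vdash^\mathbf{L}\psi$. Completeness of $\mathbf{L}$ gives $\mathbb{C}_\mathbf{L} \not\models \psi$, and truth preservation gives $\mathbb{C}_\mathbf{L} \not\models h(\psi)$. Soundness of $\mathbf{L}^\lhd_h$ gives $\mathbb{C}_\mathbf{L} \models \varphi \leftrightarrow h(\psi)$. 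Hence $\mathbb{C}_\mathbf{L} \not\models \varphi$.

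This second route is much shorter, so the main work is deciding which argument the hypotheses are intended for. I expect the main obstacle to be the bookkeeping in the bounded-morphism route. One must confirm that Lemma \ref{lemma f preserves R} indeed yields $xR_{\mathbf{L}^\lhd_h}y$, which relies on the boxlike condition as a biconditional. One must also make precise that the canonical frame of $\mathbf{L}^\lhd_h$ carries a relational truth lemma. I would present the bounded-morphism argument as the structural reason and record the short argument as a check.
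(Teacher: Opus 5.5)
Your proposal is correct. Your second route is essentially the paper's own argument, and your first route is the genuinely different one.

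\textbf{The paper's proof and your second route.} The paper opens like your first route: it records that $f$ is an injective bounded morphism onto a generated subframe of $F_\mathbf{L}$, so that $F_{\mathbf{L}^\lhd_h}$ is a frame for $\mathbf{L}$. It never relies on this afterwards. It picks $\psi$ with $\vdash^{\mathbf{L}^\lhd_h}\varphi\leftrightarrow h(\psi)$ and extends $\neg h(\psi)$ to some $w\in W_{\mathbf{L}^\lhd_h}$. It then notes $\neg\psi\in f(w)$, so $F_\mathbf{L}\not\models\psi$, and concludes via canonicity, truth preservation and soundness, i.e.\ by your second route.

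Your version of that chain gets $\not\vdash^{\mathbf{L}}\psi$ directly from theorem preservation, and uses completeness of $\mathbf{L}$ instead of passing through $f(w)$ and $F_\mathbf{L}$. It is airtight, and it shows that compositionality, boxlikeness and canonicity are not needed for the conclusion. It is essentially the argument of Lemma \ref{lemma completeness 3}.

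\textbf{Your first route.} This one refutes $\varphi$ on $F_{\mathbf{L}^\lhd_h}$ itself. So it needs an $\mathcal{L}^\lhd$ truth lemma for the canonical model of $\mathbf{L}^\lhd_h$. That is not a hypothesis of the theorem; it is only presupposed through clause 3 of Definition \ref{def canonical model}, and the paper's proof avoids it. Also, the isomorphism of $F_{\mathbf{L}^\lhd_h}$ with its image needs the back condition together with injectivity, to reflect $R_\mathbf{L}$; injectivity alone does not give it.

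If you want this route to be self-contained, the truth lemma can be derived from the hypotheses:
\begin{enumerate}
\item Taking $h(p)=p$, $f$ is an isomorphism onto a generated submodel of the canonical model of $\mathbf{L}$. Hence $\psi$ holds at $x$ iff $\psi\in f(x)$ iff $h(\psi)\in x$.
\item Read truth preservation pointwise, as in the proof of Proposition \ref{prop translation}. Since $F_{\mathbf{L}^\lhd_h}\in\mathbb{C}_\mathbf{L}$, this transfers the equivalence to $h(\psi)$.
\item Surjectivity up to equivalence, plus soundness applied on $F_{\mathbf{L}^\lhd_h}$, extends it to every $\varphi$.
\end{enumerate}

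\textbf{What each buys.} The first route yields an honest relational canonical model for $\mathbf{L}^\lhd_h$ lying in $\mathbb{C}_\mathbf{L}$. The second is shorter and uses fewer hypotheses.
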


\begin{proof}
Following the above definitions and Lemmas \ref{lemma f is meaningful}--\ref{lemma f preserves R} we have that: $f$ is an injective bounded morphism from $F_{\mathbf{L}^\lhd_h}$ to $F_\mathbf{L}$; $f[F_{\mathbf{L}^\lhd_h}]$ (the image of $F_{\mathbf{L}^\lhd_h}$ under $f$) is a generated subframe of $F_\mathbf{L}$; and $f[F_{\mathbf{L}^\lhd_h}]$ is isomorphic to $F_{\mathbf{L}_h}$. Therefore, $F_{\mathbf{L}^\lhd_h}$ is a frame for $\mathbf{L}$. Suppose $\not\vdash^{\mathbf{L}^\lhd_h} \varphi$. Since $h$ is surjective up to equivalence, there is $\psi \in Form_{\mathcal{L}^\Box}$ such that $\vdash^{\mathbf{L}^\lhd_h} \varphi \leftrightarrow h(\psi)$, and thus $\not\vdash^{\mathbf{L}^\lhd_h} h(\psi)$. Thus, there is $w \in W_{\mathbf{L}^\lhd_h}$ such that $\neg h(\psi) \in w$. Since $f[F_{\mathbf{L}^\lhd_h}]$ is isomorphic to $F_{\mathbf{L}_h}$ and $h$ is compositional, by the definition of $f$ that means $\neg \psi \in f(w)$. Thus, $F_\mathbf{L} \not\models \psi$. Since $\mathbf{L}$ is canonical, that means $\mathbb{C}_\mathbf{L} \not\models \psi$, so since $h$ is truth-preserving in $\mathbb{C}_\mathbf{L}$, $\mathbb{C}_\mathbf{L} \not\models h(\psi)$. Since $\mathbf{L}^\lhd_h$ is sound w.r.t. $\mathbb{C}_\mathbf{L}$, $\mathbb{C}_\mathbf{L} \models \varphi \leftrightarrow h(\psi)$, and therefore $\mathbb{C}_\mathbf{L} \not\models \varphi$.
\end{proof}

Therefore, we have \emph{Completeness strategy $1$} of $\mathbf{L}^\lhd_h$ w.r.t. $\mathbb{C}_\mathbf{L}$ by:

\begin{enumerate}
\item defining a compositional mapping $h: \mathcal{L}^\Box \to \mathcal{L}^\lhd$ which is theorem preserving from $\mathbf{L}$ to $\mathbf{L}^\lhd_h$, surjective up to equivalence w.r.t. $\mathbf{L}$, truth-preserving in $\mathbb{C}_\mathbf{L}$, and boxlike w.r.t. $\mathbf{L}^\lhd_h$;
\item showing $\mathbf{L}$ is canonical and complete w.r.t. $\mathbb{C}_\mathbf{L}$;
\item showing $\mathbf{L}^\lhd_h$ is sound w.r.t. $\mathbb{C}_\mathbf{L}$.
\end{enumerate}

We note that the truth-preservation of $h$ and the soundness of $\mathbf{L}_h$ w.r.t. $\mathbb{C}_\mathbf{L}$ is needed because the proof of Lemma \ref{lemma f preserves R} requires $h$ to reflect theoremhood. In case $h$ is already theorem reflecting, the proof of Lemma \ref{lemma f preserves R} and Theorem \ref{theorem completeness 1} may be carried out without the assumption of its truth-preservation, and of the soundness of $\mathbf{L}^\lhd_h$.

Just as in the case of soundness, we may now address a strategy following from the existence of a translation in the opposite direction, that is, a translation $t: \mathcal{L}^\lhd \to \mathcal{L}^\Box$. First, let us recall how proofs by canonical models, in relational semantics for normal modal logic $\mathbf{L}$, go. First, one defines a set $W$ of all maximal consistent (in the logic of interest) sets of formulas. Then, one defines the accessibility relation $R$ in such a way that $xRy$ iff $L(x) \subseteq y$. Then, one adds to that frame a canonical valuation $V(p) = \{x \in W \mid p \in x\}$, and shows the Main Lemma -- that the resulting model $\mathcal{M}$ is such that, for any $x \in W$ and $\varphi \in Form_{\mathcal{L}^\Box}$, $\mathcal{M}, x \models \varphi$ iff $\varphi \in x$. At the last step, one shows the frame of this canonical model is in the class of frames determined by the logic, and that therefore, any non-theorem of $\mathbf{L}$ is invalidated by the canonical model, and thus by that class of frames.

As it may be seen in \cite{gilbertventuri2016}, \cite{humberstone1995}, \cite{steinsvold2011}, \cite{gilbertkubyshkina2021}, and \cite{yagoventuri2026}, offering a canonical model for the minimal logic of a modal language, in relational semantics, requires an adequate definition of the condition by which the accessibility relation should behave -- a condition which needs to ensure, should it be satisfied, that any formula which is in some sense necessary, belonging to a world $x$, needs to be in any world accessed by $x$. That may be a hard task, and such a definition is not always unique. The accessibility relation of $R$ of a canonical model is, analogously to the modal case, given by $xRy$ iff $L^\lhd(x) \subseteq y$, where $L^\lhd(x)$ is some set of formulas derivative of the modal formulas in $x$. The difficulty, of course, lies precisely in spelling out just what this set is. For example, in \cite{humberstone1995}, we find $L^\Delta(x) = \{\varphi \mid \Delta\ \varphi \in x\ \&\ \text{if}\ \vdash^{\mathbf{B}^\rho_\mathbf{K}} \varphi \rightarrow \psi,\ \text{then}\ \Delta \psi \in x\}$. In \cite{marcos2005} and \cite{gilbertventuri2016} we find $L^\circ(x) = \{\varphi \mid \text{for any}\ \psi,\ \circ(\varphi \wedge \psi) \in x\}$, and $L^\circ(x) = \{\varphi \mid \varphi \wedge \circ \varphi \in x\}$, respectively. A way of avoiding this explicit definition may be useful.

\begin{theorem}[Proof by canonical model]\label{theorem canonical model}
Let $\mathbf{L} \supseteq \mathbf{K}$ be a normal modal logic, and $\mathcal{M}_\mathbf{K} = \langle W_\mathbf{K}, R_\mathbf{K}, V_\mathbf{K} \rangle$ be the canonical model of $\mathbf{K}$. Define $\mathcal{M}' = \langle W', R', V_\mathbf{K} \rangle$, such that $W' = \{x \mid  x\ \text{is maximal}\ \mathbf{L} \text{-consistent}\}$ and $R' = R_\mathbf{K} \restriction_{W'}$. Then, $\mathcal{M}'$ is a canonical model for $\mathbf{L}$.
\end{theorem}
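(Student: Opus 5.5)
The plan is to verify the three clauses of Definition \ref{def canonical model} directly for $\mathcal{M}'$. The only substantial clause is the truth lemma (clause 3). The key observation is that the canonical relation $R_\mathbf{K}$ is given by the logic-independent condition $x R_\mathbf{K} y$ iff $L(x) \subseteq y$. Restricting it to $W'$ therefore yields exactly the relation one would define for the canonical model of $\mathbf{L}$.

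First I will check that $W' \subseteq W_\mathbf{K}$, so that $R_\mathbf{K}\restriction_{W'}$ and $V_\mathbf{K}$ make sense on $W'$. Since $\mathbf{L} \supseteq \mathbf{K}$, any set that is $\mathbf{L}$-consistent is $\mathbf{K}$-consistent: a $\mathbf{K}$-derivation of $\bigwedge \alpha_i \rightarrow \bot$ is also an $\mathbf{L}$-derivation. Maximality carries over unchanged, since for each $\varphi$ either $\varphi$ or $\neg\varphi$ belongs to $x$. Hence every maximal $\mathbf{L}$-consistent set is a maximal $\mathbf{K}$-consistent set. Clause 1 then holds by definition of $W'$. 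Clause 2 holds because $V_\mathbf{K}(p) \cap W' = \{x \in W' \mid p \in x\}$, which is the valuation $\mathcal{M}'$ actually uses on its domain.

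For clause 3, I will argue by induction on $\varphi \in Form_{\mathcal{L}^\Box}$ that $\mathcal{M}', x \models \varphi$ iff $\varphi \in x$ for all $x \in W'$. The atomic case is clause 2. The Boolean cases follow from the maximality and consistency of $x$. In the case $\Box\varphi$, the direction from left to right goes as follows. If $\Box\varphi \in x$ and $x R' y$, then $\varphi \in L(x) \subseteq y$, and the induction hypothesis gives $\mathcal{M}', y \models \varphi$.

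The converse direction is an existence lemma relativised to $\mathbf{L}$, and I expect it to be the main obstacle. Suppose $\Box\varphi \notin x$. I claim that $L(x) \cup \{\neg\varphi\}$ is $\mathbf{L}$-consistent. If it were not, there would be $\psi_1, \dots, \psi_n \in L(x)$ with $\vdash^\mathbf{L} \bigwedge_i \psi_i \rightarrow \varphi$. Necessitation and the $\mathbf{K}$-axiom, both available since $\mathbf{L}$ is a normal extension of $\mathbf{K}$, give $\vdash^\mathbf{L} \bigwedge_i \Box\psi_i \rightarrow \Box\varphi$. As each $\Box\psi_i \in x$, this yields $\Box\varphi \in x$, a contradiction. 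By Lindenbaum's lemma for $\mathbf{L}$, we extend this set to a maximal $\mathbf{L}$-consistent $y$. Then $y \in W'$, and $L(x) \subseteq y$, so $x R_\mathbf{K} y$ and hence $x R' y$. Since $\neg\varphi \in y$, the induction hypothesis gives $\mathcal{M}', y \not\models \varphi$, and therefore $\mathcal{M}', x \not\models \Box\varphi$.

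The essential point in this step is that the witness $y$ must be built as an $\mathbf{L}$-maximal set and not merely a $\mathbf{K}$-maximal one, so that it lands in $W'$. This is why consistency is checked relative to $\mathbf{L}$ and normality of $\mathbf{L}$ is used. Once this step is done, all three clauses hold and $\mathcal{M}'$ is canonical for $\mathbf{L}$.
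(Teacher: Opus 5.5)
Your proof is correct, but it takes a different route from the paper's.

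\textbf{Your route.} You re-prove the existence lemma inside $\mathbf{L}$. You show that $L(x)\cup\{\neg\varphi\}$ is $\mathbf{L}$-consistent using Necessitation and the $\mathrm{K}$-axiom in $\mathbf{L}$. You then extend it by Lindenbaum's lemma for $\mathbf{L}$, so the witness lies in $W'$ by construction.

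\textbf{The paper's route.} The paper treats the truth lemma of $\mathcal{M}_\mathbf{K}$ as a black box. When $\Box\varphi\notin x$, it takes the $R_\mathbf{K}$-successor $y$ with $\mathcal{M}_\mathbf{K},y\not\models\varphi$ supplied by the canonical model of $\mathbf{K}$. It then shows that any $R_\mathbf{K}$-successor of a maximal $\mathbf{L}$-consistent set already lies in $W'$. Indeed, if some $\psi\in y$ had $\vdash^\mathbf{L}\neg\psi$, Necessitation would give $\Box\neg\psi\in x$, hence $\neg\psi\in y$, a contradiction. So $W'$ is closed under $R_\mathbf{K}$, $\mathcal{M}'$ is a generated submodel of $\mathcal{M}_\mathbf{K}$, and truth transfers from $\mathcal{M}_\mathbf{K}$. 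This means your remark that the witness ``must be built as an $\mathbf{L}$-maximal set'' holds for your route but is not needed in general: any $\mathbf{K}$-maximal witness is automatically $\mathbf{L}$-maximal here.

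\textbf{What each buys.}
\begin{itemize}
\item The paper's argument needs only one new ingredient beyond the already-established $\mathbf{K}$ truth lemma: closure of $\mathbf{L}$ under Necessitation. It needs no Lindenbaum argument for $\mathbf{L}$ and no fresh existence lemma. It also yields the structural fact that the $\mathbf{L}$-canonical model sits inside $\mathcal{M}_\mathbf{K}$ as a generated submodel, which is the spirit of the paper's later constructions.
\item Your argument is self-contained and textbook-standard. It makes explicit from the outset that $R'$ coincides with the usual canonical relation of $\mathbf{L}$, which the paper only remarks on after its proof.
\end{itemize}
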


\begin{proof}
First, notice $W' \subseteq W_\mathbf{K}$, since any $\mathbf{L}$-consistent set of formulas must be $\mathbf{K}$-consistent. Now, suppose $\vdash^\mathbf{L} \varphi$. Then, for any $x \in W'$, $\varphi \in x$. For any non-modal formula, clearly, $\mathcal{M}', x \models \varphi$ iff $\varphi \in x$. Take the case of $\Box \varphi$ then. If $\Box \varphi \in x$, then $\mathcal{M}_\mathbf{K}, x \models \Box \varphi$, and since $R' \subseteq R_\mathbf{K}$, that means $\mathcal{M}', x \models \Box \varphi$. If now $\Box \varphi \not\in x$, then $\mathcal{M}_\mathbf{K}, x \not\models \Box \varphi$, which means for some $y \in R_\mathbf{K}(x)$, $\mathcal{M}_\mathbf{K}, y \not\models \varphi$. Suppose $y \not\in W'$. Then, for some $\psi$, $\psi \in y$ and $\vdash^\mathbf{L} \neg \psi$. Since $y \in R_\mathbf{K}(x)$, that means $L(x) \subseteq y$. Since $x$ is maximal $\mathbf{L}$-consistent, $\neg \psi \in x$, and since $\vdash^\mathbf{L} \neg \psi$, by \emph{Necessitation} we get $\vdash^\mathbf{L} \Box \neg \psi$. Thus, $\Box \neg \psi \in x$, which means $\neg \psi \in y$, contradicting $y$'s consistency. Therefore, $y \in W'$, and by induction hypothesis, $\mathcal{M}', y \not\models \varphi$, so that $\mathcal{M}', x \not\models \Box \varphi$.
\end{proof}

The construction of the canonical model for $\mathbf{L}$ differs from the usual, but it nevertheless yields the same resulting model. The relation is intensionally defined in the same way for the canonical model of $\mathbf{K}$ and that of, in the usual construction, $\mathbf{L}$. Furthermore, since $\mathbf{K} \subseteq \mathbf{L}$, if any set of formulas is maximal $\mathbf{L}$-consistent, then it is maximal consistent, and therefore $W ' = W_\mathbf{L} \subseteq W_\mathbf{K}$, as defined in Theorem \ref{theorem canonical model}. It follows easily that $R_\mathbf{L} = R_\mathbf{K} \restriction_{W'} = R'$. We shall now see how to do something similar, using the canonical model for $\mathbf{K}$ to build a model for a logic $\mathbf{L}^\lhd$ of $\mathcal{L}^\lhd$.

Let $\mathcal{M}_\mathbf{K} = \langle W_\mathbf{K}, R_\mathbf{K}, V_\mathbf{K} \rangle$ be the canonical model of $\mathbf{K}$. We may construct a model for $\mathbf{L}^\lhd$ from it such that, if it makes $\varphi$ true, then $\vdash^{\mathbf{L}^\lhd} \varphi$. Let $\mathbf{L}_t$ be the smallest logic containing $t[\mathbf{L}^\lhd] = \{t(\varphi) \mid\ \vdash^{\mathbf{L}^\lhd} \varphi\}$ -- that is, it is the closure of $t[\mathbf{L}^\lhd]$ under \emph{Modus Ponens}, \emph{Necessitation} and Uniform Substitution. Define $\mathcal{M}_{\mathbf{L}_t}$ as in Theorem \ref{theorem canonical model}, that is, $\mathcal{M}_{\mathbf{L}_t} = \langle W_{\mathbf{L}_t}, R_{\mathbf{L}_t}, V_\mathbf{K} \rangle$, where $W_{\mathbf{L}_t} = \{x \mid  x\ \text{is maximal}\ \mathbf{L}_t \text{-consistent}\}$ and $R_{\mathbf{L}_t} = R_\mathbf{K} \restriction_{W_{\mathbf{L}_t}}$.

\begin{definition}[Pseudonormal modal logic]
Let $\mathbf{L}^\lhd$ be a modal logic of $\mathcal{L}^\lhd$ and $t : \mathcal{L}^\lhd \to \mathcal{L}^\Box$. We say $\mathbf{L}^\lhd$ is \emph{pseudonormal w.r.t. $\mathbf{L}_t$} if for any $\varphi \in Form_{\mathcal{L}^\lhd}$, if $t(\varphi) = \Box \psi$ for some $\psi \in Form_{\mathcal{L}^\Box}$, then whenever $\vdash^{\mathbf{L}_t} \psi$, if there is $\beta \in Form_{\mathcal{L}^\lhd}$ such that $\vdash^{\mathbf{L}_t} t(\beta) \leftrightarrow \psi$, then $\vdash^{\mathbf{L}^\lhd} \beta$ implies $\vdash^{\mathbf{L}^\lhd} \varphi$. 
\end{definition}

In other words, a logic is pseudonormal w.r.t. $\mathbf{L}_t$ if it has some form of the rule of \emph{Necessitation} whenever formulas whose outermost operator is $\Box$ is in the image of $t$ -- that is, if $t$ reflects, in some sense, the rule from its image to its domain. Notice, however, that this requirement is only necessary when pseudonormality is not vacuously satisfied.

\begin{proposition}\label{proposition surjective theorem reflect}
Let $t : \mathcal{L}^\lhd \to \mathcal{L}^\Box$ be compositional and surjective up to equivalence w.r.t. $\mathbf{L}^\lhd$, and $\mathbf{L}^\lhd$ be a logic of $\mathcal{L}^\lhd$ pseudonormal w.r.t. $\mathbf{L}_t$. Then $t$ is theorem reflecting from $\mathbf{L}^\lhd$ to $\mathbf{L}_t$.
\end{proposition}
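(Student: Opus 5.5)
The plan is to prove, for every $\varphi \in Form_{\mathcal{L}^\lhd}$, that $\vdash^{\mathbf{L}_t} t(\varphi)$ implies $\vdash^{\mathbf{L}^\lhd} \varphi$. Since $\mathbf{L}_t$ is the closure of $t[\mathbf{L}^\lhd]$ under \emph{Modus Ponens}, \emph{Necessitation} and Uniform Substitution, I will argue by induction on the length of an $\mathbf{L}_t$-derivation of $t(\varphi)$. The inductive claim is slightly more general than the statement: \emph{every} formula $\chi$ occurring in such a derivation is $\mathbf{L}_t$-equivalent to $t(\beta)$ for some $\beta$ with $\vdash^{\mathbf{L}^\lhd} \beta$. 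Surjectivity up to equivalence guarantees that some $\beta$ with $\vdash^{\mathbf{L}_t} t(\beta) \leftrightarrow \chi$ always exists, so the content of the claim is that such a $\beta$ can be chosen to be a theorem of $\mathbf{L}^\lhd$. At the end I still have to pass from ``$t(\varphi)$ is $\mathbf{L}_t$-equivalent to $t(\beta)$ with $\beta$ an $\mathbf{L}^\lhd$-theorem'' back to $\vdash^{\mathbf{L}^\lhd} \varphi$, and I return to this below.

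The cases go as follows. For an initial element $t(\beta)$ with $\vdash^{\mathbf{L}^\lhd}\beta$ the claim is immediate.

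For \emph{Modus Ponens} from $\chi$ and $\chi \to \chi'$, the induction hypothesis gives $\beta_1, \beta_2$. Compositionality of $t$ lets me build $\beta_1 \to \beta'$, where $t(\beta')$ is equivalent to $\chi'$ by surjectivity. I then need $\beta_2$ and $\beta_1 \to \beta'$ to be $\mathbf{L}^\lhd$-interderivable, so that \emph{Modus Ponens} in $\mathbf{L}^\lhd$ yields $\beta'$.

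Uniform substitution is handled the same way, provided $t$ commutes with substitution. This holds because $t$ is compositional with $t(p)=p$.

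For \emph{Necessitation}, $\chi' = \Box\chi$. By surjectivity pick $\varphi'$ with $t(\varphi')$ equivalent to $\Box\chi$; in the intended case $t(\varphi') = \Box\psi$ with $\psi$ equivalent to $\chi$. By the induction hypothesis $\vdash^{\mathbf{L}_t}\psi$, and there is $\beta$ with $\vdash^{\mathbf{L}_t} t(\beta)\leftrightarrow\psi$ and $\vdash^{\mathbf{L}^\lhd}\beta$. Pseudonormality w.r.t. $\mathbf{L}_t$ then gives exactly $\vdash^{\mathbf{L}^\lhd}\varphi'$. This is the step pseudonormality is designed for.

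The main obstacle is the recurring gap between $\mathbf{L}_t$-equivalence of translations and $\mathbf{L}^\lhd$-equivalence of the preimages. Every case, and the final step, needs $\vdash^{\mathbf{L}_t} t(\alpha)\leftrightarrow t(\gamma)$ to imply $\vdash^{\mathbf{L}^\lhd} \alpha\leftrightarrow\gamma$, and this is itself an instance of theorem reflection. The first thing I would try is to run the induction simultaneously for all formulas of the form $t(\alpha)\leftrightarrow t(\gamma)$, since compositionality makes these translations of the $\mathcal{L}^\lhd$-formula $\alpha\leftrightarrow\gamma$. If that circularity cannot be closed, I would fall back on choosing the witnesses $\beta$ canonically (say, a fixed preimage for each $\mathbf{L}_t$-equivalence class) and checking the propositional cases by hand. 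It may also turn out that the intended reading is that the argument only needs to track formulas literally in the image of $t$, in which case this difficulty largely disappears. I will flag it explicitly either way.
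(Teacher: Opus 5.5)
Your proposal does not prove the statement, because all of its content sits in the step you leave open. The invariant you propagate is vacuous. Every $\mathbf{L}_t$-theorem $\chi$ is $\mathbf{L}_t$-equivalent to $t(p\to p)$, which is a tautology by compositionality, and $p\to p$ is an $\mathbf{L}^\lhd$-theorem. So the induction establishes nothing. The final passage from $\vdash^{\mathbf{L}_t} t(\varphi)\leftrightarrow t(\beta)$ and $\vdash^{\mathbf{L}^\lhd}\beta$ to $\vdash^{\mathbf{L}^\lhd}\varphi$ is, with $\beta=p\to p$, the whole proposition.

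The individual cases have the same defect:
\begin{itemize}
\item The \emph{Modus Ponens} case needs $\mathbf{L}_t$-equivalences of translations to be reflected into $\mathbf{L}^\lhd$.
\item Commutation of $t$ with substitution does not follow from compositionality, which constrains only the Boolean clauses. In any case, substitution in $\mathbf{L}_t$ plugs in arbitrary $\mathcal{L}^\Box$-formulas outside the image of $t$.
\item In the \emph{Necessitation} case you assume the ``intended case'' that the chosen preimage of $\Box\chi$ is literally $\Box$-headed. That is the only situation in which pseudonormality says anything.
\end{itemize}
Your fallbacks do not close this. Choosing one preimage per $\mathbf{L}_t$-equivalence class is still circular. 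Tracking only formulas in the image of $t$ is impossible, since \emph{Necessitation} turns $t(\beta)$ into $\Box t(\beta)$, which in general is not in the image.

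The circularity you flag is real, and the paper's own proof passes over it. The paper inducts directly on the claim that $\vdash^{\mathbf{L}_t}t(\varphi)$ implies $\vdash^{\mathbf{L}^\lhd}\varphi$. For a final \emph{Modus Ponens} step from $\psi$ and $\psi\to t(\varphi)$, it picks $\alpha$ with $\vdash^{\mathbf{L}_t}\psi\leftrightarrow t(\alpha)$ and applies the induction hypothesis to $t(\alpha)$ and $t(\alpha\to\varphi)$. For a final \emph{Necessitation} step it invokes pseudonormality. But those derivations go through a derivation of $\psi\leftrightarrow t(\alpha)$ and need not be shorter. The base case (``$t(\varphi)\in t[\mathbf{L}^\lhd]$ is trivial'') tacitly assumes $t$ is injective, and Uniform Substitution steps are not treated.

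The gap cannot be closed from the stated hypotheses, because the statement fails as given. Let $t$ be the identity on atoms, commute with $\neg$ and $\wedge$, and set $t(\lhd\varphi)=\Box t(\varphi)\wedge\top$. Let $\mathbf{L}^\lhd$ be the closure under \emph{Modus Ponens} and Uniform Substitution of the tautologies together with $\lhd(p\to q)\to(\lhd p\to\lhd q)$.
\begin{itemize}
\item The translated axiom is truth-functionally equivalent to $\mathrm{K}$, and every translated theorem is $\mathbf{K}$-valid (read $\lhd$ as $\Box$). Hence $\mathbf{L}_t=\mathbf{K}$.
\item $t$ is compositional. It is surjective up to equivalence, by replacement of equivalents in $\mathbf{K}$ together with $\Box\chi\leftrightarrow(\Box\chi\wedge\top)$.
\item No $t(\varphi)$ is $\Box$-headed, so pseudonormality holds vacuously.
\end{itemize}
Yet $\vdash^{\mathbf{L}_t}\Box\top\wedge\top$, which is $t(\lhd\top)$, while $\not\vdash^{\mathbf{L}^\lhd}\lhd\top$ (evaluate every $\lhd$-formula as false).

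A correct version needs a hypothesis that controls \emph{Necessitation} steps landing outside the image of $t$. The standard syntactic route is a fixed back-translation $s:\mathcal{L}^\Box\to\mathcal{L}^\lhd$ with $\vdash^{\mathbf{L}^\lhd}s(t(\varphi))\leftrightarrow\varphi$ that preserves \emph{Modus Ponens}, \emph{Necessitation} and Uniform Substitution. Then the invariant ``$\vdash^{\mathbf{L}^\lhd}s(\chi)$'' is non-vacuous and can be checked line by line.
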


\begin{proof}
Let $\varphi \in Form_{\mathcal{L}^\lhd}$. We show that if $\vdash^{\mathbf{L}_t} t(\varphi)$, then we must $\vdash^{\mathbf{L}^\lhd} \varphi$. If $t(\varphi) \in t[\mathbf{L}^\lhd]$, then the conclusion is trivial. Otherwise, we proceed by induction on the length of the proof of $t(\varphi)$. The base case is trivial. For the induction step, let now $\vdash^{\mathbf{L}_t} t(\varphi)$, and let the length of the proof be $n+1$. Then either $\vdash^{\mathbf{L}_t} \psi \rightarrow t(\varphi)$, $\vdash^{\mathbf{L}_t} \psi$, and the last step is an application of \emph{Modus Ponens}, or $t(\varphi) = \Box \gamma$ for some $\gamma \in Form_{\mathcal{L}^\Box}$, $\vdash^{\mathbf{L}_t} \gamma$, and the last step is an application of \emph{Necessitation}. If the former, since $t$ is surjective up to equivalence w.r.t. $\mathbf{L}^\lhd$ there is $\alpha \in Form_{\mathcal{L}^\lhd}$ such that $\vdash^{\mathbf{L}_t} \psi \leftrightarrow t(\alpha)$, and thus $\vdash^{\mathbf{L}_t} t(\alpha) \rightarrow t(\varphi)$ and $\vdash^{\mathbf{L}_t} t(\alpha)$. Since $t$ is compositional, $t(\alpha) \rightarrow t(\varphi) = t(\alpha \rightarrow \varphi)$, so by induction hypothesis, $\vdash^{\mathbf{L}^\lhd} \alpha \rightarrow \varphi$ and $\vdash^{\mathbf{L}^\lhd} \alpha$, so by \emph{Modus Ponens}, $\vdash^{\mathbf{L}^\lhd} \varphi$. If the latter, since $h$ is surjective up to equivalence there is $\beta \in Form_{\mathcal{L}^\lhd}$ such that $\vdash^{\mathbf{L}_t} t(\beta) \leftrightarrow \gamma$, so that $\vdash^{\mathbf{L}_t} t(\beta)$, and by induction hypothesis, $\vdash^{\mathbf{L}^\lhd} \beta$. Since $\mathbf{L}^\lhd$ is pseudonormal w.r.t. $\mathbf{L}_t$, that means $\vdash^{\mathbf{L}^\lhd} \varphi$.
\end{proof}

Notice, in general, the fact whether $\mathbf{L}^\lhd$ is pseudonormal w.r.t. $\mathbf{L}_t$ relies on both the specific axioms and inference rules of $\mathbf{L}^\lhd$, and on the definition of $t$. Many times, $\mathbf{L}^\lhd$ may be vacuously pseudonormal w.r.t. $\mathbf{L}_t$, that is, $t$ may not have a formula of the form $\Box \varphi$ in its image. For example, all of the operators previously considered -- $\circ$, $\Delta$, $W$ and $U$ -- are definable in $\mathcal{L}^\Box$ by schema whose main operator is either conjunction or disjunction, so that the natural truth-preserving translations from each of their respective languages to $\mathcal{L}^\Box$ has no formula whose main operator is $\Box$ in their image, which means the logics of their languages are, in general, vacuously pseudonormal.

Another fact that may be pointed out is the following:

\begin{proposition}\label{proposition surjectivity}
Let $\mathbf{L}^\lhd$ be a logic of $\mathcal{L}^\lhd$ and $t : \mathcal{L}^\lhd \to \mathcal{L}^\Box$ be compositional. If for any $\varphi \in Form_{\mathcal{L}^\lhd}$, $\vdash^{\mathbf{L}_t} t(\lhd \varphi) \leftrightarrow \Box t(\varphi)$, then $t$ is surjective up to equivalence w.r.t. $\mathbf{L}^\lhd$.
\end{proposition}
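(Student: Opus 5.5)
We show that every $\psi \in Form_{\mathcal{L}^\Box}$ has a $\varphi \in Form_{\mathcal{L}^\lhd}$ with $\vdash^{\mathbf{L}_t} \psi \leftrightarrow t(\varphi)$. The proof is by induction on the complexity of $\psi$. The construction of $\varphi$ is simply the ``inverse'' of the translation, replacing $\Box$ by $\lhd$.

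\emph{Atoms.} Compositionality gives $t(p) = p'$. Here we assume, as in the compositional clauses displayed before Lemma \ref{lemma f is meaningful}, that the map $p \mapsto p'$ is onto the propositional variables of $\mathcal{L}^\Box$; in practice it is the identity. So for an atom $q$ we choose a $p$ with $t(p) = q$, and $\vdash^{\mathbf{L}_t} q \leftrightarrow t(p)$ holds trivially. This surjectivity on atoms is the one point where the statement needs a tacit reading of ``compositional''. I would make it explicit, noting that it holds in every intended example.

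\emph{Boolean cases.} Suppose $\psi = \neg \psi_1$. The induction hypothesis gives $\varphi_1$ with $\vdash^{\mathbf{L}_t} \psi_1 \leftrightarrow t(\varphi_1)$. Since $\mathbf{L}_t$ contains classical propositional logic and is closed under Modus Ponens, we get $\vdash^{\mathbf{L}_t} \neg\psi_1 \leftrightarrow \neg t(\varphi_1)$. By compositionality $\neg t(\varphi_1) = t(\neg\varphi_1)$. Conjunction is handled the same way, using $t(\varphi_1 \wedge \varphi_2) = t(\varphi_1) \wedge t(\varphi_2)$.

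\emph{Modal case.} This is the only step that uses the hypothesis. Let $\psi = \Box \psi_1$, and take $\varphi_1$ with $\vdash^{\mathbf{L}_t} \psi_1 \leftrightarrow t(\varphi_1)$. Since $\mathbf{L}_t$ is a normal modal logic (it is closed under Necessitation and contains $\mathbf{K}$), the derived rule of replacement of equivalents under $\Box$ applies. From $\vdash \psi_1 \leftrightarrow t(\varphi_1)$ we get $\vdash \Box(\psi_1 \to t(\varphi_1))$ and $\vdash \Box(t(\varphi_1) \to \psi_1)$, and then axiom K yields $\vdash^{\mathbf{L}_t} \Box\psi_1 \leftrightarrow \Box t(\varphi_1)$. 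The hypothesis of the proposition gives $\vdash^{\mathbf{L}_t} t(\lhd\varphi_1) \leftrightarrow \Box t(\varphi_1)$. Chaining the two biconditionals propositionally gives $\vdash^{\mathbf{L}_t} \Box\psi_1 \leftrightarrow t(\lhd\varphi_1)$, so $\varphi = \lhd\varphi_1$ works.

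\emph{Main obstacle.} The only real obstacle is confirming that $\mathbf{L}_t$ contains $\mathbf{K}$, so that the replacement argument is available. The definition of $\mathbf{L}_t$ as the closure under Modus Ponens, Necessitation and Uniform Substitution, in the ambient normal setting of Theorem \ref{theorem canonical model}, should be read as including the $\mathbf{K}$ axioms. I would state this explicitly before the induction.
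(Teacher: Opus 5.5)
Your proof is correct and follows the paper's argument: induction on $\mathcal{L}^\Box$-formulas, Boolean cases by compositionality, and the $\Box$ case handled by lifting $\gamma \leftrightarrow t(\alpha)$ to $\Box\gamma \leftrightarrow \Box t(\alpha)$ and chaining with the hypothesis, so that $\lhd\alpha$ is the witness. Your two caveats are legitimate tacit assumptions of the paper, not gaps in your argument: the map $p \mapsto p'$ must be onto the atoms, and the replacement step needs the $\mathrm{K}$ axiom in $\mathbf{L}_t$, not only \emph{Necessitation} as the paper's wording suggests.
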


\begin{proof}
By induction on the complexity of $\varphi \in Form_{\mathcal{L}^\Box}$, we show there is $\psi \in Form_{\mathcal{L}^\lhd}$ such that $\vdash^{\mathbf{L}_t} \varphi \leftrightarrow t(\psi)$. By compositionality, the non-modal cases are straightforward. Let $\varphi = \Box \gamma$ for some $\gamma \in Form_{\mathcal{L}^\Box}$. By induction hypothesis there is $\alpha \in Form_{\mathcal{L}^\lhd}$ such that $\vdash^{\mathbf{L}_t} \gamma \leftrightarrow t(\alpha)$, so by \emph{Necessitation}, $\vdash^{\mathbf{L}_t} \Box \gamma \leftrightarrow \Box t(\alpha)$. But by assumption, $\vdash^{\mathbf{L}_t} t(\lhd \alpha) \leftrightarrow \Box t(\alpha)$, and thus $\vdash^{\mathbf{L}_t} \Box \gamma \leftrightarrow t(\lhd \alpha)$.
\end{proof}

\begin{proposition}\label{proposition transitive theorem reflect}
Let $\mathbf{L} \subseteq \mathbf{L}'$ be normal modal logics, and $\mathbf{L}^\lhd$ be a logic of $\mathcal{L}^\lhd$. If $t : \mathcal{L}^\lhd \to \mathcal{L}^\Box$ is theorem reflecting from $\mathbf{L}^\lhd$ to $\mathbf{L}'$, then $t$ is also theorem reflecting from $\mathbf{L}^\lhd$ to $\mathbf{L}$.
\end{proposition}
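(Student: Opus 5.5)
The plan is to argue directly from the definition of theorem reflection, using only the inclusion $\mathbf{L} \subseteq \mathbf{L}'$. Theorem reflection from $\mathbf{L}^\lhd$ to a normal modal logic is an implication whose antecedent is a theoremhood claim in the target logic. Enlarging the target logic can only make the antecedent easier to satisfy. So reflection into the larger logic $\mathbf{L}'$ should transfer to reflection into the smaller logic $\mathbf{L}$.

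Concretely, the steps are as follows. Fix $\varphi \in Form_{\mathcal{L}^\lhd}$ and suppose $\vdash^{\mathbf{L}} t(\varphi)$. Since $\mathbf{L} \subseteq \mathbf{L}'$ as sets of theorems (both taken as logics$_{PT}$ closed under the same normal rules), every theorem of $\mathbf{L}$ is a theorem of $\mathbf{L}'$, hence $\vdash^{\mathbf{L}'} t(\varphi)$. By the assumption that $t$ is theorem reflecting from $\mathbf{L}^\lhd$ to $\mathbf{L}'$, we obtain $\vdash^{\mathbf{L}^\lhd} \varphi$. Since $\varphi$ was arbitrary, $t$ is theorem reflecting from $\mathbf{L}^\lhd$ to $\mathbf{L}$.

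The only point needing care, and the nearest thing to an obstacle, is making sure that ``$\mathbf{L} \subseteq \mathbf{L}'$'' is read as inclusion of theorem sets, i.e.\ $\vdash^{\mathbf{L}} \psi \Rightarrow \vdash^{\mathbf{L}'} \psi$. Under the paper's Definition of logic$_{PT}$ a logic just is its set of theorems, so this reading is immediate. No compositionality, surjectivity or pseudonormality assumptions on $t$ are needed. In the intended application, $\mathbf{L}$ will typically be $\mathbf{L}_t$ and $\mathbf{L}'$ some extension for which reflection was established, e.g.\ via Proposition \ref{proposition surjective theorem reflect}. The argument does not depend on this.
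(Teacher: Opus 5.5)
Your proposal is correct and is essentially the paper's own argument: from $\vdash^{\mathbf{L}} t(\varphi)$ you pass to $\vdash^{\mathbf{L}'} t(\varphi)$ using the inclusion $\mathbf{L} \subseteq \mathbf{L}'$, and then apply reflection into $\mathbf{L}'$ to get $\vdash^{\mathbf{L}^\lhd} \varphi$. Your observation that no further assumptions on $t$ are needed is also accurate.
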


\begin{proof}
Since $\mathbf{L} \subseteq \mathbf{L}'$, $\vdash^\mathbf{L} \varphi$ implies $\vdash^{\mathbf{L}'} \varphi$. Let $\psi \in Form_{\mathcal{L}^\lhd}$. Since $t$ is theorem reflecting from $\mathbf{L}^\lhd$ to $\mathbf{L}'$, $\vdash^{\mathbf{L}'} t(\psi)$ implies $\vdash^{\mathbf{L}^\lhd} \psi$. Thus, we get that $\vdash^\mathbf{L} t(\psi)$ implies $\vdash^{\mathbf{L}^\lhd} \psi$.
\end{proof}

Proposition \ref{proposition surjective theorem reflect} offers a path to checking whether a translation $t : \mathcal{L}^\lhd \to \mathcal{L}^\Box$ is theorem reflecting, and Propositions \ref{proposition surjectivity} and \ref{proposition transitive theorem reflect} may offer a way to show the necessary conditions for that.

\begin{lemma}\label{lemma model for l}
Let $t$ be theorem reflecting from $\mathbf{L}^\lhd$ to $\mathbf{L}_t$ and truth-preserving in $\mathbb{C}_{\mathbf{L}_t}$, and let $\mathbf{L}_t$ be canonical. For any $\varphi \in Form_{\mathcal{L}^\lhd}$, if $\mathcal{M}_{\mathbf{L}_t} \models \varphi$, then $\vdash^{\mathbf{L}^\lhd} \varphi$.
\end{lemma}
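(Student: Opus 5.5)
We have to interpret the statement "$\mathcal{M}_{\mathbf{L}_t} \models \varphi$" for $\varphi \in Form_{\mathcal{L}^\lhd}$. The plan is to evaluate $\varphi$ on the frame $\langle W_{\mathbf{L}_t}, R_{\mathbf{L}_t}\rangle$ with valuation $V_\mathbf{K}$, so that the semantics of $\lhd$ is the intended relational one. We then transfer the question to the normal modal language via $t$. The argument is a contraposition:
\[
\not\vdash^{\mathbf{L}^\lhd} \varphi \;\Rightarrow\; \not\vdash^{\mathbf{L}_t} t(\varphi) \;\Rightarrow\; \mathcal{M}_{\mathbf{L}_t}, x \not\models t(\varphi) \;\Rightarrow\; \mathcal{M}_{\mathbf{L}_t}, x \not\models \varphi .
\]

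First step. Assume $\not\vdash^{\mathbf{L}^\lhd}\varphi$. Since $t$ is theorem reflecting from $\mathbf{L}^\lhd$ to $\mathbf{L}_t$, we get $\not\vdash^{\mathbf{L}_t} t(\varphi)$. Then $\{\neg t(\varphi)\}$ is $\mathbf{L}_t$-consistent, and Lindenbaum's lemma in $\mathcal{L}^\Box$ extends it to a maximal $\mathbf{L}_t$-consistent set $x \in W_{\mathbf{L}_t}$.

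Second step. By Theorem \ref{theorem canonical model}, applied with $\mathbf{L}_t$ in place of $\mathbf{L}$ (it is normal, being closed under \emph{Modus Ponens}, \emph{Necessitation} and Uniform Substitution, and it contains $\mathbf{K}$ provided $t[\mathbf{L}^\lhd]$ generates the $\mathbf{K}$ axioms together with the closure), $\mathcal{M}_{\mathbf{L}_t}$ is a canonical model for $\mathbf{L}_t$. The truth lemma then gives $\mathcal{M}_{\mathbf{L}_t}, x \models \neg t(\varphi)$, i.e.\ $\mathcal{M}_{\mathbf{L}_t}, x \not\models t(\varphi)$.

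Third step. Canonicity of $\mathbf{L}_t$ places the frame $\langle W_{\mathbf{L}_t}, R_{\mathbf{L}_t}\rangle$ in $\mathbb{C}_{\mathbf{L}_t}$. Truth-preservation of $t$ in $\mathbb{C}_{\mathbf{L}_t}$ is pointwise on every model over a frame of that class, so $\mathcal{M}_{\mathbf{L}_t}, x \models t(\varphi)$ iff $\mathcal{M}_{\mathbf{L}_t}, x \models \varphi$. Hence $\mathcal{M}_{\mathbf{L}_t}, x \not\models \varphi$, so $\mathcal{M}_{\mathbf{L}_t}\not\models\varphi$, which is the contrapositive of the claim.

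The main obstacle is the second step. One has to justify that $\mathbf{L}_t$ is a normal logic extending $\mathbf{K}$, so that Theorem \ref{theorem canonical model} applies and the maximal $\mathbf{L}_t$-consistent sets sit inside $W_\mathbf{K}$ with the truth lemma holding. I would argue this by taking the closure defining $\mathbf{L}_t$ to include the axioms of $\mathbf{K}$, which is implicit in calling it a normal modal logic. If that is not built in, I would instead replace $\mathbf{L}_t$ by $\mathbf{K}\oplus t[\mathbf{L}^\lhd]$ and check that theorem reflection still holds for this logic. A minor point is to make sure that Definition \ref{def canonical logic} delivers exactly this frame, rather than some other canonical frame, as a member of $\mathbb{C}_{\mathbf{L}_t}$. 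The remark after Theorem \ref{theorem canonical model}, that the two constructions coincide, settles this.
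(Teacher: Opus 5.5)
Your proposal is correct and follows the paper's proof step for step. Both argue by contraposition: theorem reflection gives $\not\vdash^{\mathbf{L}_t} t(\varphi)$, Lindenbaum yields $x \in W_{\mathbf{L}_t}$ with $\neg t(\varphi) \in x$, Theorem \ref{theorem canonical model} gives $\mathcal{M}_{\mathbf{L}_t}, x \not\models t(\varphi)$, and canonicity plus pointwise truth-preservation of $t$ give $\mathcal{M}_{\mathbf{L}_t}, x \not\models \varphi$. Your caveat that $\mathbf{L}_t$ must contain $\mathbf{K}$ for Theorem \ref{theorem canonical model} to apply is fair; the paper takes this for granted, as it only defines $\mathbf{L}_t$ as the closure of $t[\mathbf{L}^\lhd]$ under \emph{Modus Ponens}, \emph{Necessitation} and Uniform Substitution.
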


\begin{proof}
Suppose $\not\vdash^{\mathbf{L}^\lhd} \varphi$. Since $t$ is theorem reflecting, $\not\vdash^{\mathbf{L}_t} t(\varphi)$, so $\{\neg t(\varphi)\}$ is $\mathbf{L}_t$-consistent, and therefore $\neg t(\varphi) \in x$ for some $x \in W_{\mathbf{L}_t}$. By Theorem \ref{theorem canonical model}, that means $\mathcal{M}_{\mathbf{L}_t}, x \not\models t(\varphi)$. Since $t$ is truth preserving in $\mathbb{C}_{\mathbf{L}_t}$ and $\mathbf{L}_t$ is canonical, that means $\mathcal{M}_{\mathbf{L}_t}, x \not\models \varphi$. Therefore, $\mathcal{M}_{\mathbf{L}_t} \not\models \varphi$.
\end{proof}

\begin{theorem}\label{theorem completeness 2}
Let $t$ be truth-preserving in $\mathbb{C}_{\mathbf{L}_t}$ and theorem reflecting from $\mathbf{L}^\lhd$ to $\mathbf{L}_t$. Let also $\mathbf{L}_t$ be canonical. Then $\mathbf{L}^\lhd$ is complete w.r.t. $\mathbb{C}_{\mathbf{L}_t}$.
\end{theorem}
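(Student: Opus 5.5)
We argue by contraposition, using Lemma \ref{lemma model for l} as the main engine. Completeness of $\mathbf{L}^\lhd$ w.r.t. $\mathbb{C}_{\mathbf{L}_t}$ means: if $\mathbb{C}_{\mathbf{L}_t} \models \varphi$ for $\varphi \in Form_{\mathcal{L}^\lhd}$, then $\vdash^{\mathbf{L}^\lhd} \varphi$. So we fix $\varphi$ with $\not\vdash^{\mathbf{L}^\lhd} \varphi$ and aim to exhibit a frame in $\mathbb{C}_{\mathbf{L}_t}$ refuting $\varphi$. The natural candidate is the frame $F_{\mathbf{L}_t} = \langle W_{\mathbf{L}_t}, R_{\mathbf{L}_t} \rangle$ underlying the model $\mathcal{M}_{\mathbf{L}_t}$ built from $\mathcal{M}_\mathbf{K}$ as in Theorem \ref{theorem canonical model}.

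First, we apply Lemma \ref{lemma model for l}, whose hypotheses (theorem reflection of $t$, truth preservation in $\mathbb{C}_{\mathbf{L}_t}$, canonicity of $\mathbf{L}_t$) are exactly those of the present theorem. This yields $\mathcal{M}_{\mathbf{L}_t} \not\models \varphi$, i.e.\ a world $x \in W_{\mathbf{L}_t}$ with $\mathcal{M}_{\mathbf{L}_t}, x \not\models \varphi$. Since the model is based on $F_{\mathbf{L}_t}$, it follows at once that $F_{\mathbf{L}_t} \not\models \varphi$, because frame validity quantifies over all valuations, including $V_\mathbf{K}$ restricted to $W_{\mathbf{L}_t}$.

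Second, we check that $F_{\mathbf{L}_t} \in \mathbb{C}_{\mathbf{L}_t}$. By Theorem \ref{theorem canonical model} and the remark following it, $\mathcal{M}_{\mathbf{L}_t}$ coincides with the usual canonical model of the normal logic $\mathbf{L}_t$; this holds because $\mathbf{L}_t$ is closed under Modus Ponens, Necessitation and Uniform Substitution and contains $\mathbf{K}$, or at least we take $\mathbf{K} \subseteq \mathbf{L}_t$ as implicit in the construction. Canonicity of $\mathbf{L}_t$ (Definition \ref{def canonical logic}) then gives $F_{\mathbf{L}_t} \in \mathbb{C}_{\mathbf{L}_t}$. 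Hence $\mathbb{C}_{\mathbf{L}_t} \not\models \varphi$, which is the contrapositive of completeness.

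The main obstacle is this second step: ensuring that the frame obtained by restricting $R_\mathbf{K}$ really is the canonical frame of $\mathbf{L}_t$, so that canonicity applies. This requires $\mathbf{L}_t \supseteq \mathbf{K}$, which we would secure by defining the closure to include the $\mathbf{K}$ axioms. Once this identification is granted, the argument is immediate. There is also a subtlety already present in Lemma \ref{lemma model for l}: the passage from $x \not\models t(\varphi)$ to $x \not\models \varphi$ uses truth preservation on a model whose frame lies in $\mathbb{C}_{\mathbf{L}_t}$, which is again exactly canonicity. So we would state explicitly that canonicity is what licenses truth preservation at the canonical frame.
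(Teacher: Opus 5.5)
Your proof is correct and is the paper's argument in contrapositive form. The paper argues directly: from $\mathbb{C}_{\mathbf{L}_t} \models \varphi$ and canonicity ($F_{\mathbf{L}_t} \in \mathbb{C}_{\mathbf{L}_t}$) it gets $\mathcal{M}_{\mathbf{L}_t} \models \varphi$, then invokes Lemma \ref{lemma model for l}. Your caveat that $\mathbf{K} \subseteq \mathbf{L}_t$ is needed for Theorem \ref{theorem canonical model} to apply is fair; the paper leaves this implicit when it calls $\mathbf{L}_t$ the smallest logic containing $t[\mathbf{L}^\lhd]$.
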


\begin{proof}
Let $\varphi \in Form_{\mathcal{L}^\lhd}$. If $\mathbb{C}_{\mathbf{L}_t} \models \varphi$, then, since $\mathbf{L}_t$ is canonical, $\mathcal{M}_{\mathbf{L}_t} \models \varphi$, so by Lemma \ref{lemma model for l}, $\vdash^{\mathbf{L}^\lhd} \varphi$.
\end{proof}

Therefore, we have \emph{Completeness strategy $2$} of $\mathbf{L}^\lhd$ w.r.t. $\mathbb{C}_{\mathbf{L}_t}$ by:

\begin{enumerate}
\item defining a mapping $t: \mathcal{L}^\lhd \to \mathcal{L}^\Box$ which is truth-preserving in $\mathbb{C}_{\mathbf{L}_t}$ and theorem reflecting from $\mathbf{L}^\lhd$ to $\mathbf{L}_t$;
\item showing $\mathbf{L}_t$ is canonical.
\end{enumerate}

The task of finding such a theorem reflecting translation $t$ may seem difficult, but in light of Propositions \ref{proposition surjective theorem reflect} and \ref{proposition surjectivity} the difficulty may be partially averted.

Notice the stronger requirement of canonicity of $\mathbf{L}_t$ may be dropped in favour of the simple completeness of it w.r.t. $\mathbb{C}_{\mathbf{L}_t}$:

\begin{theorem}\label{theorem completeness 4}
Let $t$ be truth-preserving in $\mathbb{C}_{\mathbf{L}_t}$ and theorem reflecting from $\mathbf{L}^\lhd$ to $\mathbf{L}_t$. Let also $\mathbf{L}_t$ be complete w.r.t. $\mathbb{C}_{\mathbf{L}_t}$. Then $\mathbf{L}^\lhd$ is complete w.r.t. $\mathbb{C}_{\mathbf{L}_t}$.
\end{theorem}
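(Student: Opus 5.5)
The proof is a direct contrapositive argument, modelled on Theorem \ref{theorem completeness 2}, in which the appeal to the canonical model $\mathcal{M}_{\mathbf{L}_t}$ is replaced by an appeal to the completeness of $\mathbf{L}_t$. Fix $\varphi \in Form_{\mathcal{L}^\lhd}$ with $\not\vdash^{\mathbf{L}^\lhd} \varphi$. We must show $\mathbb{C}_{\mathbf{L}_t} \not\models \varphi$.

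The steps, in order, are as follows.

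First, since $t$ is theorem reflecting from $\mathbf{L}^\lhd$ to $\mathbf{L}_t$, the failure $\not\vdash^{\mathbf{L}^\lhd} \varphi$ gives $\not\vdash^{\mathbf{L}_t} t(\varphi)$.

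Second, since $\mathbf{L}_t$ is complete w.r.t. $\mathbb{C}_{\mathbf{L}_t}$, we obtain $\mathbb{C}_{\mathbf{L}_t} \not\models t(\varphi)$. Concretely, there is a frame $F \in \mathbb{C}_{\mathbf{L}_t}$, a model $\mathcal{M}$ on $F$, and a world $w$ with $\mathcal{M}, w \not\models t(\varphi)$.

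Third, since $t$ is truth-preserving in $\mathbb{C}_{\mathbf{L}_t}$ and $F$ belongs to that class, $\mathcal{M}, w \not\models \varphi$. Hence $F \not\models \varphi$, and so $\mathbb{C}_{\mathbf{L}_t} \not\models \varphi$.

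Taking the contrapositive, $\mathbb{C}_{\mathbf{L}_t} \models \varphi$ implies $\vdash^{\mathbf{L}^\lhd} \varphi$, which is the claimed completeness.

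No step here is a real obstacle; the argument is a chain of three implications. The only point needing care is that truth-preservation is used pointwise, in the sense of Definition \ref{definition expressivity}: on models over frames of $\mathbb{C}_{\mathbf{L}_t}$, the formulas $t(\varphi)$ and $\varphi$ agree at every world. The countermodel must therefore be taken over a frame in $\mathbb{C}_{\mathbf{L}_t}$, and that is exactly what completeness supplies, so no canonicity assumption is needed.
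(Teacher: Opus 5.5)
Your proposal is correct and is the paper's argument in contrapositive form. The paper assumes $\mathbb{C}_{\mathbf{L}_t} \models \varphi$ and chains truth-preservation, completeness of $\mathbf{L}_t$, and theorem reflection to reach $\vdash^{\mathbf{L}^\lhd} \varphi$, which is your three steps read in reverse order.
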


\begin{proof}
Let $\varphi \in Form_{\mathcal{L}^\lhd}$ and suppose $\mathbb{C}_{\mathbf{L}_t} \models \varphi$. Since $t$ is truth-preserving in that class of frames, $\mathbb{C}_{\mathbf{L}_t} \models t(\varphi)$, and since $\mathbf{L}_t$ is complete with respect to that class, $\vdash^{\mathbf{L}_t} t(\varphi)$. But since $t$ is theorem reflecting from $\mathbf{L}^\lhd$ to $\mathbf{L}_t$, that means $\vdash^{\mathbf{L}^\lhd} \varphi$.
\end{proof}

That modification allows the slightly more genera \emph{Completeness strategy $2.5$} of $\mathbf{L}^\lhd$ w.r.t. $\mathbb{C}_{\mathbf{L}_t}$:

\begin{enumerate}
\item defining a mapping $t: \mathcal{L}^\lhd \to \mathcal{L}^\Box$ which is truth-preserving in $\mathbb{C}_{\mathbf{L}_t}$ and theorem reflecting from $\mathbf{L}^\lhd$ to $\mathbf{L}_t$;
\item showing $\mathbf{L}_t$ is complete w.r.t. $\mathbb{C}_{\mathbf{L}_t}$.
\end{enumerate}

Notice the requirement of $t$ being theorem reflecting may be inverted if the translation goes the other way around, as in the previous method. Let us spell that out. Let $\mathbf{L}$ be a normal modal logic, $h : \mathcal{L}^\Box \to \mathcal{L}^\lhd$, and $\mathbf{L}_h^\lhd$ be as previously defined. Define $\mathcal{M}_\mathbf{L} = \langle W_\mathbf{L}, R_\mathbf{L}, V_\mathbf{K} \rangle$ as in Theorem \ref{theorem canonical model}.

\begin{lemma}\label{lemma completeness 3}
Let $h$ be theorem preserving from $\mathbf{L}$ to $\mathbf{L}^\lhd_h$, truth-preserving in $\mathbf{C}_\mathbf{L}$, and surjective up to equivalence w.r.t. 
$\mathbf{L}$. Let also $\mathbf{L}$ be canonical, and $\mathbf{L}^\lhd_h$ be sound w.r.t. $\mathbb{C}_\mathbf{L}$. Then, for any $\varphi \in Form_{\mathcal{L}^\lhd}$, if $\mathcal{M}_\mathbf{L} \models \varphi$, then $\vdash^{\mathbf{L}^\lhd_h} \varphi$.
\end{lemma}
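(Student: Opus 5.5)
We argue by contraposition, in the style of Lemma \ref{lemma model for l}, but with the translation running from $\mathcal{L}^\Box$ to $\mathcal{L}^\lhd$. Suppose $\not\vdash^{\mathbf{L}^\lhd_h} \varphi$ for some $\varphi \in Form_{\mathcal{L}^\lhd}$. The aim is a world of $\mathcal{M}_\mathbf{L}$ at which $\varphi$ fails.

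\emph{Step 1: pull $\varphi$ back to $\mathcal{L}^\Box$.} Since $h$ is surjective up to equivalence w.r.t. $\mathbf{L}$, there is $\psi \in Form_{\mathcal{L}^\Box}$ with $\vdash^{\mathbf{L}^\lhd_h} \varphi \leftrightarrow h(\psi)$. Hence $\not\vdash^{\mathbf{L}^\lhd_h} h(\psi)$. Because $h$ is theorem preserving from $\mathbf{L}$ to $\mathbf{L}^\lhd_h$, it follows that $\not\vdash^{\mathbf{L}} \psi$.

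\emph{Step 2: a countermodel in $\mathcal{M}_\mathbf{L}$.} Since $\not\vdash^{\mathbf{L}} \psi$, the set $\{\neg\psi\}$ is $\mathbf{L}$-consistent. By Lindenbaum's lemma it extends to some $x \in W_\mathbf{L}$. By Theorem \ref{theorem canonical model}, $\mathcal{M}_\mathbf{L}$ is a canonical model for $\mathbf{L}$, so $\mathcal{M}_\mathbf{L}, x \not\models \psi$. Because $\mathbf{L}$ is canonical, the frame of $\mathcal{M}_\mathbf{L}$ lies in $\mathbb{C}_\mathbf{L}$. Truth preservation of $h$ in $\mathbb{C}_\mathbf{L}$ is a pointwise condition on models over frames of that class, so it gives $\mathcal{M}_\mathbf{L}, x \not\models h(\psi)$.

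\emph{Step 3: return to $\varphi$.} Since $\mathbf{L}^\lhd_h$ is sound w.r.t. $\mathbb{C}_\mathbf{L}$, the theorem $\varphi \leftrightarrow h(\psi)$ is valid on every frame in $\mathbb{C}_\mathbf{L}$. In particular it holds at $x$ in $\mathcal{M}_\mathbf{L}$, so $\mathcal{M}_\mathbf{L}, x \not\models \varphi$ and therefore $\mathcal{M}_\mathbf{L} \not\models \varphi$. This is the contrapositive of the statement.

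\emph{Main obstacle.} The delicate point is Step 3, together with the matching use of truth preservation in Step 2. Both need the model-level facts to be inherited from frame-level facts about $\mathbb{C}_\mathbf{L}$. For soundness this is harmless: validity on a frame means truth in every model on it, including $\mathcal{M}_\mathbf{L}$. For truth preservation, the notion must be read pointwise, as in Definition \ref{definition expressivity}, for every model on a frame of $\mathbb{C}_\mathbf{L}$; this is the reading used in the proof of Lemma \ref{lemma model for l}.

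Theorem preservation is needed only in Step 1, to move the non-theorem from $\mathbf{L}^\lhd_h$ back to $\mathbf{L}$. A reflection property of the kind used in Lemma \ref{lemma model for l} is not needed, and soundness alone bridges $\varphi$ and its pre-image $\psi$.
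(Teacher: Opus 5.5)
Your proposal is correct and follows the paper's proof essentially step for step. Surjectivity up to equivalence gives $\psi$ with $\vdash^{\mathbf{L}^\lhd_h} \varphi \leftrightarrow h(\psi)$, and theorem preservation makes $\{\neg\psi\}$ $\mathbf{L}$-consistent. The truth lemma of Theorem~\ref{theorem canonical model}, together with canonicity and pointwise truth preservation, then falsifies $h(\psi)$ at the resulting $x$, and soundness transfers the failure to $\varphi$. Your explicit remark that truth preservation must be read pointwise, for models on frames of $\mathbb{C}_\mathbf{L}$, is the reading the paper's proof relies on without stating it.
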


\begin{proof}
Suppose $\not\vdash^{\mathbf{L}^\lhd_h} \varphi$. Since $h$ is surjective up to equivalence w.r.t. $\mathbf{L}$, $\vdash^{\mathbf{L}^\lhd_h} \varphi \leftrightarrow h(\psi)$ for some $\psi \in Form_{\mathcal{L}^\Box}$, which means $\not\vdash^{\mathbf{L}^\lhd_h} h(\psi)$. Since $h$ is theorem preserving, $\{\neg \psi\}$ must be $\mathbf{L}$-consistent, so that $\neg \psi \in x$ for some $x \in W_\mathbf{L}$. By Theorem \ref{theorem canonical model}, we get $\mathcal{M}_{\mathbf{L}}, x \not\models \psi$; since $h$ is truth preserving in $\mathbb{C}_\mathbf{L}$ and $\mathbf{L}$ is canonical, we get $\mathcal{M}_{\mathbf{L}}, x \not\models h(\psi)$, and thus $\mathcal{M}_\mathbf{L} \not\models h(\psi)$. Since $\mathbf{L}^\lhd_h$ is sound w.r.t. $\mathbb{C}_\mathbf{L}$, $\mathbb{C}_\mathbf{L} \models \varphi \leftrightarrow h(\psi)$, and thus $\mathcal{M}_\mathbf{L} \not\models \varphi$.
\end{proof}

\begin{theorem}\label{theorem completeness 3}
Let $h$ be theorem preserving from $\mathbf{L}$ to $\mathbf{L}^\lhd_h$, truth-preserving in $\mathbf{C}_\mathbf{L}$, and surjective up to equivalence w.r.t. 
$\mathbf{L}$. Let also $\mathbf{L}$ be canonical, and $\mathbf{L}^\lhd_h$ be sound w.r.t. $\mathbb{C}_\mathbf{L}$. Then, $\mathbf{L}^\lhd_h$ is complete w.r.t. $\mathbb{C}_\mathbf{L}$.
\end{theorem}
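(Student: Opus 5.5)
The plan is to argue exactly as in Theorem \ref{theorem completeness 2}, with Lemma \ref{lemma completeness 3} playing the role that Lemma \ref{lemma model for l} played there. We argue by contraposition and show that every non-theorem of $\mathbf{L}^\lhd_h$ fails on some frame of $\mathbb{C}_\mathbf{L}$. The frame we use is the canonical frame of $\mathbf{L}$, built as in Theorem \ref{theorem canonical model}.

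Concretely, let $\varphi \in Form_{\mathcal{L}^\lhd}$ and suppose $\mathbb{C}_\mathbf{L} \models \varphi$.

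\begin{enumerate}
\item Since $\mathbf{L}$ is canonical (Definition \ref{def canonical logic}), the frame $\langle W_\mathbf{L}, R_\mathbf{L}\rangle$ underlying $\mathcal{M}_\mathbf{L}$ belongs to $\mathbb{C}_\mathbf{L}$.
\item By Theorem \ref{theorem canonical model}, $\mathcal{M}_\mathbf{L}$ coincides with the usual canonical model of $\mathbf{L}$. It is therefore a model based on a frame in $\mathbb{C}_\mathbf{L}$, so validity of $\varphi$ on the class gives $\mathcal{M}_\mathbf{L} \models \varphi$.
\item Lemma \ref{lemma completeness 3}, whose hypotheses are exactly those of the present theorem, then yields $\vdash^{\mathbf{L}^\lhd_h} \varphi$.
\end{enumerate}

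\noindent Hence $\mathbb{C}_\mathbf{L} \models \varphi$ implies $\vdash^{\mathbf{L}^\lhd_h}\varphi$, which is completeness.

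There is no real obstacle here, since all the work is done in Lemma \ref{lemma completeness 3}. The one point to check is the passage from frame validity on $\mathbb{C}_\mathbf{L}$ to truth in the specific model $\mathcal{M}_\mathbf{L}$. This needs only that validity on a class of frames entails truth in every model over any of its frames, together with the canonicity of $\mathbf{L}$ placing the canonical frame in that class. One should also confirm that the valuation $V_\mathbf{K}$ restricted to $W_\mathbf{L}$ is a legitimate valuation on that frame, which it plainly is.
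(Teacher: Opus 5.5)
Your proposal is correct and follows the paper's proof: canonicity of $\mathbf{L}$ puts the frame of $\mathcal{M}_\mathbf{L}$ in $\mathbb{C}_\mathbf{L}$, so $\mathbb{C}_\mathbf{L} \models \varphi$ gives $\mathcal{M}_\mathbf{L} \models \varphi$, and Lemma \ref{lemma completeness 3} then yields $\vdash^{\mathbf{L}^\lhd_h} \varphi$. The announced ``contraposition'' is never used, since the argument you actually give is direct, as in the paper.
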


\begin{proof}
Let $\varphi \in Form_{\mathcal{L}^\lhd}$. If $\mathbb{C}_\mathbf{L} \models \varphi$, since $\mathbf{L}$ is canonical, $\mathcal{M}_\mathbf{L} \models \varphi$, so by Lemma \ref{lemma completeness 3}, $\vdash^{\mathbf{L}^\lhd_h} \varphi$.
\end{proof}

In this way, we have \emph{Completeness strategy $3$} of $\mathbf{L}^\lhd_h$ w.r.t. $\mathbb{C}_\mathbf{L}$ by:

\begin{enumerate}
\item defining a mapping $h: \mathcal{L}^\Box \to \mathcal{L}^\lhd$ which is theorem preserving from $\mathbf{L}$ to $\mathbf{L}^\lhd_h$, surjective up to equivalence w.r.t. $\mathbf{L}$ and truth-preserving in $\mathbb{C}_\mathbf{L}$;
\item showing $\mathbf{L}$ is canonical w.r.t. $\mathbb{C}_\mathbf{L}$;
\item showing $\mathbf{L}^\lhd_h$ is sound w.r.t. $\mathbb{C}_\mathbf{L}$.
\end{enumerate}

Notice this strategy makes less assumptions than the first completeness strategy, for it does require the definition of a canonical model for the logic of $\mathcal{L}^\lhd$, nor $h$ to be boxlike w.r.t. $\mathbf{L}^\lhd_h$. The three last strategies shift the difficulty away from defining a canonical model for a logic of $\mathcal{L}^\lhd$, a non-trivial and -- as the literature shows.

\section{Worked out applications}\label{sec application}

To illustrate the results of the previous sections, we shall offer an example by defining a new modal operator. Let $\mathcal{L}^\delta$ be the language generated by the operator $\delta$, defined in the following way:\footnote{Given a relational frame $\langle W, R \rangle$, $w \in W$ and $n \in \omega$, we define the set $R^n(w)$ as usual, that is, as the set of worlds in $W$ separated from $w$ by $n$ steps.}

\medskip

\begin{tabular}{lll}
$\mathcal{M}, w \models \delta \varphi$ & iff & either $\mathcal{M}, w \not\models \varphi$ or $\forall u \in R^2(w)(\mathcal{M}, u \models \varphi)$\\
 & iff & $\mathcal{M}, w \models \varphi \rightarrow \Box \Box \varphi$
\end{tabular}

\medskip

Consider now the compositional mapping $h : \mathcal{L}^\Box \to \mathcal{L}^\delta$ recursively defined by:\footnote{This mapping is also known as the \emph{boxdot translation} (\cite{pelletier}, \cite{humberstone2005}, \cite{french2009}).}

\medskip

\begin{tabular}{lll}
$h(p)$ & $=$ & $p$\\
$h(\neg \varphi)$ & $=$ & $\neg h(\varphi)$\\
$h(\varphi \wedge \psi)$ & $=$ & $h(\varphi) \wedge h(\psi)$\\
$h(\Box \varphi)$ & $=$ & $h(\varphi) \wedge \delta h(\varphi)$
\end{tabular}

\medskip

\begin{proposition}\label{prop translation}
$h$ is truth-preserving in the class $\mathbb{C}_\mathbf{T4}$ of reflexive transitive frames.
\end{proposition}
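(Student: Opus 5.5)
We argue by induction on the complexity of $\varphi \in Form_{\mathcal{L}^\Box}$ that for every model $\mathcal{M} = \langle W, R, V \rangle$ with $\langle W, R \rangle \in \mathbb{C}_\mathbf{T4}$ and every $w \in W$, $\mathcal{M}, w \models \varphi$ iff $\mathcal{M}, w \models h(\varphi)$. The atomic case holds because $h(p) = p$. The cases for $\neg$ and $\wedge$ follow directly from the induction hypothesis, since $h$ is compositional on non-modal formulas. All the work is therefore in the case $\varphi = \Box \psi$.

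For $\varphi = \Box\psi$, we first use the definition of $\delta$ and the induction hypothesis. We have $\mathcal{M}, w \models h(\psi) \wedge \delta h(\psi)$ iff $\mathcal{M}, w \models h(\psi)$ and $\mathcal{M}, w \models h(\psi) \rightarrow \Box\Box h(\psi)$. This is the case iff $\mathcal{M}, w \models h(\psi)$ and $\mathcal{M}, w \models \Box\Box h(\psi)$. Since the induction hypothesis holds at every world of $\mathcal{M}$, and in particular at $w$ and at all worlds of $R^2(w)$, this is in turn equivalent to $\mathcal{M}, w \models \psi \wedge \Box\Box\psi$. It therefore suffices to show that on reflexive transitive frames $\Box\psi$ and $\psi \wedge \Box\Box\psi$ are true at exactly the same worlds.

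\emph{Left to right.} Reflexivity gives $\Box\psi \rightarrow \psi$. Transitivity gives $\Box\psi \rightarrow \Box\Box\psi$, the $\mathbf{4}$ axiom: if $wRu$ and $uRv$ then $wRv$, so $\psi$ holds at every $v \in R^2(w)$.

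\emph{Right to left.} Suppose $wRu$. Reflexivity gives $uRu$, so $u \in R^2(w)$, and hence $\psi$ holds at $u$ by $\Box\Box\psi$. Thus $\Box\psi$ holds at $w$. This direction uses only reflexivity; the conjunct $\psi$ at $w$ is needed solely for the other direction.

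The main point of care is keeping the two directions apart. Reflexivity alone yields $R(w) \subseteq R^2(w)$, which gives the right-to-left direction and also $\Box\psi \rightarrow \psi$. Transitivity is what yields $R^2(w) \subseteq R(w)$, which is needed for $\Box\psi \rightarrow \Box\Box\psi$. Once the equivalence $\Box\psi \leftrightarrow \psi \wedge \Box\Box\psi$ over $\mathbb{C}_\mathbf{T4}$ is established, the inductive step closes and the proposition follows.
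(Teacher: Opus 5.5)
Your proof is correct and follows essentially the same route as the paper: induction on $\varphi$, where the modal case rests on $R(w) = R^2(w)$ in reflexive transitive frames, with reflexivity giving $\Box\psi \rightarrow \psi$ and $R(w) \subseteq R^2(w)$, and transitivity giving $R^2(w) \subseteq R(w)$. You state the induction hypothesis pointwise, over all worlds of all models on $\mathbb{C}_\mathbf{T4}$-frames, and this is the right formulation: it is what the paper's argument actually uses, even though the paper phrases its claim in terms of validity over the class.
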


\begin{proof}
By induction on the complexity of $\varphi \in Form_{\mathcal{L}^\Box}$, we show $\mathbb{C}_\mathbf{T4} \models \varphi$ iff $\mathbb{C}_\mathbf{T4} \models h(\varphi)$. The non-modal cases are straightforward. Let now $\langle W, R \rangle \in \mathbb{C}_\mathbf{T4}$, $\mathcal{M} = \langle W, R, V \rangle$, and $w \in W$. Suppose $\mathcal{M}, w \models \Box \varphi$. Then, for any $u \in R(w)$, $\mathcal{M}, u \models \varphi$. Since the frame is reflexive, $w \in R(w)$, and so $\mathcal{M}, w \models \varphi$. By induction hypothesis, $\mathcal{M}, w \models h(\varphi)$. Furthermore, since the frame is also transitive, $R(w) = R^2(w)$, so we obtain that for any $u \in R^2(w)$, $\mathcal{M}, u \models \varphi$. By induction hypothesis, for any $u \in R^2(w)$, $\mathcal{M}, u \models h(\varphi)$, and therefore, by definition, $\mathcal{M}, w \models \delta h(\varphi)$. Thus, $\mathcal{M}, w \models h(\varphi) \wedge \delta h(\varphi)$. Let now $\mathcal{M}, w \models h(\Box \varphi)$, that is, $\mathcal{M}, w \models h(\varphi) \wedge \delta h(\varphi)$. Then, for any $u \in R^2(w)$, $\mathcal{M}, u \models h(\varphi)$. But, since $R(w) = R^2(w)$, that means for any $u \in R(w)$, $\mathcal{M}, u \models h(\varphi)$. By induction hypothesis, we obtain that for any $u \in R(w)$, $\mathcal{M}, u \models \varphi$, and thus $\mathcal{M}, w \models \Box \varphi$.
\end{proof}

Let $\mathbf{T4}$ be the extension of $\mathbf{K}$ by axioms $\mathrm{T} = \Box \varphi \rightarrow \varphi$ and $\mathrm{4} = \Box \varphi \rightarrow \Box \Box \varphi$, and $\mathbf{T4}_h$ the smallest logic of $\mathcal{L}^\delta$ containing $h[\mathbf{T4}]$. Then, already, by using the Soundness strategy $1$:

\begin{corollary}\label{corollary soudness 44}
Let $\mathbf{L}^\delta \supseteq \mathbf{T4}_h$. Then $\mathbf{L}^\delta$ is sound w.r.t. $\mathbb{C}_\mathbf{L}$. 
\end{corollary}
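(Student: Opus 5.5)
I read the statement as follows. $\mathbf{L}^\delta$ is of the form $\mathbf{L}_h = Cl_{\models}(h[\mathbf{L}])$ for some normal modal logic $\mathbf{L} \supseteq \mathbf{T4}$, and the claim is that it is sound w.r.t. $\mathbb{C}_\mathbf{L}$. The case $\mathbf{L} = \mathbf{T4}$ gives soundness of $\mathbf{T4}_h$ w.r.t. $\mathbb{C}_\mathbf{T4}$. The plan is a direct application of Soundness strategy $1$, i.e.\ of Theorem \ref{theorem soundness 1}, with $\mathcal{L}_1 = \mathcal{L}^\Box$, $\mathcal{L}_2 = \mathcal{L}^\delta$ and the boxdot-style translation $h$. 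Theorem \ref{theorem soundness 1} has two hypotheses, and I will verify them in turn.

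\emph{First hypothesis: $\mathcal{L}^\delta \preceq \mathcal{L}^\Box$.} I will show that $\delta$ is definable on the kernel in $\mathcal{L}^\Box$ w.r.t.\ the class of all frames, via $\lambda(\varphi) = \varphi \rightarrow \Box\Box\varphi$. This is exactly the second line of the truth clause for $\delta$. Theorems \ref{theorem definabilities 1}--\ref{theorem definabilities 3} and Theorem \ref{theorem expressivity} then give $\mathcal{L}^\delta \preceq \mathcal{L}^\Box$ in $\mathbb{C}_\mathbf{K}$, and a fortiori in every subclass.

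\emph{Second hypothesis: $h$ is truth-preserving in $\mathbb{C}_\mathbf{L}$.} Since $\mathbf{T4} \subseteq \mathbf{L}$, every frame validating $\mathbf{L}$ validates $\mathrm{T}$ and $\mathrm{4}$. Hence $\mathbb{C}_\mathbf{L} \subseteq \mathbb{C}_\mathbf{T4}$, so every frame in $\mathbb{C}_\mathbf{L}$ is reflexive and transitive. I then appeal to Proposition \ref{prop translation}.

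This is the one point needing care, and I expect it to be the main obstacle. The proposition is phrased as preservation of validity over the whole class $\mathbb{C}_\mathbf{T4}$, but what Theorem \ref{theorem soundness 1} needs is preservation in each frame of the subclass $\mathbb{C}_\mathbf{L}$. I will therefore re-read the inductive proof of Proposition \ref{prop translation}. It is in fact pointwise: for every model $\mathcal{M}$ on a reflexive transitive frame and every world $w$, it shows $\mathcal{M}, w \models \varphi$ iff $\mathcal{M}, w \models h(\varphi)$, using only reflexivity and $R(w) = R^2(w)$ at each world. The pointwise equivalence restricts to any subclass of $\mathbb{C}_\mathbf{T4}$, in particular to $\mathbb{C}_\mathbf{L}$. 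If $\mathbf{L}$ is inconsistent, then $\mathbb{C}_\mathbf{L} = \varnothing$ and the claim is vacuous.

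With both hypotheses established, Theorem \ref{theorem soundness 1} yields that $\mathbf{L}_h$ is sound w.r.t.\ $\mathbb{C}_\mathbf{L}$. Concretely, each $h(\psi)$ with $\psi \in \mathbf{L}$ is valid on $\mathbb{C}_\mathbf{L}$, since $\psi$ is and $h$ preserves truth there. Closure under $\models$ preserves validity on $\mathbb{C}_\mathbf{L}$ by the level-by-level induction in that theorem's proof.
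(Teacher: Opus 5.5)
Your proposal is correct and takes the same route as the paper, whose proof simply cites Proposition \ref{prop translation} and Theorem \ref{theorem soundness 1}. Your additional steps make explicit what the paper leaves tacit: reading $\mathbf{L}^\delta$ as $\mathbf{L}_h$ with $\mathbf{L} \supseteq \mathbf{T4}$, using $\mathbb{C}_\mathbf{L} \subseteq \mathbb{C}_\mathbf{T4}$ together with the pointwise reading of the proposition's induction, and treating the inconsistent case as vacuous.
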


\begin{proof}
By Proposition \ref{prop translation} and Theorem \ref{theorem soundness 1}.
\end{proof}

For completeness, let us use the Completeness strategy $1$. Translate the logic $\mathbf{T4}$ into $\mathcal{L}^\delta$. That means its axioms and inference rules become the translated versions of the normal modal logic ones (skipping \emph{Modus Ponens}, the classical propositional logic axioms, and Uniform Substitution):

\begin{itemize}
\item[$\mathrm{K}_h$] $\big{(} (\varphi \rightarrow \psi) \wedge \delta (\varphi \rightarrow \psi) \big{)} \to \big{(} (\varphi \wedge \delta \varphi) \rightarrow (\psi \wedge \delta \psi) \big{)}$
\item[$\mathrm{T}_h$] $(\varphi \wedge \delta \varphi) \rightarrow \varphi$
\item[$\mathrm{4}_h$] $(\varphi \wedge \delta \varphi) \rightarrow \big{(} (\varphi \wedge \delta \varphi) \wedge \delta (\varphi \wedge \delta \varphi) \big{)}$
\item[$Nec_h$] from $\vdash \varphi$ infer $\vdash \varphi \wedge \delta \varphi$
\end{itemize}

Let $\mathbf{L} \supseteq \mathbf{T4}$. Then, notice, by the very structure of the translated axioms, $h$ is theorem-preserving from $\mathbf{L}$ to $\mathbf{L}^\delta_h$, as any proof in $\mathbf{L}$ of $\varphi$ can be transformed straightforwardly into a proof of $h(\varphi)$ in $\mathbf{L}^\delta_h$.

We now need to define the canonical model of $\mathbf{L}^\delta_h$. For $\Gamma \subseteq Form_{\mathcal{L}^\delta}$, let $L^\delta(\Gamma) = \{\varphi \mid \varphi \wedge \delta \varphi \in \Gamma\}$. Let $\mathcal{M}_{\mathbf{L}^\delta_h} = \langle \mathrm{W}_{\mathbf{L}^\delta_h}, \mathrm{R}_{\mathbf{L}^\delta_h}, \mathrm{V}_{\mathbf{L}^\delta_h} \rangle$, where $\mathrm{W}_{\mathbf{L}_h^\delta}$ is the set of all maximal $\mathbf{L}_h^\delta$-consistent sets of formulas, $x \mathrm{R}_{\mathbf{L}_h^\delta} y$ iff $L^\delta(x) \subseteq y$, and $\mathrm{V}_{\mathbf{L}_h^\delta}(p) = \{x \in \mathrm{W}_{\mathbf{L}_h^\delta} \mid p \in x\}$.

\begin{lemma}\label{lemma h is boxlike}
Let $\mathbf{L} \supseteq \mathbf{T4}$. Then $h$ is boxlike w.r.t. $\mathbf{L}_h^\delta$.
\end{lemma}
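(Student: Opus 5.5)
We need to show, for $x, y \in \mathrm{W}_{\mathbf{L}^\delta_h}$,
\[
x\mathrm{R}_{\mathbf{L}^\delta_h}y \iff \forall \varphi \in Form_{\mathcal{L}^\Box}\,\bigl(h(\Box\varphi)\in x \Rightarrow h(\varphi)\in y\bigr).
\]
Here $x\mathrm{R}_{\mathbf{L}^\delta_h}y$ means $L^\delta(x)\subseteq y$, and $h(\Box\varphi) = h(\varphi)\wedge\delta h(\varphi)$. Unfolding the definitions, the right-hand side says: whenever $h(\varphi)\wedge\delta h(\varphi)\in x$, then $h(\varphi)\in y$. The left-hand side says the same thing, but quantifying over all $\psi \in Form_{\mathcal{L}^\delta}$ rather than only over formulas of the form $h(\varphi)$. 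So the task is to compare a condition quantified over all of $Form_{\mathcal{L}^\delta}$ with the same condition restricted to the image of $h$.

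The direction from left to right is immediate. If $L^\delta(x)\subseteq y$ and $h(\Box\varphi)\in x$, then $h(\varphi)\wedge\delta h(\varphi)\in x$. Hence $h(\varphi)\in L^\delta(x)\subseteq y$.

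For the converse, I would first show that $h$ is onto $Form_{\mathcal{L}^\delta}$ up to provable equivalence. The natural candidate inverse is the compositional map $g:\mathcal{L}^\delta\to\mathcal{L}^\Box$ with $g(p)=p$, commuting with $\neg$ and $\wedge$, and $g(\delta\psi)=g(\psi)\to\Box\Box g(\psi)$. I would prove by induction on $\psi$ that
\[
\vdash^{\mathbf{L}^\delta_h} \psi\leftrightarrow h(g(\psi)).
\]
The atomic and Boolean cases follow from compositionality. The $\delta$ case is the main obstacle. There we need
\[
\vdash^{\mathbf{L}^\delta_h}\delta\psi \leftrightarrow \bigl(h(g\psi)\to h(\Box\Box g\psi)\bigr),
\]
where $h(\Box\Box\chi)=(h\chi\wedge\delta h\chi)\wedge\delta(h\chi\wedge\delta h\chi)$. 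By $\mathrm{4}_h$ and $\mathrm{T}_h$, $h(\Box\Box\chi)$ is provably equivalent to $h\chi\wedge\delta h\chi$. So the target reduces to $\delta\alpha\leftrightarrow(\alpha\to(\alpha\wedge\delta\alpha))$, which is a propositional tautology. To make this usable inside $\delta$, I also need replacement of provable equivalents under $\delta$. I expect to obtain this from the translated rules: from $\vdash\alpha\leftrightarrow\beta$, the rule $Nec_h$ together with $\mathrm{K}_h$ gives $(\alpha\wedge\delta\alpha)\leftrightarrow(\beta\wedge\delta\beta)$. From this, $\delta\alpha\leftrightarrow\delta\beta$ follows by the tautology $\delta\alpha\leftrightarrow(\alpha\to\alpha\wedge\delta\alpha)$. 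If this step does not go through syntactically, I would fall back to a semantic argument. Assuming $\mathbf{L}^\delta_h$ is defined as closure under consequence over $\mathbb{C}_\mathbf{L}$, as in Corollary \ref{corollary soudness 44}, soundness lets me verify each equivalence semantically on reflexive–transitive frames, using Proposition \ref{prop translation}.

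With surjectivity in hand, the converse goes as follows. Let $\psi\in L^\delta(x)$, so $\psi\wedge\delta\psi\in x$. Put $\varphi=g(\psi)$, so that $\vdash \psi\leftrightarrow h(\varphi)$. Congruence under $\delta$ gives $\vdash (\psi\wedge\delta\psi)\leftrightarrow(h\varphi\wedge\delta h\varphi)$. Since $x$ is maximal consistent, $h(\Box\varphi)\in x$. By the hypothesis, $h(\varphi)\in y$, and maximality of $y$ then gives $\psi\in y$. Hence $L^\delta(x)\subseteq y$.
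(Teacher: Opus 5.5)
Your left-to-right direction is the same as the paper's. You are also right that the converse is not a mere unfolding of definitions: $L^\delta(x)$ ranges over all of $Form_{\mathcal{L}^\delta}$, while the boxlike condition only mentions formulas $h(\varphi)$. The paper's proof, which just unfolds definitions in both directions, passes over this. But your repair fails at the $\delta$ case of surjectivity.

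\begin{itemize}
\item The biconditional $\delta\alpha\leftrightarrow(\alpha\to(\alpha\wedge\delta\alpha))$ is not a propositional tautology: make $\alpha$ and $\delta\alpha$ both false. It amounts to $\neg\alpha\to\delta\alpha$. That formula holds on every frame by the truth clause of $\delta$, but it is not derivable from $\mathrm{K}_h$, $\mathrm{T}_h$, $\mathrm{4}_h$, $Nec_h$, which only ever constrain $\delta\chi$ inside the combination $\chi\wedge\delta\chi$.
\item For the same reason your congruence step gives only $(\alpha\wedge\delta\alpha)\leftrightarrow(\beta\wedge\delta\beta)$. Given $\alpha\leftrightarrow\beta$, that is just $\alpha\to(\delta\alpha\leftrightarrow\delta\beta)$, not $\delta\alpha\leftrightarrow\delta\beta$. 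This weak form is all your last paragraph needs, but the induction for $g$ needs the strong one.
\item The semantic fallback runs the wrong way. Soundness turns theorems into validities, whereas you need validities to be theorems. That is exactly the completeness this lemma is meant to serve.
\end{itemize}

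For the axiomatic $\mathbf{L}^\delta_h$ used in this section, the gap cannot be closed; the statement itself fails. Evaluate $\delta\chi$ at $u$ by the usual clause when $\chi$ is true at $u$, and by an arbitrary bit $c(\chi,u)$ when $\chi$ is false at $u$. On reflexive transitive frames, $\chi\wedge\delta\chi$ still has extension $\{u\mid R(u)\subseteq[\chi]\}$ whatever $c$ is. So by induction on derivations every theorem of $\mathbf{T4}_h$ stays valid for every $c$; uniform substitution is handled by transporting $c$ along the substitution. Moreover, each $h(\varphi)$ gets its standard truth value, independently of $c$.

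Now take a single reflexive point $w$ with $q$ false there.
\begin{itemize}
\item Let $x$ be the theory of $w$ under $c\equiv 1$. Then $\delta q\wedge\delta\delta q\in x$.
\item Let $y$ be the theory of $w$ after resetting $c(q,w)=0$. Then $\delta q\notin y$.
\end{itemize}
Both are maximal $\mathbf{T4}_h$-consistent. Since image formulas do not depend on $c$, $h(\Box\varphi)\in x$ implies $h(\varphi)\in y$ for every $\varphi$. Yet $\delta q\in L^\delta(x)\setminus y$. So $h$ is not boxlike w.r.t. $\mathbf{T4}_h$. The same pair shows $\delta q$ is provably equivalent to no $h(\varphi)$, so the surjectivity claimed in Lemma~\ref{lemma h is surjective} also fails in this reading.

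Your argument does go through in two settings:
\begin{itemize}
\item if $\mathbf{L}^\delta_h$ is read as the semantic closure $Cl_{\models}(h[\mathbf{L}])$, which contains $\neg\alpha\to\delta\alpha$ and is closed under replacement inside $\delta$, because frame-validity of $\alpha\leftrightarrow\beta$ entails that of $\delta\alpha\leftrightarrow\delta\beta$;
\item if the frame-valid axiom $\neg\varphi\to\delta\varphi$ is added.
\end{itemize}
In either case your map $g$ yields surjectivity up to equivalence, and the rest of your proof is then correct.
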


\begin{proof}
Suppose $x \mathrm{R}_{\mathbf{L}^\lhd_h} y$, let $\varphi \in Form_{\mathcal{L}^\Box}$, and $h(\Box \varphi) \in x$ (that is, $h(\varphi) \wedge \delta h(\varphi) \in x$). By definition, that means $h(\varphi) \in L^\delta(x) \subseteq y$. For the other way around, if for any $\varphi \in Form_{\mathcal{L}^\Box}$, $h(\Box \varphi) \in x$ (that is, $h(\varphi) \wedge \delta h(\varphi) \in x$) implies $h(\varphi) \in y$, that means $L^\delta(x) \subseteq y$, and so $x \mathrm{R}_{\mathbf{L}^\lhd_h} y$.
\end{proof}

\begin{lemma}\label{lemma h is surjective}
Let $\mathbf{L} \supseteq \mathbf{T4}$. Then $h$ is surjective up to equivalence w.r.t. $\mathbf{L}$.
\end{lemma}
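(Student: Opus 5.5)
The plan is to show, by induction on the complexity of $\varphi \in Form_{\mathcal{L}^\delta}$, that there is $\psi \in Form_{\mathcal{L}^\Box}$ with $\vdash^{\mathbf{L}^\delta_h} \varphi \leftrightarrow h(\psi)$. The natural candidate is the reverse translation $t : \mathcal{L}^\delta \to \mathcal{L}^\Box$ given by $t(p) = p$, commuting with $\neg$ and $\wedge$, and $t(\delta \varphi) = t(\varphi) \rightarrow \Box\Box t(\varphi)$. We take $\psi = t(\varphi)$.

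\textbf{Non-modal cases.} For atoms, $h(t(p)) = p$. For $\neg$ and $\wedge$, the induction hypotheses $\vdash^{\mathbf{L}^\delta_h} \alpha \leftrightarrow h(t(\alpha))$ and $\vdash^{\mathbf{L}^\delta_h} \beta \leftrightarrow h(t(\beta))$ combine by propositional reasoning and compositionality of $h$. This uses only that $\mathbf{L}^\delta_h$ contains classical tautologies and is closed under Modus Ponens.

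\textbf{The $\delta$ case.} Let $\varphi = \delta\alpha$ and write $\chi = h(t(\alpha))$. Then
$$h(t(\delta\alpha)) = \chi \rightarrow \big((\chi\wedge\delta\chi) \wedge \delta(\chi\wedge\delta\chi)\big),$$
and we must show this is $\mathbf{L}^\delta_h$-equivalent to $\delta\alpha$. This splits into two sub-steps.

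First, we replace $\chi$ by $\alpha$. This needs a replacement-of-equivalents principle for $\delta$ inside $\mathbf{L}^\delta_h$: from $\vdash \alpha \leftrightarrow \chi$ infer $\vdash \delta\alpha \leftrightarrow \delta\chi$. We do not assume this as an axiom. Instead we obtain it semantically: $\mathbf{L}^\delta_h$ is sound w.r.t. $\mathbb{C}_\mathbf{T4}$ by Corollary \ref{corollary soudness 44}, and that soundness argument shows $\mathbf{L}^\delta_h$ is closed under entailment, because we work with logic$_S$. Since the congruence rule for $\delta$ is valid over all frames, it holds in $\mathbf{L}^\delta_h$.

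Second, we show $\delta\alpha \leftrightarrow \big(\alpha \rightarrow ((\alpha\wedge\delta\alpha)\wedge\delta(\alpha\wedge\delta\alpha))\big)$. Semantically, on reflexive transitive frames, $\delta\alpha$ means $\alpha \rightarrow \Box\Box\alpha$, i.e.\ $\alpha \rightarrow \Box\alpha$. The right-hand side says that $\alpha$ implies $\Box\alpha$ together with $\Box\Box\alpha$ at two steps, which collapses to the same condition. We prove this as follows. Using Proposition \ref{prop translation}, $h(\Box\Box\beta) \leftrightarrow h(\beta\rightarrow\ldots)$ instances follow from $\mathrm{4}_h$ and $\mathrm{T}_h$ applied to $\chi$. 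Concretely, $\mathrm{4}_h$ yields $(\chi\wedge\delta\chi)\rightarrow\delta(\chi\wedge\delta\chi)$. So, given $\chi$, the right-hand side is equivalent to $\delta\chi$ modulo $\mathrm{4}_h$, and the converse uses $\mathrm{T}_h$.

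\textbf{Main obstacle.} The hardest step is the $\delta$ case, specifically justifying that $\mathbf{L}^\delta_h$ proves these equivalences rather than merely that they are valid on $\mathbb{C}_\mathbf{T4}$. The first thing to try is a purely syntactic derivation from $\mathrm{K}_h$, $\mathrm{T}_h$, $\mathrm{4}_h$ and $Nec_h$. If that stalls, for instance over the congruence rule for $\delta$, we fall back on the semantic definition of logic$_S$: $\mathbf{L}^\delta_h = Cl_{\models}(h[\mathbf{L}])$ is closed under semantic consequence, and the needed equivalences are consequences of the relevant translated theorems. The choice of $\psi$ is what gets the $\delta$ case started; getting $\mathbf{L}^\delta_h$ to prove the resulting equivalence is the real work of the argument.
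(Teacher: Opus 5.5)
Your argument is correct on the logic$_S$ reading $\mathbf{L}^\delta_h = Cl_{\models}(h[\mathbf{L}])$, which is the paper's default and the reading its soundness corollary for $\mathbf{T4}_h$ relies on. It is not the paper's route, though.

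\textbf{Comparison with the paper.} The paper applies $Nec_h$ and $\mathrm{K}_h$ to the induction hypothesis $\vdash \gamma \leftrightarrow h(\alpha)$, concludes $\vdash \delta\gamma \leftrightarrow \delta h(\alpha)$, and stops. That does not finish the $\delta$-case, for two reasons. First, $\delta h(\alpha)$ is not an $h$-image: $h$ only ever introduces $\delta$ inside a conjunction $X \wedge \delta X$. Second, $\mathrm{K}_h$ with $Nec_h$ only yields congruence for $X \wedge \delta X$, not for bare $\delta$. Your explicit witness $\psi = t(\varphi)$, and your second sub-step identifying $\delta\alpha$ with $h(t(\delta\alpha))$, supply exactly what the paper's argument leaves out. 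The paper's version buys only brevity. You can also simplify: with $t(\delta\varphi) = t(\varphi) \rightarrow \Box t(\varphi)$ you get $h(t(\delta\alpha)) = \chi \rightarrow (\chi \wedge \delta\chi)$. This is equivalent to $\delta\chi$ on every frame, so $\mathrm{4}_h$ (and $\mathbf{T4}$) is not needed for this lemma at all.

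\textbf{A correction to your second sub-step.} The right-to-left direction of $\delta\alpha \leftrightarrow (\alpha \rightarrow ((\alpha\wedge\delta\alpha)\wedge\delta(\alpha\wedge\delta\alpha)))$ does not come from $\mathrm{T}_h$, which is a propositional tautology. When $\alpha$ is false the right-hand side holds vacuously, so what you need is $\neg\alpha \rightarrow \delta\alpha$.

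This formula is valid on every frame, so it lies in $Cl_{\models}(h[\mathbf{L}])$. That membership comes from the definition of $Cl_{\models}$ as closure under frame entailment, not from soundness w.r.t.\ $\mathbb{C}_\mathbf{T4}$, which is beside the point. Together with the $\mathrm{4}_h$-instance for $\alpha$, it makes your equivalence a frame consequence of $h[\mathbf{L}]$.

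However, $\neg\alpha \rightarrow \delta\alpha$ is not derivable from $\mathrm{K}_h$, $\mathrm{T}_h$, $\mathrm{4}_h$, $Nec_h$ with classical logic, \emph{Modus Ponens} and substitution. Read $\delta$ as the box of $\mathbf{K4}$. All of these are then validated, since the boxdot translation sends $\mathbf{S4} = \mathbf{T4}$ theorems to $\mathbf{K4}$ theorems. Under this reading, every $h(\psi)$ takes the same value on a one-point reflexive frame and a one-point irreflexive frame with the same valuation, because there $X \wedge \Box X$ is equivalent to $X$. But $\delta p$ takes different values on those two frames when $p$ is false.

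So on that purely syntactic reading, which matches the axiom list the paper gives for $\mathbf{T4}_h$, the lemma itself fails. Your semantic fallback is therefore not a backup but the essential step, and it should be the main line of the proof.
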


\begin{proof}
By induction on the complexity of $\varphi \in Form_{\mathcal{L}^\delta}$, we show there is $\psi \in Form_{\mathcal{L}^\Box}$ such that $\vdash^{\mathbf{L}_h^\delta} \varphi \leftrightarrow h(\psi)$. By compositionality, the non-modal cases are straightforward, so suppose $\varphi = \delta \gamma$. By induction hypothesis, there is $\alpha \in Form_{\mathcal{L}^\Box}$ such that $\vdash^{\mathbf{L}^\delta_h} \gamma \leftrightarrow h(\alpha)$. Since $\mathbf{L} \supseteq \mathbf{T4}$, we have $\mathbf{L}^\lhd_h \supseteq \mathbf{T4}_h$, so by $Nec_h$, $\vdash^{\mathbf{L}^\delta_h} (\gamma \leftrightarrow h(\alpha)) \wedge \delta (\gamma \leftrightarrow h(\alpha))$. By \emph{Modus Ponens} and $\mathrm{K}_h$, $\vdash^{\mathbf{L}^\delta_h} \delta \gamma \leftrightarrow \delta h(\alpha)$.
\end{proof}

We, thus, may easily take away the following result:

\begin{corollary}
Let $\mathbf{L} \supseteq \mathbf{T4}$, and $\mathbf{L}$ be canonical and complete w.r.t. $\mathbb{C}_\mathbf{L}$. Then, $\mathbf{L}^\delta_h$ is complete w.r.t. $\mathbb{C}_\mathbf{L}$.
\end{corollary}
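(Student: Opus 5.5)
The plan is to apply Completeness strategy $1$, i.e.\ Theorem \ref{theorem completeness 1}, with $\mathcal{L}^\lhd = \mathcal{L}^\delta$ and $h$ the boxdot translation. That theorem has seven hypotheses, and I will check them one by one.

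\textbf{Properties of $h$.}
\begin{enumerate}
\item Compositionality holds by the definition of $h$.
\item Theorem preservation from $\mathbf{L}$ to $\mathbf{L}^\delta_h$ was observed above: a derivation in $\mathbf{L}$ translates step by step, using $\mathrm{K}_h$, $\mathrm{T}_h$, $\mathrm{4}_h$, $Nec_h$ and the translated extra axioms of $\mathbf{L}$.
\item Being boxlike w.r.t. $\mathbf{L}^\delta_h$ is Lemma \ref{lemma h is boxlike}.
\item Surjectivity up to equivalence is Lemma \ref{lemma h is surjective}.
\item For truth preservation in $\mathbb{C}_\mathbf{L}$: since $\mathbf{L} \supseteq \mathbf{T4}$, every frame validating $\mathbf{L}$ validates $\mathrm{T}$ and $\mathrm{4}$. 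These axioms correspond to reflexivity and transitivity, so $\mathbb{C}_\mathbf{L} \subseteq \mathbb{C}_\mathbf{T4}$. Proposition \ref{prop translation} is really a pointwise argument over arbitrary models on reflexive transitive frames, so it restricts to any subclass. Hence $h$ is truth-preserving in $\mathbb{C}_\mathbf{L}$.
\end{enumerate}

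\textbf{Properties of $\mathbf{L}$.} Canonicity and completeness of $\mathbf{L}$ w.r.t. $\mathbb{C}_\mathbf{L}$ are assumed in the statement.

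\textbf{Soundness of $\mathbf{L}^\delta_h$ w.r.t. $\mathbb{C}_\mathbf{L}$.} This comes from Theorem \ref{theorem soundness 1}, i.e.\ Corollary \ref{corollary soudness 44} with $\mathbf{L}^\delta = \mathbf{L}^\delta_h$, using the truth preservation just established.

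The one point of care is a mismatch in how $\mathbf{L}^\delta_h$ is defined. Section \ref{sec completeness} defines it proof-theoretically, as the closure of $h[\mathbf{L}]$ under the inference rules, whereas Theorem \ref{theorem soundness 1} is formulated for logic$_S$. So soundness should be argued directly: each member of $h[\mathbf{L}]$ is valid on $\mathbb{C}_\mathbf{L}$, because $h$ is truth-preserving there and $\mathbf{L}$ is valid on $\mathbb{C}_\mathbf{L}$. It then remains to check that the rules used to close off preserve validity on $\mathbb{C}_\mathbf{L}$.
\begin{itemize}
\item Modus Ponens preserves validity on any class of frames.
\item $Nec_h$ preserves it: if $\varphi$ is valid on a frame, so are $\varphi$ and $\delta\varphi$, since $\varphi$ holds everywhere, in particular on $R^2(w)$.
\item Uniform substitution preserves frame validity as usual.
\end{itemize}
This is the same argument as in Theorem \ref{theorem soundness insensitivity 2} with $\rho$ the identity.

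With all hypotheses of Theorem \ref{theorem completeness 1} verified, $\mathbf{L}^\delta_h$ is complete w.r.t. $\mathbb{C}_\mathbf{L}$. I expect the main obstacle to be the bookkeeping just described: justifying that truth preservation on $\mathbb{C}_\mathbf{T4}$ transfers to $\mathbb{C}_\mathbf{L}$, and that soundness holds for the proof-theoretically defined $\mathbf{L}^\delta_h$, rather than any new construction.
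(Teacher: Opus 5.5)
\textbf{There is a genuine gap: the surjectivity hypothesis.} Your route is the paper's own: you verify the hypotheses of Theorem~\ref{theorem completeness 1}, citing Lemmas~\ref{lemma h is boxlike} and~\ref{lemma h is surjective}. Your extra care is correct: restricting Proposition~\ref{prop translation} to $\mathbb{C}_\mathbf{L} \subseteq \mathbb{C}_\mathbf{T4}$ is fine, and so is your soundness check for the rule-closure. The trouble is that once you fix $\mathbf{L}^\delta_h$ as the closure of $h[\mathbf{L}]$ under Modus Ponens, Uniform Substitution and $Nec_h$, you cannot import Lemma~\ref{lemma h is surjective}. Its proof applies $Nec_h$ and $\mathrm{K}_h$ to $\gamma \leftrightarrow h(\alpha)$. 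That yields only $\vdash (\gamma \wedge \delta\gamma) \leftrightarrow (h(\alpha) \wedge \delta h(\alpha))$, not $\vdash \delta\gamma \leftrightarrow \delta h(\alpha)$. Even the latter would not close the induction, because $\delta h(\alpha)$ is not of the form $h(\psi)$.

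Under your reading this cannot be repaired. Take a frame $F \in \mathbb{C}_\mathbf{L}$ (for consistent $\mathbf{L}$, e.g.\ its canonical frame), which is reflexive, and read $\delta$ as $\Box$. Then $h(\Box\chi)$ becomes $h(\chi) \wedge \Box h(\chi)$, which is pointwise equivalent to $\Box h(\chi)$ on $F$. By induction $h(\beta)$ is pointwise equivalent to $\beta$, so every $h(\beta)$ with $\beta \in \mathbf{L}$ is valid on $F$. Modus Ponens, Uniform Substitution and $Nec_h$ also preserve validity under this reading. Hence every theorem of your $\mathbf{L}^\delta_h$ is valid on $F$ under both the intended reading and the $\Box$-reading. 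If $\vdash \delta p \leftrightarrow h(\psi)$ held, the $\mathcal{L}^\Box$-formula $\psi$ would be equivalent on $F$ both to $p \to \Box\Box p$ and to $\Box p$. That fails when $V(p) = \varnothing$. Worse, $\neg p \to \delta p$ is valid on every frame under the intended semantics, but under the $\Box$-reading it becomes $\neg p \to \Box p$, which fails on $F$. So it is not a theorem of your $\mathbf{L}^\delta_h$, and under the proof-theoretic reading the corollary itself is false.

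\textbf{What makes the statement true.} It holds if $\mathbf{L}^\delta_h$ is the semantic closure $Cl_{\models}(h[\mathbf{L}])$, which is the object of Theorem~\ref{theorem soundness 1} and Corollary~\ref{corollary soudness 44}. It also holds for any axiomatization that contains the $\mathcal{L}^\delta$-validities and is closed under replacing provable equivalents inside $\delta$. In that setting soundness is just Theorem~\ref{theorem soundness 1}. Surjectivity then holds with a corrected inductive step:
\begin{enumerate}
\item from $\gamma \leftrightarrow h(\alpha) \in \mathbf{L}^\delta_h$ you get $\delta\gamma \leftrightarrow \delta h(\alpha) \in \mathbf{L}^\delta_h$ by congruence;
\item $\delta h(\alpha) \leftrightarrow h(\alpha \to \Box\alpha)$ is valid on all frames, since $h(\alpha \to \Box\alpha)$ is classically equivalent to $h(\alpha) \to \delta h(\alpha)$ and $\neg q \to \delta q$ is valid.
\end{enumerate}
With that repair your verification goes through. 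Note also that the final part of the proof of Theorem~\ref{theorem completeness 1} uses only theorem preservation, surjectivity, truth preservation, completeness (or canonicity) of $\mathbf{L}$, and soundness. The boxlike hypothesis plays no role here.
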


\begin{proof}
By Proposition \ref{prop translation}, $h$ preserves truth in $\mathbb{C}_\mathbf{L}$. As we have seen, it is also theorem preserving and compositional, and by Lemma \ref{lemma h is boxlike}, boxlike w.r.t. $\mathbf{L}^\delta_h$, and by Lemma \ref{lemma h is surjective}, $h$ is surjective up to equivalence w.r.t. $\mathbf{L}$. Since $\mathbf{L}$ is canonical and complete w.r.t. $\mathbb{C}_\mathbf{L}$, by Corollary \ref{corollary soudness 44}, $\mathbf{L}^\delta_h$ is sound w.r.t. $\mathbb{C}_\mathbf{L}$, so by Theorem \ref{theorem completeness 1}, $\mathbf{L}_h^\delta$ must be complete w.r.t. $\mathbb{C}_\mathbf{L}$.
\end{proof}

We may thus see economy of this strategy: given the work on Sections \ref{sec soundness} and \ref{sec completeness}, in less than two pages, soundness and completeness may be given by a newly introduced logic.

Let us now showcase other completeness strategies. We shall recover the soundness and completeness of the minimal logic of $\mathcal{L}^\circ$, which we may call $\mathbf{K}^\circ$. The logic is defined by the following axioms and inference rule:\footnote{We here reproduce the axiomatization offered in \cite{steinsvold2008a} and \cite{gilbertventuri2016}, rather than the one originally provided in \cite{marcos2005}, for the former is more economical.}

\begin{itemize}
\item[$b0$] $\circ \top$
\item[$b1$] $\neg \circ \varphi \rightarrow \varphi$
\item[$b2$] $(\circ \varphi \wedge \circ \psi) \rightarrow \circ(\varphi \wedge \psi)$
\item[$bN$] from $\vdash^{\mathbf{K}^\circ} \varphi \rightarrow \psi$ infer $\vdash^{\mathbf{K}^\circ} (\circ \varphi \wedge \varphi) \rightarrow (\circ \psi \wedge \psi)$
\end{itemize}

\noindent
We further recall the follow derived rule $\mathbf{K}^\circ$ \cite{steinsvold2008a}:

\begin{itemize}
\item[$bNec$] from $\vdash^{\mathbf{K}^\circ} \varphi$, one may obtain $\vdash^{\mathbf{K}^\circ} \circ \varphi$
\end{itemize}

Now, to follow the completeness strategies, we need a translation $t : \mathcal{L}^\circ \to \mathcal{L}^\Box$ which is theorem reflecting. To prove $t$ reflects theoremhood, it helps to show $t$ is surjective up to equivalence -- a task which is easier when a translation has formulas whose main operator is $\Box$ in its image --, and that $t$ is compositional with respect to the conditional -- that is, $t(\varphi \rightarrow \psi) = t(\varphi) \rightarrow t(\psi)$. There are two non-trivial ways we may do that.

The first is to let $\mathcal{L}^\circ$ have each classical operator as a primitive of the language, and let the equivalences between them be imported as axioms from classical propositional logic. Consider then the translation given by:

\medskip

\begin{tabular}{lll}
$t(p)$ & $=$ & $p$\\
$t(\neg \varphi)$ & $=$ & $\neg t(\varphi)$\\
$t(\varphi \vee \psi)$ & $=$ & $t(\varphi) \vee t(\psi)$\\
$t(\varphi \rightarrow \psi)$ & $=$ & $t(\varphi) \rightarrow t(\psi)$\\
$t(\varphi \leftrightarrow \psi)$ & $=$ & $t(\varphi) \leftrightarrow t(\psi)$\\
$t(\varphi \wedge \psi)$ & $=$ & $\begin{cases}
t(\varphi) \wedge t(\psi),\ \text{if}\ \varphi \neq \circ \psi\ \text{and}\ \psi \neq \circ \varphi\\
\Box t(\varphi),\ \text{if}\ \psi = \circ \varphi\\
\Box t(\psi),\ \text{if}\ \varphi = \circ \psi\\
\end{cases}$\\
$t(\circ \varphi)$ & $=$ & $\neg t(\varphi) \vee \Box t(\varphi)$
\end{tabular}

\medskip

\noindent
In that case $t$, is compositional for all classical operators excluding conjunction. Still, one may check compositionality holds for negation and disjunction. The second option is to let negation and conjunction be the only primitive classical operators, and instead define an even less compositional translation:

\medskip

\begin{tabular}{lll}
$t(p)$ & $=$ & $p$\\
$t(\neg \varphi)$ & $=$ & $\begin{cases}
\neg t(\varphi),\ \text{if}\ \varphi \neq \alpha \wedge \neg \beta\\
\neg (t(\alpha) \wedge \neg t(\beta)),\ \text{if}\ \varphi = \alpha \wedge \neg \beta
\end{cases}$\\
$t(\varphi \wedge \psi)$ & $=$ & $\begin{cases}
t(\varphi) \wedge t(\psi),\ \text{if}\ \varphi \neq \circ \psi\ \text{and}\ \psi \neq \circ \varphi\\
\Box t(\varphi),\ \text{if}\ \psi = \circ \varphi\\
\Box t(\psi),\ \text{if}\ \varphi = \circ \psi\\
\end{cases}$\\
$t(\circ \varphi)$ & $=$ & $\neg t(\varphi) \vee \Box t(\varphi)$
\end{tabular}

\medskip

\noindent
In that case, we may see two formulas which are classically equivalent, but which differ in the order of the scope of its nested operators will have different images under $t$. However, we still obtain compositionality for the conditional: since $\varphi \rightarrow \psi := \neg (\varphi \wedge \neg \psi)$, $t(\varphi \rightarrow \psi) = t(\neg (\varphi \wedge \neg \psi)) = \neg (t(\varphi) \wedge \neg t(\psi)) = t(\varphi) \rightarrow t(\psi)$. With either choice of translation, we may follow through:

\begin{proposition}\label{proposition t truth preserving ct}
$t$ is truth-preserving in the class of reflexive frames $\mathbb{C}_\mathbf{T}$.
\end{proposition}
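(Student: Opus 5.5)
The plan is an induction on the complexity of $\varphi \in Form_{\mathcal{L}^\circ}$. Fix a model $\mathcal{M} = \langle W, R, V \rangle$ with $\langle W, R \rangle \in \mathbb{C}_\mathbf{T}$. I will show pointwise that for every $w \in W$, $\mathcal{M}, w \models \varphi$ iff $\mathcal{M}, w \models t(\varphi)$. The pointwise claim is stronger than frame-level truth preservation and yields it at once. I will carry out the argument for the first translation, where all classical connectives are primitive, and then remark on the adjustments needed for the second.

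The atomic case is immediate since $t(p) = p$. The cases of $\neg$, $\vee$, $\rightarrow$ and $\leftrightarrow$ follow directly from the induction hypothesis, because $t$ commutes with these connectives. For $\circ \varphi$, the truth condition of $\circ$ is that of $\neg \varphi \vee \Box \varphi$. Applying the induction hypothesis to $\varphi$ at $w$ and at every $u \in R(w)$, we get that $\mathcal{M}, w \models \circ\varphi$ iff $\mathcal{M}, w \models \neg t(\varphi) \vee \Box t(\varphi)$. This case does not use reflexivity.

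The only real work is the conjunction case, and I expect it to be the main obstacle. There are three subcases. If neither conjunct is $\circ$ of the other, the case is routine by the induction hypothesis. If $\psi = \circ\varphi$, I must show that $\mathcal{M}, w \models \varphi \wedge \circ \varphi$ iff $\mathcal{M}, w \models \Box t(\varphi)$. From left to right: since $\varphi$ holds at $w$, the disjunct $\neg\varphi$ of $\circ\varphi$ fails, so $\Box\varphi$ holds at $w$. The induction hypothesis applied at each $u \in R(w)$ then gives $\Box t(\varphi)$. From right to left: $\Box t(\varphi)$ holds at $w$, and reflexivity gives $w \in R(w)$, so $t(\varphi)$ holds at $w$ and hence so does $\varphi$. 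The induction hypothesis applied to all successors gives $\Box\varphi$, hence $\circ\varphi$, hence the conjunction. This is the only place where reflexivity is used. The symmetric subcase $\varphi = \circ\psi$ is identical. I also need to make sure the case split in the definition is well defined. Both special subcases cannot hold simultaneously, since that would require $\varphi = \circ\circ\varphi$, which is impossible for finite formulas.

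For the second translation, the conjunction case is handled as above. The negation case now has the special clause for $\neg(\alpha \wedge \neg\beta)$, and the induction hypothesis must be applied to $\alpha$ and $\beta$ directly rather than to $\alpha \wedge \neg\beta$. This is legitimate because $\alpha$ and $\beta$ are proper subformulas and $\neg\beta$ is handled classically, so the resulting formula is equivalent to $\neg(\alpha \wedge \neg\beta)$. Finally, the induction must be on formula complexity rather than on the structure of $t$, since the clauses of $t$ are not uniformly compositional. Each clause nevertheless only invokes the induction hypothesis on proper subformulas, so the induction goes through.
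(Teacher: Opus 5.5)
Your proposal is correct and takes the same route as the paper: induction on formula complexity, where the only nontrivial case is the conjunction $\psi \wedge \circ\psi$ (translated as $\Box t(\psi)$), and reflexivity, i.e.\ $\mathbb{C}_\mathbf{T} \models \Box t(\psi) \rightarrow t(\psi)$, supplies the right-to-left direction. Your version is more explicit than the paper's one-line argument: you work pointwise, check that the case split defining $t$ is well defined, and verify the special negation clause of the second translation, all of which the paper leaves implicit.
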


\begin{proof}
By induction on the complexity of $\varphi \in Form_{\mathcal{L}^\circ}$. The only non obvious case is when $\varphi = \psi \wedge \circ \psi$. However, $\mathbb{C}_\mathbf{T} \models \Box t(\psi) \rightarrow t(\psi)$, in which case we may obtain both directions of the satisfaction of $\varphi$.
\end{proof}

\begin{lemma}\label{lemma circ t is surjective}
$t$ is surjective up to equivalence w.r.t. $\mathbf{T}$.
\end{lemma}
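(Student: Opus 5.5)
The statement asks that for every $\varphi \in Form_{\mathcal{L}^\Box}$ there be some $\psi \in Form_{\mathcal{L}^\circ}$ with $\vdash \varphi \leftrightarrow t(\psi)$. Here $t:\mathcal{L}^\circ \to \mathcal{L}^\Box$, and the relevant logic is $\mathbf{T}$ (the surjectivity is meant w.r.t. the normal logic, and $\mathbf{T} \subseteq \mathbf{L}_t$ in the intended applications). I will argue by induction on the complexity of $\varphi$, in the spirit of the proof of Proposition \ref{proposition surjectivity}. I will \emph{not} invoke that proposition directly, because $t(\circ\alpha)$ is not $\Box t(\alpha)$. Instead I use the clause that puts $\Box$ into the image of $t$: $t(\alpha \wedge \circ \alpha) = \Box t(\alpha)$.

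\textbf{Atomic and Boolean cases.} For $\varphi = p$, take $\psi = p$. For $\varphi = \neg \gamma$, the induction hypothesis gives $\alpha$ with $\vdash^{\mathbf{T}} \gamma \leftrightarrow t(\alpha)$. The candidate is $\neg\alpha$. In the second version of $t$, $\alpha$ might have the shape $\alpha_1 \wedge \neg\alpha_2$, but then $t(\neg\alpha) = \neg(t(\alpha_1)\wedge\neg t(\alpha_2))$. Conjunction is not one of the special $\circ$-patterns here, so this equals $\neg t(\alpha)$ syntactically. Hence $t(\neg\alpha) = \neg t(\alpha)$ in every case, and $\vdash^{\mathbf{T}} \neg\gamma \leftrightarrow t(\neg\alpha)$.

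For $\varphi = \gamma_1 \wedge \gamma_2$, with witnesses $\alpha_1,\alpha_2$, the naive choice $\alpha_1 \wedge \alpha_2$ can fail. If, say, $\alpha_2 = \circ\alpha_1$, then $t$ returns $\Box t(\alpha_1)$ instead of the conjunction. To avoid this I will replace $\alpha_i$ by $\neg\neg\alpha_i$. Then $t(\neg\neg\alpha_i) = \neg\neg t(\alpha_i)$, which is classically equivalent to $t(\alpha_i)$. Neither conjunct now has the form $\circ(\cdot)$, so $t(\neg\neg\alpha_1 \wedge \neg\neg\alpha_2) = \neg\neg t(\alpha_1) \wedge \neg\neg t(\alpha_2)$. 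In the first version of $t$, disjunction and the conditional are direct homomorphic clauses anyway.

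\textbf{Modal case.} Let $\varphi = \Box\gamma$ with witness $\alpha$, so $\vdash^{\mathbf{T}} \gamma \leftrightarrow t(\alpha)$. Take $\psi = \alpha \wedge \circ\alpha$, which is syntactically of the special form, so $t(\psi) = \Box t(\alpha)$. By \emph{Necessitation} and $\mathrm{K}$ in $\mathbf{T}$, $\vdash^{\mathbf{T}} \Box\gamma \leftrightarrow \Box t(\alpha)$, which settles the case. Reflexivity (axiom $\mathrm{T}$) is not needed for surjectivity itself; it is needed only for truth preservation, Proposition \ref{proposition t truth preserving ct}.

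\textbf{Main obstacle.} The only delicate point is the non-compositionality of $t$, which lets the special clauses fire unintentionally in the Boolean steps. The plan is to force the generic clauses by inserting double negations, or by checking the syntactic shape directly. Two checks are needed. First, in the negation clause of the second translation, the special case must still produce exactly $\neg t(\alpha)$. Second, $\neg\neg\alpha$ is never of the form $\circ(\cdot)$, so the conjunction clause of $t$ takes its generic branch on $\neg\neg\alpha_1 \wedge \neg\neg\alpha_2$.
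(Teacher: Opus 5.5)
Your skeleton matches the paper's: induction on $\varphi$, with the modal case settled by $t(\alpha\wedge\circ\alpha)=\Box t(\alpha)$ plus Necessitation and $\mathrm{K}$. For the first translation, where $t(\neg\varphi)=\neg t(\varphi)$ holds by definition, your double-negation trick is a correct alternative to the paper's case split in the conjunction step; the paper uses either the generic clause or the witness $(\circ\beta\wedge\beta)\wedge\beta$.

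The gap is in the check you yourself flagged as delicate: for the second translation, $t(\neg\alpha)=\neg t(\alpha)$ is false in general. Your justification misses that $\alpha_1\wedge\neg\alpha_2$ \emph{is} a special $\circ$-pattern when $\alpha_1=\circ\neg\alpha_2$. Take $\alpha=\circ\neg p\wedge\neg p$. Then $t(\alpha)=\Box\neg p$, but the special negation clause gives $t(\neg\alpha)=\neg((\neg\neg p\vee\Box\neg p)\wedge\neg p)$, which is classically $\Box\neg p\rightarrow p$. This is not even $\mathbf{K}$-equivalent to $\neg\Box\neg p$ (take a dead end satisfying $p$); the two are only $\mathbf{T}$-equivalent. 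Even in the generic subcase, $t(\alpha)=t(\alpha_1)\wedge t(\neg\alpha_2)$, so your identity needs $t(\neg\alpha_2)=\neg t(\alpha_2)$, which is the same claim one level down. Your conjunction step rests on the same false identity, since $t(\neg\neg\alpha_i)=\neg t(\neg\alpha_i)$. For the same reason, your remark that axiom $\mathrm{T}$ is not needed fails for your choice of witnesses.

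The repair is routine, and there are two options.
\begin{itemize}
\item Prove by induction on $\beta$ that $\vdash^{\mathbf{T}} t(\neg\beta)\leftrightarrow\neg t(\beta)$. When $\beta=a\wedge\neg b$ is generic, this follows from the induction hypothesis on $b$. In the exceptional case $\beta=\circ\neg b\wedge\neg b$, the hypothesis reduces $t(\neg\beta)$ to $\neg(\Box t(\neg b)\wedge t(\neg b))$, and axiom $\mathrm{T}$ turns this into $\neg\Box t(\neg b)=\neg t(\beta)$.
\item Choose witnesses that never trigger the special negation clause. For negation, use $\neg(\alpha\wedge\alpha)$ when $\alpha$ has the shape $a\wedge\neg b$; its translation is exactly $\neg(t(\alpha)\wedge t(\alpha))$. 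For conjunction, use $(\alpha_1\wedge\alpha_1)\wedge(\alpha_2\wedge\alpha_2)$ or the paper's case split instead of double negations, since neither passes through the negation clause. This variant does vindicate your claim that $\mathbf{K}$ suffices.
\end{itemize}
The paper itself just calls the negation case straightforward, so it does not spell this out either.
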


\begin{proof}
By induction on the complexity of $\varphi \in Form_{\mathcal{L}^\Box}$ we show there is $\psi \in Form_{\mathcal{L}^\circ}$ such that $\vdash^{\mathbf{T}} \varphi \leftrightarrow t(\psi)$. The atomic case and the case of negation are straightforward. If $\varphi = \psi \wedge \gamma$, by induction hypothesis there are $\alpha, \beta \in Form_{\mathcal{L}^\circ}$ such that $\vdash^\mathbf{T} \psi \leftrightarrow t(\alpha)$ and $\vdash^\mathbf{T} \gamma \leftrightarrow t(\beta)$, so that $\vdash^\mathbf{T} (\psi \wedge \gamma) \leftrightarrow (t(\alpha) \wedge t(\beta))$. If $\alpha \neq \circ \beta$ and $\beta \neq \circ \alpha$, then by the definition of $t$, $t(\alpha) \wedge t(\beta) = t(\alpha \wedge \beta)$, and thus $\vdash^\mathbf{T} (\psi \wedge \gamma) \leftrightarrow t(\alpha \wedge \beta)$. Otherwise, let $\alpha = \circ \beta$. Then $t(\alpha) \wedge t(\beta) = t(\circ \beta) \wedge t(\beta) = (\neg t(\beta) \vee \Box t(\beta)) \wedge t(\beta)$. However, $\vdash^\mathbf{T} ((\neg t(\beta) \vee \Box t(\beta)) \wedge t(\beta)) \leftrightarrow (\Box t(\beta) \wedge t(\beta))$, and $\Box t(\beta) \wedge t(\beta) = t(\circ \beta \wedge \beta) \wedge t(\beta) = t((\circ \beta \wedge \beta) \wedge \beta)$. Therefore, $\vdash^\mathbf{T} (\psi \wedge \gamma) \leftrightarrow t((\circ \beta \wedge \beta) \wedge \beta)$. In case $\beta = \circ \alpha$, we have an analogous conclusion. If now $\varphi = \Box \gamma$, by induction hypothesis there is $\alpha \in Form_{\mathcal{L}^\circ}$ such that $\vdash^{\mathbf{T}} \gamma \leftrightarrow t(\alpha)$, and thus $\vdash^{\mathbf{T}} \Box \gamma \leftrightarrow \Box t(\alpha)$. Since $\Box t(\alpha) = t(\circ \alpha \wedge \alpha)$, we finally get $\vdash^\mathbf{T} \Box \gamma \leftrightarrow t(\circ \alpha \wedge \alpha)$.
\end{proof}

\begin{lemma}\label{lemma kt is t}
$\mathbf{T} = \mathbf{K}_t$.
\end{lemma}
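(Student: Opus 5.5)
The statement $\mathbf{T} = \mathbf{K}_t$ says two normal modal logics coincide, so I will prove the two inclusions separately. Here $\mathbf{K}_t$ is the smallest normal logic (closed under \emph{Modus Ponens}, \emph{Necessitation} and Uniform Substitution) containing $t[\mathbf{K}^\circ]$. Both $\mathbf{T}$ and $\mathbf{K}_t$ are closed under these rules. So each inclusion reduces to two checks: that the generating set of one logic lies in the other, and that the closure preserves the inclusion.

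For $\mathbf{K}_t \subseteq \mathbf{T}$, it suffices that $t(\varphi) \in \mathbf{T}$ whenever $\vdash^{\mathbf{K}^\circ} \varphi$. I argue by induction on the length of a $\mathbf{K}^\circ$-derivation. For the axioms I compute the translations directly.
\begin{itemize}
\item $t(b0) = \neg\top \vee \Box\top$, which is a theorem of $\mathbf{K}$.
\item $t(b1)$ is $\neg(\neg t(\varphi) \vee \Box t(\varphi)) \rightarrow t(\varphi)$, a tautology. With the second translation, $t$ is compositional for the conditional, and negation of a disjunction is handled by the clause for $\neg$; I would check that clause case by case.
\item $t(b2)$ is $((\neg a \vee \Box a)\wedge(\neg b \vee \Box b)) \rightarrow (\neg(a\wedge b) \vee \Box(a\wedge b))$, valid in $\mathbf{K}$ by case analysis on $a \wedge b$. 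If the conjunction $\varphi \wedge \psi$ has the special shape $\varphi = \circ\psi$, then $t(\varphi\wedge\psi)$ is $\Box t(\psi)$ rather than a conjunction. Then $t(b2)$ involves $\neg\Box t(\psi) \vee \Box\Box t(\psi)$, which is \emph{not} a $\mathbf{K}$- or $\mathbf{T}$-theorem in general.
\end{itemize}
This is where I expect the main obstacle. I would first try to show that the instance of $b2$ is still $\mathbf{T}$-valid semantically, by working out what $\circ\circ\psi \wedge \circ\psi$ actually forces. If that fails, I would restrict attention to the non-degenerate shapes and handle the others using $t(\circ\psi\wedge\psi)$ being $\Box t(\psi)$ together with axiom T.

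For the rules, \emph{Modus Ponens} is preserved since $t$ is compositional for $\rightarrow$. For $bN$, from $t(\varphi)\rightarrow t(\psi) \in \mathbf{T}$ I need $t((\circ\varphi\wedge\varphi)\rightarrow(\circ\psi\wedge\psi))$, that is, $\Box t(\varphi) \rightarrow \Box t(\psi)$. This follows by \emph{Necessitation} and axiom K.

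For $\mathbf{T} \subseteq \mathbf{K}_t$, I note that $\mathbf{K}_t$ is normal, so it contains $\mathbf{K}$. It therefore suffices to show that $\Box p \rightarrow p \in \mathbf{K}_t$. Uniform Substitution then yields all instances of T, and closure yields the rest. The key observation is that $(\circ p \wedge p)\rightarrow p$ is a propositional tautology, hence a theorem of $\mathbf{K}^\circ$. Its translation under $t$ is exactly $\Box p \rightarrow p$, by the special clause $t(\circ p \wedge p) = \Box t(p)$ and compositionality for $\rightarrow$. So $\Box p \rightarrow p \in t[\mathbf{K}^\circ] \subseteq \mathbf{K}_t$.

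As an alternative or sanity check for the first inclusion, I could argue semantically. By Proposition \ref{proposition t truth preserving ct}, $t$ is truth-preserving on $\mathbb{C}_\mathbf{T}$. The logic $\mathbf{K}^\circ$ is sound on all frames, so its theorems are valid on reflexive frames. Their translations are then valid on $\mathbb{C}_\mathbf{T}$, and by completeness of $\mathbf{T}$ they lie in $\mathbf{T}$. This would sidestep the awkward $b2$ cases entirely, and I would likely adopt it as the main route for $\mathbf{K}_t \subseteq \mathbf{T}$.
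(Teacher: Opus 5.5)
Your proposal is correct in outline and matches the paper on the key idea for $\mathbf{T} \subseteq \mathbf{K}_t$: axiom T arises as the $t$-image $\Box p \rightarrow p$ of the tautology $(\circ p \wedge p) \rightarrow p$.

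\textbf{The inclusion $\mathbf{K}_t \subseteq \mathbf{T}$.} Your semantic route is valid, and it is tighter than the paper's. The paper only notes that the translations of $b0$--$b2$ and of $bN$ are valid in $\mathbb{C}_\mathbf{K}$. That check does not cover the propositional instances of $\mathbf{K}^\circ$. Because $t$ is not compositional for $\wedge$, some of these, such as $(\circ p \wedge p) \rightarrow p$, translate to formulas that are only $\mathbf{T}$-valid. Your chain handles every theorem at once: soundness of $\mathbf{K}^\circ$ on all frames, truth-preservation of $t$ on $\mathbb{C}_\mathbf{T}$ (Proposition \ref{proposition t truth preserving ct}), and completeness of $\mathbf{T}$. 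Check the soundness of $\mathbf{K}^\circ$ directly: $b0$--$b2$ are valid on every frame, and $bN$ preserves frame validity. Do not borrow it from Theorem \ref{theorem soundness k circ}, since that theorem is derived from this lemma.

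Your worry about $b2$ is also unfounded. When $\varphi = \circ\psi$, the antecedent $\circ\circ\psi \wedge \circ\psi$ falls under the special clause as well. So $t(b2)$ is $\Box(\neg a \vee \Box a) \rightarrow (\neg\Box a \vee \Box\Box a)$ with $a = t(\psi)$, and this is a $\mathbf{K}$-theorem.

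\textbf{The inclusion $\mathbf{T} \subseteq \mathbf{K}_t$.} The step that needs repair is your claim that $\mathbf{K}_t$ is normal and so contains $\mathbf{K}$. The paper defines $\mathbf{K}_t$ as the closure of $t[\mathbf{K}^\circ]$ under \emph{Modus Ponens}, \emph{Necessitation} and Uniform Substitution only. Your own parenthetical says the same thing, and closure under these rules does not put axiom K into the set. This is why the paper's proof derives K explicitly. The formula $((p \rightarrow q) \wedge \circ(p \rightarrow q)) \rightarrow ((p \wedge \circ p) \rightarrow (q \wedge \circ q))$ is a $\mathbf{K}^\circ$-theorem. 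All three conjunctions have the special shape, and $t$ commutes with $\rightarrow$, so its image is $\Box(p \rightarrow q) \rightarrow (\Box p \rightarrow \Box q)$.

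You should also note that $t$ is the identity on $\circ$-free formulas. Hence every propositional tautology of $\mathcal{L}^\Box$ enters $\mathbf{K}_t$ via Uniform Substitution. With these two additions your argument proves the lemma under the paper's actual definition, not just under the stronger reading in which $\mathbf{K}_t$ is normal by fiat.
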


\begin{proof}
For one direction, notice the translation of axioms $b0$, $b1$ and $b2$, and of rule $bN$, are true in $\mathbb{C}_\mathbf{K}$, and thus derivable in $\mathbf{T}$.\footnote{See \cite{gilbertventuri2016}.} For the other, notice that axiom $\mathrm{T}$ is in $\mathbf{K}_t$: $(\varphi \wedge \circ \varphi) \rightarrow \varphi$ is a tautology, and $t((\varphi \wedge \circ \varphi) \rightarrow \varphi) = t(\varphi \wedge \circ \varphi) \rightarrow t(\varphi) = \Box t(\varphi) \rightarrow t(\varphi)$. Furthermore, axiom $\mathrm{K}$ is in $\mathbf{K}_t$: $\vdash^{\mathbf{K}^\circ} ((\varphi \rightarrow \psi) \wedge \circ (\varphi \rightarrow \psi)) \rightarrow ((\varphi \wedge \circ \varphi) \rightarrow (\psi \wedge \circ \psi))$, and $t(((\varphi \rightarrow \psi) \wedge \circ (\varphi \rightarrow \psi)) \rightarrow ((\varphi \wedge \circ \varphi) \rightarrow (\psi \wedge \circ \psi))) = t((\varphi \rightarrow \psi) \wedge \circ (\varphi \rightarrow \psi)) \rightarrow t((\varphi \wedge \circ \varphi) \rightarrow (\psi \wedge \circ \psi)) = \Box (t(\varphi) \rightarrow t(\psi)) \rightarrow (\Box t(\varphi) \rightarrow \Box t(\psi))$.\footnote{The proof of this sentence in $\mathbf{K}^\circ$ is easily obtainable by applying $bN$ to the second conjunct of the antecedent.} Furthermore, by definition, the inference rules of $\mathbf{T}$ (\emph{Modus Ponens}, \emph{Necessitation}, Uniform Substitution) hold in $\mathbf{K}_t$. Thus, $\mathbf{T} \subseteq \mathbf{K}_t$. 
\end{proof}

\begin{corollary}\label{corollary t preserv circ}
$t$ is theorem preserving from $\mathbf{K}^\circ$ to $\mathbf{T}$.
\end{corollary}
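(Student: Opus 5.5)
This corollary should follow almost immediately from Lemma \ref{lemma kt is t}, together with the definition of $\mathbf{K}_t$. Recall that $\mathbf{K}_t$ is the smallest normal modal logic containing $t[\mathbf{K}^\circ] = \{t(\varphi) \mid\ \vdash^{\mathbf{K}^\circ} \varphi\}$, i.e.\ the closure of that set under \emph{Modus Ponens}, \emph{Necessitation} and Uniform Substitution.

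The plan has two steps.
\begin{enumerate}
\item Take any $\varphi \in Form_{\mathcal{L}^\circ}$ with $\vdash^{\mathbf{K}^\circ} \varphi$. Then $t(\varphi) \in t[\mathbf{K}^\circ]$ by definition of the image.
\item Since $t[\mathbf{K}^\circ] \subseteq \mathbf{K}_t$, we get $\vdash^{\mathbf{K}_t} t(\varphi)$. By Lemma \ref{lemma kt is t}, $\mathbf{K}_t = \mathbf{T}$, so $\vdash^{\mathbf{T}} t(\varphi)$, which is exactly theorem preservation from $\mathbf{K}^\circ$ to $\mathbf{T}$.
\end{enumerate}

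The only point needing care is that Lemma \ref{lemma kt is t} is used in the inclusion $\mathbf{K}_t \subseteq \mathbf{T}$. That direction was argued there semantically: the translations of $b0$, $b1$, $b2$ and of $bN$ are valid on $\mathbb{C}_\mathbf{K}$, hence derivable in $\mathbf{T}$, and $\mathbf{T}$ is closed under the three rules generating $\mathbf{K}_t$.

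If one prefers not to lean on the identification of the two logics, a direct alternative is an induction on the length of a $\mathbf{K}^\circ$-derivation of $\varphi$. In the axiom cases one checks that $t(b0)$, $t(b1)$ and $t(b2)$ are $\mathbf{T}$-theorems; by Proposition \ref{proposition t truth preserving ct} and completeness of $\mathbf{T}$ for $\mathbb{C}_\mathbf{T}$, it suffices that they are valid on reflexive frames. \emph{Modus Ponens} is handled by compositionality of $t$ for the conditional, which holds under either choice of translation. The rule $bN$ is handled by checking that $t((\circ\varphi\wedge\varphi)\rightarrow(\circ\psi\wedge\psi)) = \Box t(\varphi)\rightarrow\Box t(\psi)$, which follows in $\mathbf{T}$ from $t(\varphi)\rightarrow t(\psi)$ by \emph{Necessitation} and K.

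I expect the main obstacle, in the direct route, to be the non-compositional clauses of $t$ for conjunctions with $\circ$-conjuncts and, under the second choice, for negations of the form $\neg(\alpha \wedge \neg \beta)$. An axiom instance may trigger these special clauses, e.g.\ in $b2$ when $\psi = \circ\varphi$, so each such case must be checked separately. Routing through Lemma \ref{lemma kt is t} avoids this bookkeeping, so that is the route I would take.
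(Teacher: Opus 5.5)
Your main route is exactly the paper's: $\vdash^{\mathbf{K}^\circ}\varphi$ gives $t(\varphi)\in t[\mathbf{K}^\circ]\subseteq\mathbf{K}_t=\mathbf{T}$ by Lemma~\ref{lemma kt is t}, which is all the paper's one-line proof says.

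You are right that the inclusion $\mathbf{K}_t\subseteq\mathbf{T}$ is really just a restatement of the corollary. In your direct induction, the semantic check (validity on reflexive frames, plus Proposition~\ref{proposition t truth preserving ct} and completeness of $\mathbf{T}$) already absorbs the special clauses of $t$, so you do not need to check those cases separately. That same check should also be applied to the propositional-tautology axioms, whose translations are $\mathbf{T}$-valid but not $\mathbf{K}$-valid; for example, $(\varphi\wedge\circ\varphi)\rightarrow\varphi$ translates to $\Box t(\varphi)\rightarrow t(\varphi)$.
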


\begin{proof}
Straightforward by Lemma \ref{lemma kt is t}.
\end{proof}

We thus may follow the Soundness strategy $2$:

\begin{theorem}\label{theorem soundness k circ}
$\mathbf{K}^\circ$ is sound w.r.t. $\mathbb{C}_\mathbf{T}$.
\end{theorem}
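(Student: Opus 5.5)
The plan is to apply Soundness strategy $2$ directly, via Theorem \ref{theorem soundness 2}, with $\mathcal{L}_2 = \mathcal{L}^\circ$, $\mathcal{L}_1 = \mathcal{L}^\Box$, $\mathbf{L} = \mathbf{K}^\circ$ and the translation $t$ defined above. Here we read $\mathbf{L}_t$ as $\mathbf{K}_t$, the smallest normal logic containing $t[\mathbf{K}^\circ]$. Theorem \ref{theorem soundness 2} needs two hypotheses: that $t$ is theorem preserving from $\mathbf{K}^\circ$ to $\mathbf{K}_t$, and that $t$ is truth-preserving in $\mathbb{C}_{\mathbf{K}_t}$. Its conclusion is that $\mathbf{K}^\circ$ is sound w.r.t. $\mathbb{C}_{\mathbf{K}_t}$.

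The key steps, in order:

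\begin{enumerate}
\item Identify the class of frames. By Lemma \ref{lemma kt is t}, $\mathbf{K}_t = \mathbf{T}$, so $\mathbb{C}_{\mathbf{K}_t} = \mathbb{C}_\mathbf{T}$. This is the class of frames validating $\mathbf{T}$, which by the standard correspondence for axiom $\mathrm{T}$ is exactly the class of reflexive frames.
\item Check the first hypothesis. Corollary \ref{corollary t preserv circ} states that $t$ is theorem preserving from $\mathbf{K}^\circ$ to $\mathbf{T} = \mathbf{K}_t$.
\item Check the second hypothesis. Proposition \ref{proposition t truth preserving ct} gives truth preservation of $t$ in $\mathbb{C}_\mathbf{T}$.
\item Conclude. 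Theorem \ref{theorem soundness 2} then yields $\vdash^{\mathbf{K}^\circ} \varphi \Rightarrow \mathbb{C}_\mathbf{T} \models \varphi$.
\end{enumerate}

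For the reader, the argument can also be unfolded directly. Suppose $\vdash^{\mathbf{K}^\circ}\varphi$. Then $\vdash^{\mathbf{T}} t(\varphi)$, and $\mathbf{T}$ is sound w.r.t. reflexive frames, so $\mathbb{C}_\mathbf{T} \models t(\varphi)$. Truth preservation of $t$ then gives $\mathbb{C}_\mathbf{T} \models \varphi$.

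The only delicate point is the theorem-preservation step. It rests on the direction $t[\mathbf{K}^\circ] \subseteq \mathbf{T}$ of Lemma \ref{lemma kt is t}, which is a claim about all theorems of $\mathbf{K}^\circ$, not just its axioms. To make this explicit, I would argue by induction on the length of $\mathbf{K}^\circ$-derivations, with one case per axiom or rule:

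\begin{itemize}
\item \emph{Axioms $b0$--$b2$.} Their $t$-images are valid on all frames, hence theorems of $\mathbf{K} \subseteq \mathbf{T}$. For example, $t(\circ\top) = \neg\top \vee \Box\top$.
\item \emph{Modus Ponens.} This case is preserved because $t$ is compositional for $\rightarrow$.
\item \emph{Rule $bN$.} Suppose $\vdash^{\mathbf{T}} t(\varphi)\rightarrow t(\psi)$. We need $\vdash^{\mathbf{T}} t((\circ\varphi\wedge\varphi)\rightarrow(\circ\psi\wedge\psi))$, which under $t$ becomes $\Box t(\varphi) \rightarrow \Box t(\psi)$. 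This follows from Necessitation and axiom $\mathrm{K}$.
\end{itemize}

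This last case is the step I expect to be the main obstacle. The clauses of $t$ for conjunctions of the form $\circ\varphi\wedge\varphi$ must be tracked carefully, since the order of the conjuncts decides which case of the definition applies.
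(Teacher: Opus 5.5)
Your proof is the paper's own. It invokes Theorem~\ref{theorem soundness 2} with $\mathbf{K}_t=\mathbf{T}$ (Lemma~\ref{lemma kt is t}), truth preservation in $\mathbb{C}_\mathbf{T}$ (Proposition~\ref{proposition t truth preserving ct}) and theorem preservation (Corollary~\ref{corollary t preserv circ}), and your direct unfolding rightly isolates $t[\mathbf{K}^\circ]\subseteq\mathbf{T}$ as the only part of Lemma~\ref{lemma kt is t} actually needed.

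Your optional induction is incomplete, though. It omits the classical tautology axioms of $\mathbf{K}^\circ$, and that case, not $bN$, is where $\mathbf{T}$ rather than $\mathbf{K}$ is needed: the tautology $(\circ\beta\wedge\beta)\rightarrow\beta$ is sent to $\Box t(\beta)\rightarrow t(\beta)$. It is handled by $\vdash^{\mathbf{T}}\Box t(\beta)\leftrightarrow(t(\circ\beta)\wedge t(\beta))$ together with replacement of provable equivalents. By these, each $t$-image of a tautology is $\mathbf{T}$-equivalent to a substitution instance of that tautology.
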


\begin{proof}
We know $\mathcal{L}^\circ \preceq \mathcal{L}^\Box$, and by Lemma \ref{lemma kt is t}, $\mathbf{T} = \mathbf{K}_t$. Furthermore, by Proposition \ref{proposition t truth preserving ct}, $t$ is truth-preserving in $\mathbb{C}_\mathbf{T}$, and by Corollary \ref{corollary t preserv circ}, it is also theorem preserving from $\mathbf{K}^\circ$ to $\mathbf{T}$. Thus, by Theorem \ref{theorem soundness 2}, $\mathbf{K}^\circ$ must be sound w.r.t. $\mathbb{C}_\mathbf{T}$.
\end{proof}

Now, since $t$ is not compositional, we cannot use Proposition \ref{proposition surjective theorem reflect}. Instead, we prove directly, with the above results, that $t$ is theorem-reflecting.

\begin{lemma}\label{lemma t reflect circ}
$t$ is theorem reflecting from $\mathbf{K}^\circ$ to $\mathbf{T}$.
\end{lemma}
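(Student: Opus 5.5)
The plan is to show that if $\vdash^{\mathbf{T}} t(\varphi)$ then $\vdash^{\mathbf{K}^\circ} \varphi$ semantically, rather than by inducting on $\mathbf{T}$-proofs. The non-compositionality of $t$ on conjunction blocks the route through Proposition \ref{proposition surjective theorem reflect}. The argument will be a chain
\[
\vdash^{\mathbf{T}} t(\varphi) \;\Rightarrow\; \mathbb{C}_\mathbf{T} \models t(\varphi) \;\Rightarrow\; \mathbb{C}_\mathbf{T} \models \varphi \;\Rightarrow\; \mathbb{C}_\mathbf{K} \models \varphi \;\Rightarrow\; \vdash^{\mathbf{K}^\circ} \varphi .
\]

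The first arrow is the soundness of $\mathbf{T}$ with respect to reflexive frames. The second is Proposition \ref{proposition t truth preserving ct}, since $t$ is truth-preserving in $\mathbb{C}_\mathbf{T}$.

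For the third arrow I use insensitivity, as in Section \ref{sec soundness}. Let $\rho$ be the operation sending a frame $\langle W, R \rangle$ to its reflexive closure $\langle W, R \cup \{(w,w) \mid w \in W\}\rangle$. I will check that $\mathcal{L}^\circ$ is insensitive to $\rho$ in the sense of Definition \ref{def insensitivity operation}. Fix a valuation $V$, a world $w$, and write $R^r$ for the reflexive closure of $R$. By induction on $\varphi$ I show that $\langle W,R,V\rangle, w \models \varphi$ iff $\langle W,R^r,V\rangle, w \models \varphi$. The only nontrivial case is $\circ\psi$. If $\psi$ is false at $w$, then $\circ\psi$ holds in both models. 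If $\psi$ is true at $w$, then the extra successor $w$ of $w$ in $R^r$ satisfies $\psi$, so $\Box\psi$ agrees in the two models. Frame validity therefore agrees between $F$ and its reflexive closure.

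Every frame is sent by $\rho$ into $\mathbb{C}_\mathbf{T}$. So if $\mathbb{C}_\mathbf{T} \models \varphi$ and $F$ is any frame, then $F^r \models \varphi$, hence $F \models \varphi$. This gives $\mathbb{C}_\mathbf{K} \models \varphi$; it is the collapse of Theorem \ref{theorem closure collapse} and Corollary \ref{corollary cond full insensitivity}, with reflexive closure as the $P$-closure and $F$ recoverable from $F^r$.

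The last arrow needs the completeness of $\mathbf{K}^\circ$ with respect to all frames. This is the main obstacle, because this section is presumably meant to reprove that very completeness without a $\mathcal{L}^\circ$-canonical model, so I should not import it from \cite{marcos2005} or \cite{gilbertventuri2016} without creating circularity.

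If circularity is a concern, I would instead argue syntactically, by induction on $\mathbf{T}$-derivations of formulas in the image of $t$ up to $\mathbf{T}$-equivalence. Lemma \ref{lemma circ t is surjective} turns every formula in a $\mathbf{T}$-proof into some $t(\alpha)$. Modus Ponens steps are handled by the compositionality of $t$ on $\rightarrow$. Necessitation steps $\gamma \mapsto \Box\gamma$ become $\alpha \mapsto \circ\alpha \wedge \alpha$ via $bNec$, using $t(\circ\alpha\wedge\alpha) = \Box t(\alpha)$. Axioms K and T pull back to $\mathbf{K}^\circ$-theorems as in Lemma \ref{lemma kt is t}.

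Within this route the delicate point is replacing $\mathbf{T}$-equivalents, i.e. showing that $\vdash^\mathbf{T} t(\alpha) \leftrightarrow t(\beta)$ implies $\vdash^{\mathbf{K}^\circ} \alpha \leftrightarrow \beta$. Attempting this for $\alpha = \top$ recovers the full claim, so this step is itself as hard as theorem reflection and the syntactic route is circular as it stands. I would therefore fall back on the semantic chain, citing completeness of $\mathbf{K}^\circ$ with respect to $\mathbb{C}_\mathbf{K}$ from the literature.
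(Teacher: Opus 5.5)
Your semantic chain is a valid proof of the lemma taken in isolation. Soundness of $\mathbf{T}$ is standard, and the truth-preservation of $t$ on reflexive models is Proposition~\ref{proposition t truth preserving ct}. Your induction is also right: every $\mathcal{L}^\circ$-formula has the same value at each world of a model and of its reflexive closure. So the published completeness of $\mathbf{K}^\circ$ over $\mathbb{C}_\mathbf{K}$ closes the argument.

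The paper instead takes your second, syntactic route. It argues by induction on the length of a $\mathbf{T}$-derivation of $t(\varphi)$. Modus Ponens is handled through Lemma~\ref{lemma circ t is surjective} and the compositionality of $t$ for $\rightarrow$. Necessitation is handled by reading $t(\varphi)=\Box\gamma$ as $\varphi=\beta\wedge\circ\beta$ with $\gamma=t(\beta)$, and then applying $bNec$.

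Your diagnosis applies to that proof. From $\vdash^{\mathbf{T}}\psi\rightarrow t(\varphi)$ and $\vdash^{\mathbf{T}}\psi$, it uses the $\mathbf{T}$-equivalence $\psi\leftrightarrow t(\alpha)$ to pass to $\vdash^{\mathbf{T}}t(\alpha)\rightarrow t(\varphi)$ and $\vdash^{\mathbf{T}}t(\alpha)$. It then invokes the induction hypothesis, but these new derivations need not be shorter than $n+1$. The base case is likewise a reflection claim that Lemma~\ref{lemma kt is t} does not supply.

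So the two routes trade off as follows. Yours is sound given the cited result, but, as you feared, it turns Theorem~\ref{theorem completeness circ} into the imported completeness result composed with your collapse argument. The advertised canonical-model-free proof is then lost. The paper's route aims at that proof but does not establish the lemma as written.

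What neither argument supplies is a syntactic proof that never has to reflect $\mathbf{T}$-equivalences through $t$; a back-translation does this. Let $s:\mathcal{L}^\Box\to\mathcal{L}^\circ$ be compositional with $s(\Box\gamma)=s(\gamma)\wedge\circ s(\gamma)$. This is the translation $*$ from the footnote in Section~\ref{sec completeness}.

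First, $\vdash^{\mathbf{T}}\chi$ implies $\vdash^{\mathbf{K}^\circ}s(\chi)$ for every $\chi\in Form_{\mathcal{L}^\Box}$, by induction on $\mathbf{T}$-derivations:
\begin{itemize}
\item $s$ sends tautologies and instances of axiom $\mathrm{T}$ to tautologies;
\item $s$ sends instances of axiom $\mathrm{K}$ to $\mathbf{K}^\circ$-theorems, using $b2$ and $bN$;
\item Modus Ponens is preserved by compositionality;
\item Necessitation becomes $bNec$.
\end{itemize}
Because this induction ranges over arbitrary $\chi$, no equivalence ever needs to be pulled back.

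Second, $\vdash^{\mathbf{K}^\circ}s(t(\varphi))\leftrightarrow\varphi$, by induction on the complexity of $\varphi$. This uses three facts:
\begin{itemize}
\item $\mathbf{K}^\circ$ admits replacement of provable equivalents under $\circ$. Applying $bN$ to $\alpha\rightarrow\beta$ gives $\vdash(\circ\alpha\wedge\alpha)\rightarrow\circ\beta$, and $\beta\rightarrow\alpha$ together with $b1$ gives $\vdash\neg\alpha\rightarrow\circ\beta$.
\item The clauses $t(\beta\wedge\circ\beta)=t(\circ\beta\wedge\beta)=\Box t(\beta)$ come back as $s(t(\beta))\wedge\circ s(t(\beta))$.
\item $s(t(\circ\beta))$ is equivalent to $\neg\beta\vee(\beta\wedge\circ\beta)$, hence to $\circ\beta$ by $b1$.
\end{itemize}

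Combining the two, $\vdash^{\mathbf{T}}t(\varphi)$ gives $\vdash^{\mathbf{K}^\circ}s(t(\varphi))$, and hence $\vdash^{\mathbf{K}^\circ}\varphi$. This proves the lemma with no semantics and keeps Theorem~\ref{theorem completeness circ} non-circular.
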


\begin{proof}
By induction on the length of the proof of $t(\varphi)$ for $\varphi \in Form_{\mathcal{L}^\circ}$ we show that if $\vdash^\mathbf{T} t(\varphi)$, then we must have $\vdash^{\mathbf{K}^\circ} \varphi$. If the length is $0$, by Lemma \ref{lemma kt is t} the conclusion is trivial. For the inductive step, let $\vdash^\mathbf{T} t(\varphi)$, and let the length of the proof be $n+1$. Then either $\vdash^\mathbf{T} \psi \rightarrow t(\varphi)$, $\vdash^\mathbf{T} \psi$, and the last step is an application of \emph{Modus Ponens}, or $t(\varphi) = \Box \gamma$ for some $\gamma \in Form_{\mathcal{L}^\Box}$, $\vdash^\mathbf{T} \gamma$, and the last step is an application of \emph{Necessitation}. In case of \emph{Modus Ponens}, by Lemma \ref{lemma circ t is surjective} $t$ is surjective up to equivalence w.r.t. $\mathbf{T}$, so there is $\alpha \in Form_{\mathcal{L}^\circ}$ such that $\vdash^\mathbf{T} \psi \leftrightarrow t(\alpha)$, and thus $\vdash^\mathbf{T} t(\alpha) \rightarrow t(\varphi)$ and $\vdash^\mathbf{T} t(\alpha)$. By the definition of $t$ (that is, the compositionality for the conditional), $t(\alpha) \rightarrow t(\varphi) = t(\alpha \rightarrow \varphi)$, so by induction hypothesis, $\vdash^{\mathbf{K}^\circ} \alpha \rightarrow \varphi$ and $\vdash^{\mathbf{K}^\circ} \alpha$, so by \emph{Modus Ponens}, $\vdash^\mathbf{T} \varphi$. In case of \emph{Necessitation}, by the definition of $t$, $\varphi = (\beta \wedge \circ \beta)$ and $\gamma = t(\beta)$ for some $\beta \in Form_{\mathcal{L}^\circ}$. By induction hypothesis, $\vdash^{\mathbf{K}^\circ} \beta$. By $bNec$, $\vdash^{\mathbf{K}^\circ} \circ \beta$, which gives us $\vdash^{\mathbf{K}^\circ} \beta \wedge \circ \beta$.
\end{proof}

At this point, we may follow either Completeness strategy $2$ or $2.5$.

\begin{theorem}\label{theorem completeness circ}
$\mathbf{K}^\circ$ is complete w.r.t. $\mathbb{C}_\mathbf{T}$.
\end{theorem}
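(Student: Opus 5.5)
The plan is to apply Completeness strategy $2.5$ (Theorem \ref{theorem completeness 4}) with $\mathbf{L}^\lhd = \mathbf{K}^\circ$ and the translation $t : \mathcal{L}^\circ \to \mathcal{L}^\Box$ fixed above. Theorem \ref{theorem completeness 4} needs three ingredients: $t$ is truth-preserving in $\mathbb{C}_{\mathbf{L}_t}$; $t$ is theorem reflecting from $\mathbf{K}^\circ$ to $\mathbf{L}_t$; and $\mathbf{L}_t$ is complete w.r.t. $\mathbb{C}_{\mathbf{L}_t}$. The first step is to identify $\mathbf{L}_t$. By Lemma \ref{lemma kt is t}, $\mathbf{L}_t = \mathbf{K}_t = \mathbf{T}$, where $\mathbf{K}_t$ is the $\mathbf{K}_t$ of the paper with $\mathbf{K}^\circ$ in place of $\mathbf{L}^\lhd$. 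Hence $\mathbb{C}_{\mathbf{L}_t}$ is the class of frames validating $\mathbf{T}$, which by standard correspondence for axiom $\mathrm{T}$ is exactly the class of reflexive frames $\mathbb{C}_\mathbf{T}$.

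With this identification each hypothesis is already available. Truth preservation in $\mathbb{C}_\mathbf{T}$ is Proposition \ref{proposition t truth preserving ct}. Theorem reflection from $\mathbf{K}^\circ$ to $\mathbf{T}$ is Lemma \ref{lemma t reflect circ}. Completeness of $\mathbf{T}$ w.r.t. $\mathbb{C}_\mathbf{T}$ is the classical result. Alternatively, one could invoke the canonicity of $\mathbf{T}$, since the canonical relation $L(x) \subseteq y$ is reflexive because $\Box\varphi \rightarrow \varphi \in x$. That would allow Completeness strategy $2$ (Theorem \ref{theorem completeness 2}) to be used instead.

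The argument then runs as follows. Suppose $\mathbb{C}_\mathbf{T} \models \varphi$ for some $\varphi \in Form_{\mathcal{L}^\circ}$. By truth preservation, $\mathbb{C}_\mathbf{T} \models t(\varphi)$. By completeness of $\mathbf{T}$, $\vdash^\mathbf{T} t(\varphi)$. By theorem reflection, $\vdash^{\mathbf{K}^\circ} \varphi$. Together with Theorem \ref{theorem soundness k circ}, this shows that $\mathbf{K}^\circ$ is the logic of reflexive frames. By the insensitivity of $\mathcal{L}^\circ$ to reflexive closure, it is in fact the logic of all frames, although that is not needed for the statement.

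The only delicate point is the identification $\mathbb{C}_{\mathbf{L}_t} = \mathbb{C}_\mathbf{T}$. It depends on Lemma \ref{lemma kt is t} giving $\mathbf{K}_t$ exactly $\mathbf{T}$, and not something stronger; if it gave more, the frame class would shrink, and truth preservation and completeness would have to be checked against that smaller class. I would therefore confirm the inclusion $\mathbf{K}_t \subseteq \mathbf{T}$. This amounts to verifying that the translations of $b0$, $b1$ and $b2$, and the rule $bN$, are validated by reflexive frames, which holds by the semantic clause for $\circ$. Given that, the theorem follows directly by chaining the earlier results.
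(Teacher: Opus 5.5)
Your proposal is correct and is essentially the paper's own argument. Both identify $\mathbf{K}_t=\mathbf{T}$ by Lemma~\ref{lemma kt is t}, then feed truth preservation (Proposition~\ref{proposition t truth preserving ct}) and theorem reflection (Lemma~\ref{lemma t reflect circ}) into Completeness strategy $2$ or $2.5$; you lead with Theorem~\ref{theorem completeness 4} via completeness of $\mathbf{T}$ and mention canonicity as the alternative, while the paper leads with canonicity and Theorem~\ref{theorem completeness 2}.
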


\begin{proof}
By Lemma \ref{lemma kt is t}, $\mathbf{K}_t = \mathbf{T}$. $\mathbf{T}$ is canonical, and $t$ is truth-preserving in $\mathbb{C}_\mathbf{T}$. By Lemma \ref{lemma t reflect circ}, $t$ is theorem reflecting from $\mathbf{K}^\circ$ to $\mathbf{T}$, so by Theorem \ref{theorem completeness 2}, we have the conclusion. Alternatively, since $t$ is truth-preserving in $\mathbb{C}_\mathbf{T}$ and theorem reflecting from $\mathbf{K}^\circ$ to $\mathbf{T}$; and since $\mathbf{T}$ is complete w.r.t. $\mathbb{C}_\mathbf{T}$, by Theorem \ref{theorem completeness 3}, we also have the conclusion.
\end{proof}

Thus, we have offered a proof of the completeness of $\mathbf{K}^\circ$ which, unlike those offered in \cite{marcos2005}, \cite{steinsvold2008a} or \cite{gilbertventuri2016}, does not require an explicit definition of a canonical model for $\mathbf{K}^\circ$. Notice the asymmetry between Theorems \ref{theorem soundness k circ} and \ref{theorem completeness circ}, which highlight the collapse of validity between $\mathbb{C}_\mathbf{K}$ and $\mathbb{C}_\mathbf{T}$ in $\mathcal{L}^\circ$.

Let us now try that with a different operator. Consider the operator $U$ of unknowable necessary truths, presented in \cite{yagoventuri2026}, and defined as:

\medskip

\begin{tabular}{lll}
$\mathcal{M}, w \models U \varphi$ & iff & $\mathcal{M}, w \models \varphi$ and $\forall u \in R(w) \exists v \in R(u)(\mathcal{M}, v \not\models \varphi)$\\
 & iff & $\mathcal{M}, w \models \varphi \wedge \Box \Diamond \neg \varphi$
\end{tabular}

\medskip

Consider now the mapping $h : \mathcal{L}^\Box \to \mathcal{L}^U$ defined as:

\medskip

\begin{tabular}{lll}
$h(p)$ & $=$ & $p$\\
$h(\neg \varphi)$ & $=$ & $\neg h(\varphi)$\\
$h(\varphi \wedge \psi)$ & $=$ & $h(\varphi) \wedge h(\psi)$\\
$h(\Box \varphi)$ & $=$ & $h(\varphi) \wedge \neg U h(\varphi)$
\end{tabular}

\medskip

\begin{proposition}\label{prop translation 2}
$h$ is truth-preserving in the class $\mathbb{C}_\mathbf{N}$ of strictly narcissistic frames (those such that for any $w \in W$, $R(w) = \{w\}$).
\end{proposition}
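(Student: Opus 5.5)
The plan is to prove, by induction on the complexity of $\varphi \in Form_{\mathcal{L}^\Box}$, the pointwise claim: for every frame $\langle W, R \rangle \in \mathbb{C}_\mathbf{N}$, every model $\mathcal{M} = \langle W, R, V \rangle$ on it, and every $w \in W$, $\mathcal{M}, w \models \varphi$ iff $\mathcal{M}, w \models h(\varphi)$. This pointwise version gives truth preservation in $\mathbb{C}_\mathbf{N}$ in the sense used in Proposition \ref{prop translation}. The atomic case holds because $h(p) = p$. The cases for $\neg$ and $\wedge$ follow from the induction hypothesis, since $h$ is compositional for non-modal formulas.

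The only substantive case is $\varphi = \Box \gamma$, and it rests on two observations about a world $w$ with $R(w) = \{w\}$. First, $\mathcal{M}, w \models \Box \gamma$ iff $\mathcal{M}, w \models \gamma$, because $w$ is its own unique successor. Second, for any formula $\chi$ of $\mathcal{L}^U$, $U\chi$ is false at $w$. Indeed, $\mathcal{M}, w \models U\chi$ would require $\mathcal{M}, w \models \chi$ together with $\forall u \in R(w)\, \exists v \in R(u)\, (\mathcal{M}, v \not\models \chi)$. The only choice is $u = v = w$, so we would also need $\mathcal{M}, w \not\models \chi$, a contradiction. Hence $\neg U h(\gamma)$ holds at every world of a strictly narcissistic frame.

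Combining these, $\mathcal{M}, w \models h(\Box\gamma)$ iff $\mathcal{M}, w \models h(\gamma) \wedge \neg U h(\gamma)$ iff $\mathcal{M}, w \models h(\gamma)$. By the induction hypothesis this holds iff $\mathcal{M}, w \models \gamma$, which by the first observation holds iff $\mathcal{M}, w \models \Box\gamma$. This closes the induction.

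I do not expect a real obstacle. The only point needing care is the vacuity argument that $U$ is never satisfiable on such frames. This is where the strict form of narcissism ($R(w) = \{w\}$, not merely reflexivity) is essential: it pins down both quantifiers in the truth condition of $U$.
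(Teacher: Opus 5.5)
Your proof is correct and takes essentially the same approach as the paper: a pointwise induction whose only nontrivial case uses $R(w)=\{w\}$ to obtain both $\Box\gamma \leftrightarrow \gamma$ and $\neg U h(\gamma)$ at $w$. Your explicit observation that $U\chi$ is unsatisfiable at any world of a strictly narcissistic frame is a slightly cleaner statement of the paper's appeal to $R(w)=R^2(w)=\{w\}$.
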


\begin{proof}
By induction on the complexity of $\varphi \in Form_{\mathcal{L}^\Box}$, we show $\mathbb{C}_\mathbf{N} \models \varphi$ iff $\mathbb{C}_\mathbf{N} \models h(\varphi)$. We skip the non-modal cases. Let $\langle W, R \rangle \in \mathbb{C}_\mathbf{N}$, $\mathcal{M} = \langle W, R, V \rangle$, and $w \in W$. Suppose $\mathcal{M}, w \models \Box \varphi$. Then, for any $u \in R(w)$, $\mathcal{M}, u \models \varphi$. By the property of $\mathbb{C}_\mathbb{N}$, $w \in R(w)$, and so $\mathcal{M}, w \models \varphi$. By induction hypothesis, $\mathcal{M}, w \models h(\varphi)$. Since strict narcissism implies reflexivity, $R(w) = R^2(w) = \{w\}$, so we get that $\mathcal{M}, w \models \neg U h(\varphi)$. Thus, $\mathcal{M}, w \models h(\varphi) \wedge \neg U h(\varphi)$. Let now $\mathcal{M}, w \models h(\varphi) \wedge \neg U h(\varphi)$. That means $\mathcal{M}, w \models h(\varphi)$. By induction hypothesis, $\mathcal{M}, w \models \varphi$. Once again, since $R(w) = R^2(w) = \{w\}$, that means $\mathcal{M}, w \models \Box \varphi$.
\end{proof}

Let $\mathbf{N}$ be the extension of $\mathbf{K}$ by axioms $\mathrm{N} = \varphi \rightarrow \Box \varphi$ and $\mathrm{D} = \Diamond \top$, and $\mathbf{N}_h$ the smallest logic of $\mathcal{L}^U$ containing $h[\mathbf{N}]$.\footnote{We note axiom $\mathrm{D}$ usually takes the form $\Box \varphi \rightarrow \Diamond \varphi$, but for simplicity, we use the formulation as given.} Then, by the Soundness strategy $1$:

\begin{corollary}\label{corollary soudness 55}
Let $\mathbf{L}^U \supseteq \mathbf{N}_h$. Then $\mathbf{L}^U$ is sound w.r.t. $\mathbb{C}_{\mathbf{N}}$. 
\end{corollary}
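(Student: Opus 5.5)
The plan mirrors the proof of Corollary \ref{corollary soudness 44}: the statement should follow from Proposition \ref{prop translation 2} together with Theorem \ref{theorem soundness 1} (Soundness strategy $1$), with $\mathcal{L}_1 = \mathcal{L}^\Box$, $\mathcal{L}_2 = \mathcal{L}^U$ and the normal logic $\mathbf{N}$ in the role of $\mathbf{L}$.

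First, I check the hypotheses of Theorem \ref{theorem soundness 1}. We need $\mathcal{L}^U \preceq \mathcal{L}^\Box$. This holds in the class of all frames, since $U\varphi$ is defined on the kernel by $\varphi \wedge \Box\Diamond\neg\varphi$; by Theorems \ref{theorem definabilities 1}--\ref{theorem expressivity}, $\mathcal{L}^U \preceq \mathcal{L}^\Box$, and a fortiori in $\mathbb{C}_\mathbf{N}$. Next, I would verify that $\mathbb{C}_\mathbf{N}$ is exactly the class of frames of $\mathbf{N}$. Axiom $\mathrm{N}$ corresponds to $R(w) \subseteq \{w\}$, and $\mathrm{D}$ to $R(w) \neq \varnothing$; together they give $R(w) = \{w\}$. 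Finally, Proposition \ref{prop translation 2} supplies the truth-preservation of $h$ in this class. Theorem \ref{theorem soundness 1} then yields that $\mathbf{N}_h = Cl_{\models}(h[\mathbf{N}])$ is sound w.r.t. $\mathbb{C}_\mathbf{N}$.

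Second, I pass from $\mathbf{N}_h$ to an arbitrary $\mathbf{L}^U \supseteq \mathbf{N}_h$. I would read $\mathbf{L}^U$, as in Corollary \ref{corollary soudness 44}, as one of the logics produced by the strategy. That is, $\mathbf{L}^U = \mathbf{L}_h$ for a normal $\mathbf{L} \supseteq \mathbf{N}$, so that $\mathbf{L}_h \supseteq \mathbf{N}_h$ automatically. Every frame of $\mathbf{L}$ lies in $\mathbb{C}_\mathbf{N}$, and $h$ is truth-preserving there by Proposition \ref{prop translation 2}. Rerunning the inductive argument of Theorem \ref{theorem soundness 1} along the levels $\mathbf{L}_0 = h[\mathbf{L}]$ and $\mathbf{L}_{n+1}$ then shows that every member of $\mathbf{L}^U$ is valid on the frames of $\mathbb{C}_\mathbf{N}$ validating $\mathbf{L}$. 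For $\mathbf{L} = \mathbf{N}$ these are all of $\mathbb{C}_\mathbf{N}$.

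The main obstacle is this second step. Soundness is not in general inherited upward by extensions, so the claim for arbitrary supersets depends on exactly the same convention as Corollary \ref{corollary soudness 44}: $\mathbb{C}_{\mathbf{L}^U}$ sits inside $\mathbb{C}_\mathbf{N}$, and soundness is measured there. I would state this reading explicitly and not add any further hypothesis. The remaining verifications are routine: the frame correspondence for $\mathrm{N}$ and $\mathrm{D}$, and the use of Proposition \ref{prop translation 2}.
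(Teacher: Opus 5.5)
Your first step is exactly the paper's proof, which consists only of citing Proposition~\ref{prop translation 2} and Theorem~\ref{theorem soundness 1}. Your extra checks are correct and are left implicit in the paper: that $\mathcal{L}^U \preceq \mathcal{L}^\Box$, and that $\mathrm{N}$ and $\mathrm{D}$ correspond to $R(w) \subseteq \{w\}$ and seriality, so $\mathbb{C}_\mathbf{N}$ really is the frame class of $\mathbf{N}$. Like the paper's citation, though, this only yields soundness of $\mathbf{N}_h$ itself.

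The gap is in your second step, and it has two problems.

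\textbf{It proves a different statement.} Replacing an arbitrary $\mathbf{L}^U \supseteq \mathbf{N}_h$ by some $\mathbf{L}_h$ with $\mathbf{L} \supseteq \mathbf{N}$ normal does not cover logics not of that form. It also only gives soundness w.r.t.\ $\mathbb{C}_\mathbf{L}$, not $\mathbb{C}_\mathbf{N}$.

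\textbf{The justification is off.} In the paper's notation, $\mathbb{C}_{\mathbf{L}^U}$ does not sit inside $\mathbb{C}_\mathbf{N}$; presumably you meant $\mathbb{C}_\mathbf{L}$. On the irreflexive two-point cycle, $U\chi$ is false at every world. So that frame validates $h[\mathbf{N}]$, hence $\mathbf{N}_h$, without being strictly narcissistic.

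No convention is needed, because the superset clause is essentially vacuous. The argument runs as follows.
\begin{enumerate}
\item From $\neg\varphi \to \Box\neg\varphi$, $\mathrm{K}$ and $\Diamond\top$ one gets $\Box\varphi \to \varphi$. So $\mathbf{N}$ is the trivial logic $\Box\varphi \leftrightarrow \varphi$.
\item Correspondingly, $\mathrm{N}_h$ gives $\varphi \to \neg U\varphi$, and $U\varphi \to \varphi$ is valid on every frame. Hence $\neg U\varphi \in \mathbf{N}_h$ for all $\varphi$.
\item So every $\mathcal{L}^U$-formula $\psi$ is $\mathbf{N}_h$-equivalent to the propositional formula $\psi^\bot$ obtained by replacing its $U$-subformulas with $\bot$.
\item Suppose a logic $\mathbf{L}^U \supseteq \mathbf{N}_h$ contains some $\psi \notin \mathbf{N}_h$. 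Then $\psi^\bot$ is a non-tautology lying in $\mathbf{L}^U$. Substituting $\top$ or $\bot$ for its variables according to a falsifying assignment puts $\bot$ in $\mathbf{L}^U$. This uses that logics$_S$ are closed under substitution, since frame validity is.
\end{enumerate}
Thus the only consistent $\mathbf{L}^U \supseteq \mathbf{N}_h$ is $\mathbf{N}_h$, and the corollary follows from your first step alone.

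The inconsistent logic $Form_{\mathcal{L}^U}$ is a genuine counterexample to the literal statement, so consistency must be assumed. The same argument run in $\mathcal{L}^\Box$ shows that, under your own reading, the only consistent normal $\mathbf{L} \supseteq \mathbf{N}$ is $\mathbf{N}$ itself. So the detour through a general $\mathbf{L}$ adds nothing.
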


\begin{proof}
By Proposition \ref{prop translation 2} and Theorem \ref{theorem soundness 1}.
\end{proof}

This time, let us the Completeness strategy $3$. Translate $\mathbf{N}$ into $\mathcal{L}^U$ using the previously defined $h$, so that we obtain the translated versions of the normal modal logic axioms and inference rules (skipping \emph{Modus Ponens}, the classical propositional logic axioms, and Uniform Substitution):

\begin{itemize}
\item[$\mathrm{K}_h$] $\big{(} (\varphi \rightarrow \psi) \wedge \neg U (\varphi \rightarrow \psi) \big{)} \to \big{(} (\varphi \wedge \neg U \varphi) \rightarrow (\psi \wedge \neg U \psi) \big{)}$
\item[$\mathrm{N}_h$] $\varphi \rightarrow (\varphi \wedge \neg U \varphi)$
\item[$\mathrm{D}_h$] $\bot \vee \neg U \top$
\item[$Nec_h$] from $\vdash \varphi$ infer $\vdash \varphi \wedge \neg U \varphi$
\end{itemize}

Let $\mathbf{L} \supseteq \mathbf{N}$. Then, once again, by the structure of the translated axioms and inference rules, $h$ is theorem-preserving from $\mathbf{L}$ to $\mathbf{L}^U_h$.

\begin{lemma}\label{lemma h is surjective 2}
Let $\mathbf{L} \supseteq \mathbf{N}$. Then $h$ is surjective up to equivalence w.r.t. $\mathbf{L}$.
\end{lemma}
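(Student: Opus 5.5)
The plan is to mirror the proof of Lemma \ref{lemma h is surjective}, proceeding by induction on the complexity of $\varphi \in Form_{\mathcal{L}^U}$. We show that there is $\psi \in Form_{\mathcal{L}^\Box}$ such that $\vdash^{\mathbf{L}^U_h} \varphi \leftrightarrow h(\psi)$. For atoms we take $\psi = p$, since $h(p) = p$. For $\neg$ and $\wedge$ the compositionality of $h$ gives the step directly from the induction hypothesis, because $h(\neg \alpha) = \neg h(\alpha)$ and $h(\alpha \wedge \beta) = h(\alpha) \wedge h(\beta)$.

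The only real case is $\varphi = U\gamma$. By the induction hypothesis we fix $\alpha \in Form_{\mathcal{L}^\Box}$ with $\vdash^{\mathbf{L}^U_h} \gamma \leftrightarrow h(\alpha)$. Two things are needed.

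\emph{Replacement of equivalents under $U$.} Since $\mathbf{L} \supseteq \mathbf{N}$, we have $\mathbf{L}^U_h \supseteq \mathbf{N}_h$. Applying $Nec_h$ to each direction of the biconditional gives $\vdash (\gamma \rightarrow h(\alpha)) \wedge \neg U(\gamma \rightarrow h(\alpha))$, and likewise for the converse. Then $\mathrm{K}_h$ and Modus Ponens give
\[
\vdash (\gamma \wedge \neg U\gamma) \rightarrow (h(\alpha) \wedge \neg U h(\alpha))
\]
and its converse. This is the substitution property for the defined box $h(\Box\cdot)$, but it does not by itself yield $\vdash U\gamma \leftrightarrow U h(\alpha)$.

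\emph{Expressing $U$ by a translated formula.} Here I would exploit the strength of $\mathbf{N}$. The axiom $\mathrm{N}_h$, namely $\varphi \rightarrow (\varphi \wedge \neg U\varphi)$, gives $\vdash \varphi \rightarrow \neg U\varphi$ for every $\varphi$. Axiom b-style reasoning is not even needed for the other direction: by the intended semantics $U\varphi$ implies $\varphi$, and in strictly narcissistic frames $U\varphi$ is simply false. So the goal is to show $\vdash^{\mathbf{L}^U_h} \neg U\gamma$. We get $\vdash \gamma \rightarrow \neg U\gamma$ from $\mathrm{N}_h$. For the case $\neg\gamma$, I would try to use $\mathrm{D}_h$, which reads $\neg U\top$, together with the replacement principle above. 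If that is insufficient, I would try instances of $\mathrm{N}_h$ and $\mathrm{K}_h$ with $\varphi := \neg\gamma$ or $\varphi := \gamma \vee \neg\gamma$.

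Once $\vdash \neg U\gamma$ is established, we have $\vdash U\gamma \leftrightarrow \bot$. Since $\bot = h(\bot)$, this gives $\vdash U\gamma \leftrightarrow h(\bot)$, which closes the induction with $\psi = \bot$, independently of $\alpha$.

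The main obstacle is the case $\neg\gamma$, that is, deriving $\vdash \neg\gamma \rightarrow \neg U\gamma$ syntactically. The translated axioms only talk about $U$ inside the pattern $\varphi \wedge \neg U\varphi$, so $U\gamma$ occurring when $\gamma$ fails is not directly constrained. If no derivation is available, the fallback is to check whether $\mathbf{L}^U_h$ must be understood as closed under the minimal logic of $\mathcal{L}^U$, which contains $U\varphi \rightarrow \varphi$ from \cite{yagoventuri2026}. With that axiom, $\neg\gamma \rightarrow \neg U\gamma$ is immediate by contraposition, and together with $\mathrm{N}_h$ it yields $\vdash \neg U\gamma$.
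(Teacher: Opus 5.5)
\paragraph{Your primary route fails, and the fallback is unavoidable.} Deriving $\neg\gamma\rightarrow\neg U\gamma$ from $\mathrm{K}_h$, $\mathrm{N}_h$, $\mathrm{D}_h$ and $Nec_h$ cannot succeed. Read $U$ truth-functionally as negation. Then $h(\Box\chi)=h(\chi)\wedge\neg Uh(\chi)$ collapses to $h(\chi)$, so every $h(\psi)$ evaluates as the box-erasure of $\psi$, and every translated theorem of $\mathbf{N}$ becomes a tautology. The set of formulas that are tautologies under this reading is closed under Modus Ponens, Uniform Substitution and $Nec_h$. Yet $\neg\gamma\rightarrow\neg U\gamma$ becomes $\neg\gamma\rightarrow\gamma$. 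So your proof rests entirely on the fallback.

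That fallback is necessary, not just convenient. Evaluate formulas classically at two points $a,b$, with $U\varphi$ true at $a$ iff $\varphi$ is false at $a$, and $U\varphi$ always false at $b$. Again all translated $\mathbf{N}$-theorems hold at both points under every valuation, and the rules preserve this. But $Up$ behaves as $\neg p$ at $a$ and as $\bot$ at $b$, while any $h(\psi)$ is computed at both points by the same truth function, namely the box-erasure of $\psi$. Hence, if $\mathbf{N}_h$ is generated only by the translated theorems under the listed rules, $Up$ is equivalent to no $h(\psi)$, and the lemma fails already for $\mathbf{L}=\mathbf{N}$.

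You should therefore state explicitly that $\mathbf{L}^U_h$ extends the minimal logic of $\mathcal{L}^U$ (e.g.\ as $Cl_{\models}(h[\mathbf{L}])$). That logic contains the frame-valid $U\varphi\rightarrow\varphi$, and your argument is then complete and short:
\begin{itemize}
\item an instance of $\mathrm{N}_h$ gives $\gamma\rightarrow\neg U\gamma$;
\item contraposing $U\gamma\rightarrow\gamma$ gives $\neg\gamma\rightarrow\neg U\gamma$;
\item so $\vdash\neg U\gamma$, hence $\vdash U\gamma\leftrightarrow h(p\wedge\neg p)$.
\end{itemize}
This uses no induction hypothesis in the modal case.

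\paragraph{Comparison with the paper.} The paper argues as in the $\delta$ case. It applies $Nec_h$ to $\gamma\leftrightarrow h(\alpha)$, claims that $\mathrm{K}_h$ and Modus Ponens give $\vdash\neg U\gamma\leftrightarrow\neg Uh(\alpha)$, and stops. As you correctly observed, those steps only yield $(\gamma\wedge\neg U\gamma)\leftrightarrow(h(\alpha)\wedge\neg Uh(\alpha))$. Even granting $U\gamma\leftrightarrow Uh(\alpha)$, the formula $Uh(\alpha)$ is not of the form $h(\psi)$, so the induction does not close there.

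The natural completion would be $Uh(\alpha)\leftrightarrow h(\alpha\wedge\neg\Box\alpha)$, since the right side is classically equivalent to $h(\alpha)\wedge Uh(\alpha)$. But this again needs $Uh(\alpha)\rightarrow h(\alpha)$, the very ingredient of your fallback. Once that assumption is made explicit, your route is shorter than the paper's and is the one that actually closes the induction.
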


\begin{proof}
By induction on the complexity of $\varphi \in Form_{\mathcal{L}^U}$, we show there is $\psi \in Form_{\mathcal{L}^\Box}$ such that $\vdash^{\mathbf{L}_h^\lhd} \varphi \leftrightarrow h(\psi)$. By the compositionality of $h$, we may skip the non-modal cases, so let $\varphi = U \gamma$. By induction hypothesis, there is $\alpha \in Form_{\mathcal{L}^\Box}$ such that $\vdash^{\mathbf{L}^U_h} \gamma \leftrightarrow h(\alpha)$. Since $\mathbf{L} \supseteq \mathbf{N}$, we have $\mathbf{L}^U_h \supseteq \mathbf{N}_h$, so by $Nec_h$, $\vdash^{\mathbf{L}^U_h} (\gamma \leftrightarrow h(\alpha)) \wedge \neg U (\gamma \leftrightarrow h(\alpha))$. As before, by \emph{Modus Ponens} and $\mathrm{K}_h$, $\vdash^{\mathbf{L}_h^U} \neg U \gamma \leftrightarrow \neg U h(\alpha)$, which means $\vdash^{\mathbf{L}_h^U} U \gamma \leftrightarrow U h(\alpha)$.
\end{proof}

\begin{corollary}
Let $\mathbf{L} \supseteq \mathbf{N}$, $\mathbf{L}$ be canonical w.r.t. $\mathbb{C}_\mathbf{L}$, and $\mathbf{L}^U_h$ be sound w.r.t. $\mathbb{C}_\mathbf{L}$. Then, $\mathbf{L}^U_h$ is complete w.r.t. $\mathbb{C}_\mathbf{L}$.
\end{corollary}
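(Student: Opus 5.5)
The corollary should follow from Theorem \ref{theorem completeness 3}, in line with the announced use of Completeness strategy $3$. That theorem asks for five things: $h$ is theorem preserving from $\mathbf{L}$ to $\mathbf{L}^U_h$; $h$ is truth-preserving in $\mathbb{C}_\mathbf{L}$; $h$ is surjective up to equivalence w.r.t. $\mathbf{L}$; $\mathbf{L}$ is canonical; and $\mathbf{L}^U_h$ is sound w.r.t. $\mathbb{C}_\mathbf{L}$. The plan is to verify each of these for $\mathbf{L} \supseteq \mathbf{N}$ and then apply the theorem.

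Three of the conditions are already available. Theorem preservation was observed just after the list of translated axioms $\mathrm{K}_h$, $\mathrm{N}_h$, $\mathrm{D}_h$ and $Nec_h$. Any $\mathbf{L}$-proof of $\varphi$ becomes an $\mathbf{L}^U_h$-proof of $h(\varphi)$: each axiom instance is replaced by its $h$-image, and \emph{Modus Ponens} and \emph{Necessitation} steps are replaced by \emph{Modus Ponens} and $Nec_h$, using the compositionality of $h$ on $\neg$ and $\wedge$. Surjectivity up to equivalence is Lemma \ref{lemma h is surjective 2}. Canonicity of $\mathbf{L}$ and soundness of $\mathbf{L}^U_h$ w.r.t. $\mathbb{C}_\mathbf{L}$ are hypotheses of the corollary.

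The remaining step is truth preservation of $h$ in $\mathbb{C}_\mathbf{L}$, and I expect it to be the main, though mild, obstacle. Proposition \ref{prop translation 2} gives truth preservation only on $\mathbb{C}_\mathbf{N}$. Since $\mathbf{L} \supseteq \mathbf{N}$, every frame validating $\mathbf{L}$ validates $\mathbf{N}$, so $\mathbb{C}_\mathbf{L} \subseteq \mathbb{C}_{\mathbf{N}}$ (the class determined by the logic $\mathbf{N}$). The point to check is that this class coincides with the class of strictly narcissistic frames. Axiom $\mathrm{N}$, $\varphi \rightarrow \Box \varphi$, defines $R(w) \subseteq \{w\}$, and axiom $\mathrm{D}$, $\Diamond \top$, defines seriality; together they force $R(w) = \{w\}$. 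Conversely, strictly narcissistic frames validate both axioms. Truth preservation is a pointwise property of models on frames in the class, so it passes to any subclass, and hence $h$ is truth-preserving in $\mathbb{C}_\mathbf{L}$.

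With all five conditions verified, Theorem \ref{theorem completeness 3} yields completeness of $\mathbf{L}^U_h$ w.r.t. $\mathbb{C}_\mathbf{L}$. Canonicity is used inside Lemma \ref{lemma completeness 3}, through the submodel $\mathcal{M}_\mathbf{L}$ of Theorem \ref{theorem canonical model}. The soundness hypothesis is largely redundant given Corollary \ref{corollary soudness 55}, but it can simply be invoked as stated.
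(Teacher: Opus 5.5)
Your route is the paper's: check the hypotheses of Theorem~\ref{theorem completeness 3} and apply it. Your extra check that truth preservation passes from $\mathbb{C}_\mathbf{N}$ to $\mathbb{C}_\mathbf{L} \subseteq \mathbb{C}_\mathbf{N}$ is correct, including your identification of the frames of the logic $\mathbf{N}$ with the strictly narcissistic ones; the paper leaves that step implicit.

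\textbf{The gap.} The hypothesis you take for free, surjectivity up to equivalence via Lemma~\ref{lemma h is surjective 2}, is where the argument breaks. The paper's proof cites the same lemma, so it has the same gap. The lemma's inductive step ends at $\vdash U\gamma \leftrightarrow U h(\alpha)$. (The $\mathrm{K}_h$ step is already too quick: with $Nec_h$ it gives only $(\gamma \wedge \neg U\gamma) \leftrightarrow (h(\alpha) \wedge \neg U h(\alpha))$, not $\neg U\gamma \leftrightarrow \neg U h(\alpha)$.) More importantly, $U h(\alpha)$ is not an $h$-image, since in $h$-images $U$ occurs only inside compounds $\chi \wedge \neg U \chi$. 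So the induction does not close. Suppose $\mathbf{L}^U_h$ is generated, as the paper lists, by $h[\mathbf{L}]$, $\mathrm{K}_h$, $\mathrm{N}_h$, $\mathrm{D}_h$, $Nec_h$, \emph{Modus Ponens} and Uniform Substitution. Then it cannot be closed.

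\textbf{A counter-interpretation.} Read $U\varphi$ classically as $\neg\varphi$.
\begin{itemize}
\item $h(\Box\chi)$ becomes $h(\chi) \wedge \neg\neg h(\chi)$, so $h(\psi)$ becomes equivalent to the box-erasure of $\psi$. This erasure is a tautology whenever $\vdash^{\mathbf{N}} \psi$: evaluate $\psi$ on the one-point reflexive frame.
\item $\mathrm{K}_h$, $\mathrm{N}_h$ and $\mathrm{D}_h$ become tautologies.
\item $Nec_h$, \emph{Modus Ponens} and Uniform Substitution preserve tautologyhood.
\end{itemize}
So every theorem of $\mathbf{N}_h$ becomes a tautology, whereas $\neg U p$ becomes $\neg\neg p$. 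Hence $\not\vdash^{\mathbf{N}_h} \neg U p$. Yet $\neg Up$ is valid on $\mathbb{C}_\mathbf{N}$: $Up$ demands $p$ and $\Box\Diamond\neg p$, and the latter is just $\neg p$ when $R(w)=\{w\}$.

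Take $\mathbf{L} = \mathbf{N}$. It is canonical and $\mathbf{N}_h$ is sound on $\mathbb{C}_\mathbf{N}$, so every hypothesis of the corollary holds, but the conclusion fails. Correspondingly, no $\psi$ satisfies $\vdash^{\mathbf{N}_h} Up \leftrightarrow h(\psi)$. Soundness would force the box-erasure of $\psi$ to be a contradiction, while the reinterpretation forces it to be equivalent to $\neg p$.

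\textbf{What is missing.} You need an axiom such as $U\varphi \rightarrow \varphi$. It is valid on all frames, so it is available if $\mathbf{L}^U_h$ is taken to extend the minimal logic of $\mathcal{L}^U$. Together with $\mathrm{N}_h$ it gives $\vdash \neg U\varphi$. Every formula is then provably equivalent to the $U$-free formula obtained by replacing its outermost $U$-subformulas by $\bot$, and that formula is its own $h$-image. Only with this does Theorem~\ref{theorem completeness 3} apply.
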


\begin{proof}
By Proposition \ref{prop translation 2}, $h$ preserves truth in $\mathbb{C}_\mathbf{N}$. As we have seen, it is also theorem preserving, and by Lemma \ref{lemma h is surjective 2}, $h$ is surjective up to equivalence w.r.t. $\mathbf{L}$. Since $\mathbf{L}$ is canonical, and by Corollary \ref{corollary soudness 55}, $\mathbf{L}^U_h$ is sound w.r.t. $\mathbb{C}_\mathbf{L}$, we get, by Theorem \ref{theorem completeness 3}, that $\mathbf{L}_h^U$ is complete w.r.t. $\mathbb{C}_\mathbf{L}$.
\end{proof}

\section{Conclusion}\label{conclusion}\label{sec conclusion}

In this paper, we have unified several existing notions of definability and showed their equivalence. This clarification allows one to speak interchangeably about expressivity and definability of modal operators, and guarantees the existence of truth-preserving translations in both directions under the appropriate conditions. We have also introduced semantic insensitivity as the basis for the transfer of soundness transfer from more expressive languages to less expressive ones, and presented three -- and one more slight modification -- completeness transfer strategies for modal logics with a single operator under relational semantics. In two -- or three -- of the strategies the step of finding, by semantic insight, an accessibility condition tailored to the target logic can be avoided entirely: completeness may be obtained either directly by an appropriate translation, or by restricting the canonical model of $\mathbf{K}$ -- or of a chosen extension of $\mathbf{K}$ --, with no accessibility condition for the logic in question at all stated at all. The first strategy still requires an explicit accessibility condition, but one derived from the translation, and it yields completeness via a bounded morphism onto the canonical frame of the normal logic rather than a standalone canonical model argument.

Despite these advances, the connection between insensitivity and completeness remains open. None of our completeness strategies exploits insensitivity directly, and all apply only to operator languages with a single modal operator under relational semantics. Future work could aim to develop a genuine completeness method that employs insensitivity itself, and to extend the present strategies to languages with multiple modalities, to neighbourhood semantics, and to first-order modal languages.

We believe the study of semantic insensitivity offers a fruitful perspective on the expressivity of modal languages and a powerful tool for metatheoretic proofs. The strategies outlined here already simplify existing soundness and completeness arguments for logics of non-contingency, essence, and the new operators of Section 5, and we hope they will find wider application in the analysis of non-normal modal logics.

\bibliographystyle{plain}
\bibliography{bib_insensitivity}

\end{document}